\renewcommand{\vec}[1]{\mathbf{#1}}
\newcommand{\set}[1]{\{{#1}\}}
\newcommand{\ubf}[1]{\mathrm{UB}_{#1}}
\newcommand{\ubdsf}[1]{\mathrm{UB}_{\mathrm{DS}_{#1}}}
\newcommand{\dersens}[1]{\mathrm{DS}_{#1}}
\newcommand{\gencauchy}[1]{\mathit{GenCauchy}({#1})}
\newcommand{\dpdist}{d_\mathrm{dp}}
\newcommand{\laplace}[1]{\mathit{Lap}({#1})}
\DeclareMathOperator*{\argmin}{argmin}
\DeclareMathOperator*{\argmax}{argmax}
\newcommand{\norm}[1]{\left\|{#1}\right\|}
\newcommand{\abs}[1]{\left|{#1}\right|}
\newcommand{\compbanach}{composite norm}
\newcommand{\injto}{\rightarrowtail}
\newcommand{\distr}[1]{\mathcal{D}({#1})}
\newcommand{\RR}{\mathbb{R}}
\newcommand{\RRpi}{{\RR^+}}
\newcommand{\myexp}[1]{e^{#1}}
\newcommand{\NN}{\mathbb{N}}
\newcommand{\REFBANACHSmooth}{\ref{sec:banach-smoothing}}
\newcommand{\prpre}[1]{\mathbf{Pr}_{pre}[{{#1}}]}
\newcommand{\prpost}[1]{\mathbf{Pr}_{post}[{{#1}}]}
\newcommand{\noised}[1]{\mathcal{M}_{{#1}}}
\newcommand{\pr}[1]{\mathbf{Pr}[{{#1}}]}
\newcommand{\rndvar}[1]{\mathbf{{#1}}}
\newcommand{\fxpre}{f_X}
\newtheorem{fact}[proposition]{Fact}
\title{Achieving Differential Privacy using Methods from Calculus\thanks{This research was funded by the Air Force Research laboratory (AFRL) and Defense
Advanced Research Projects Agency (DARPA) under contract FA8750-16-C-0011. The
views expressed are those of the author(s) and do not reflect the official policy or position of
the Department of Defense or the U.S. Government. This research was also funded by Estonian Research Council, grant no. IUT27-1.}}
\author{Peeter Laud\inst{1} \and Alisa Pankova\inst{1} \and Martin Pettai\inst{1}}
\institute{Cybernetica, Estonia\\
\email{\{peeter.laud,alisa.pankova,martin.pettai\}@cyber.ee}
}
\begin{document}
\maketitle
\begin{abstract}
We introduce \emph{derivative sensitivity}, an analogue to local sensitivity for continuous functions. We use this notion in an analysis that determines the amount of noise to be added to the result of a database query in order to obtain a certain level of differential privacy, and demonstrate that derivative sensitivity allows us to employ powerful mechanisms from calculus to perform the analysis for a variety of queries. We have implemented the analyzer and evaluated its efficiency and precision.

We also show the flexibility of derivative sensitivity in specifying the quantitative privacy notion of the database, as desired by the data owner. Instead of only using the ``number of changed rows'' metric, our metrics can depend on the locations and amounts of changes in a much more nuanced manner. This will help to make sure that the distance is not larger than the data owner desires (which would undermine privacy), thereby encouraging the adoption of differentially private data analysis mechanisms.
\end{abstract}

\section{Introduction}

Differential privacy~\cite{DBLP:conf/icalp/Dwork06} (DP) is one of the most prominent ways to quantitatively define privacy losses from releasing derived information about data collections, and to rigorously argue about the accumulation of these losses in information processing systems. Differentially private information release mechanisms invariably employ the addition of noise somewhere during the processing, hence reducing the utility of the result. For specific information processing tasks, or families of tasks, there exist carefully designed methods to achieve DP with only a little loss in utility~\cite{DBLP:conf/sigmod/HayMMCZ16}; for some methods, this loss asymptotically approaches zero as the size of the processed dataset grows~\cite{DBLP:conf/stoc/NissimRS07}. But a general method for making an information release mechanism differentially private is to add noise of appropriate magnitude to the output of this mechanism. This magnitude depends on the \emph{sensitivity} of the mechanism --- the amount of change of its output when its input is changed by a unit amount. This paper is devoted to the study of computing (or safely approximating) the sensitivity of information release mechanisms given by their source code.

The definition of the ratio of changes of the outputs and inputs of the information release mechanism requires metrics on both of them. For the outputs, we take the common approach of requiring them to be numeric. In fact, we let the output of the mechanism to be a single real number, and the distance between them to be their difference. Our results can be generalized to mechanisms that output histograms, i.e. tuples of numbers.

The definition of the metric on inputs is a richer question, and really reflects how the data owner wants to quantify its privacy. DP can be defined with respect to any metric on the set of possible inputs~\cite{DBLP:conf/pet/ChatzikokolakisABP13}. For database tables and databases, a common metric has been the number of different rows~\cite{DBLP:conf/sigmod/McSherry09}. But the data owner may consider different changes in one row of some database table as different, particularly if it involves some numeric attributes. A small change in such attribute in a single row may be considered as ``small'', and the DP mechanism should strive to hide such change. A large change in the same attribute in a single row may be allowed to be more visible in the output. A change in geographic coordinates may be defined differently from a change in the length or the quantity of something. Finally, when several attributes change in a single row, then there are many reasonable ways of combining their changes (including summing them up, or taking their maximum) in defining how much the row has changed. A variety of definitions for the combination of changes is also possible if several rows change in a table, or several tables change in a database. We note that for inputs to the information release mechanism, the overestimation of their distance leads to the underestimation of the amount of noise added to achieve a certain level of DP. Hence it is crucial to handle a wide range of metrics, from which the data owner may select the one that he considers to best reflect his privacy expectations.

In this paper, we show how different metrics on databases can be defined, and how the sensitivities of queries can be computed with respect of these input metrics. As queries we support a significant subset of SQL with projections, filters, joins and also certain types of subqueries, but no intermediate groupings. The metrics that we support for rows, tables, and databases are $\ell_p$-metrics with $1\leq p\leq \infty$, as well as their combinations and compositions, also supporting the declaration of certain rows or columns as public. Our approach generalizes and at the same time simplifies the notion of \emph{local sensitivity}~\cite{DBLP:conf/stoc/NissimRS07}, tying it with the fundamental notions of functional analysis and expanding the kinds of input data to which it may be applied.

While we consider the theoretical contributions (reported in Sec.~\ref{sec:banach-theory}) of this paper as the most significant, providing a novel point of view of sensitivity for information release mechanisms, we also discuss (Sec.~\ref{sec:sqlapplication}) a concrete implementation of this theory in this paper, applying it to SQL workflows. Our analyzer takes as inputs the description of the database, the description of the metric on it, the query to be performed, as well as the actual contents of the database (as required by local sensitivity based approaches), and returns the value of smoothened \emph{derivative sensitivity} (our name for the analogue of local sensitivity, defined in this paper) of this query at this database. The return value can be used to scale the added noise in order to obtain a certain level of DP. The analyzer is integrated with the database management system in order to perform the computations specific to the contents of the database, as described in Sec.~\ref{sec:eval}.





\section{Related Work}

Differential privacy was introduced by Dwork~\cite{DBLP:conf/icalp/Dwork06}. PINQ~\cite{DBLP:conf/sigmod/McSherry09} is a worked-out example of using it for providing privacy-preserving replies to database queries. In an implementation, our analyzer of SQL queries would occupy the same place as the PINQ wrapper which analyses LINQ queries and maintains the privacy budget.

It has been recognized that any metric on the set of possible inputs will give us a definition of DP with respect to this metric~\cite{DBLP:conf/pet/ChatzikokolakisABP13,DBLP:conf/birthday/ElSalamounyCP14}. A particular application of this approach has been the privacy-preserving processing of location data~\cite{DBLP:conf/icdcit/ChatzikokolakisPS15}. The personalized differential privacy~\cite{DBLP:conf/popl/EbadiSS15} can also be seen as an instance of using an arbitrary metric, albeit with a more complex set of distances.

We use \emph{norms} to state the privacy requirements on input data. Normed vector spaces have appeared in the DP literature in the context of $K$-normed mechanisms~\cite{DBLP:conf/stoc/HardtT10,awan2018structure}, which extend the one-dimensional and generalize the many-dimensional Laplace mechanism. These mechanisms are rather different from our techniques and they do not explore the use of completeness of norms and differentiability of information release mechanism to find their sensitivity.

Nissim et al.~\cite{DBLP:conf/stoc/NissimRS07} introduce \emph{local sensitivity} and its smooth upper bounds, and use them to give differentially private approximations for certain statistical functions. The local sensitivity of a function is similar to its derivative. This has been noticed~\cite{10.1007/978-3-642-36594-2_26}, but we are not aware of this similarity being thoroughly exploited, except perhaps for devising DP machine learning methods~\cite{DBLP:conf/sigmod/0001LKCJN17}. In this paper, this similarity will play a central role.

A couple of different static approaches for determining sensitivities of SQL queries or their upper bounds have been proposed. Palamidessi and Stronati~\cite{DBLP:journals/corr/abs-1207-0872} apply abstract interpretation to an SQL query, following its abstract syntax tree, combining the sensitivities of relational algebra operations (projection and filter have sensitivity at most $1$, set operations have sensitivity at most $2$, etc.) similarly to~\cite{FeatherweightPINQ}. Additionally, they track the diameters of the domains of the values in the outcome of the query; the sensitivity cannot be larger than the diameter.

The computability of the precise sensitivity of the queries is studied by Arapinis et al.~\cite{DBLP:conf/icalp/ArapinisFG16}. They identify a subclass of queries (\emph{Conjunctive queries} with restricted \texttt{WHERE}-clauses) for which the sensitivity can be precisely determined. However, they also show that the problem is uncomputable in general. In addition, they show how functional dependencies and cardinality constraints may be used to upper-bound sensitivities of join-queries.

Cardinality constraints are also used by Johnson et al.~\cite{ElasticSensitivity} in their abstract interpretation based approach. For a database, they consider the maximum frequency of a value of an attribute in one of the tables, maximized over all databases at most at some distance to the original database, thereby building on the notion of (smooth) local sensitivity.

\section{Preliminaries}

\subsection{Sensitivity and Differential Privacy}\label{sec:sensdef}
Let $X$ be the set of possible databases. We assume that there is a metric $d_X(x,x')$ for $x,x' \in X$, quantifying the difference between two databases. For example, we could define $d_X(x,x') = n$ for two datatables that differ in exactly $n$ rows.

For a set $Y$, let $\distr{Y}$ denote the set of all probability distributions over $Y$. For $n\in\NN$, let $[n]$ denote the set $\{1,\ldots,n\}$. Let $[1,\infty]$ denote the set $\{x\in\RR\,|\,x\geq 1\}\cup\{\infty\}$.

Suppose that someone wants to make a query to the database. If the data is (partially) private, the query output may leak some sensitive information about the data. Noise can be added to the output to solve this problem. One possible privacy definition is that the output should be indistinguishable w.r.t. each individual entry, i.e. if we remove any row from the database, we should not see much change in the distribution of outputs.

\begin{definition}[differential privacy,~\cite{DBLP:conf/icalp/Dwork06}]\label{def:dpriv}
Let $X$ be a metric space and $f:X\rightarrow\distr{Y}$. The mapping $f$ is \emph{$\varepsilon$-differentially private} if for all (measurable) $Y'\subseteq Y$, and for all $x,x'$, where $d_X(x,x')=1$, the following inequality holds:
\begin{equation}\label{eq:eddpdef}
Pr[f(x)\in Y']\leq e^\varepsilon Pr[f(x')\in Y']\enspace.
\end{equation}
\end{definition}

The noise magnitude depends on the difference between the outputs of $f$. The more different the outputs are, the more noise we need to add to make them indistinguishable from each other. This is quantified by the global sensitivity of $f$.

\begin{definition}[global sensitivity]
For $f : X \to Y$, the \emph{global sensitivity} of $f$ is $GS_f = \max_{x,x' \in X} \frac{d_Y(f(x),f(x'))}{d_x(x,x')}$.
\end{definition}

Sensitivity is the main tool in arguing the differential privacy of various information release mechanisms. For mechanisms that add noise to the query output, this value serves as a parameter for the noise distribution. The noise will be proportional to $GS_f$. One suitable noise distribution is $\laplace{\frac{GS_f}{\varepsilon}}$, where $\laplace{\lambda}(z) \propto 2\lambda \cdot e^{-|z|/\lambda}$. The sampled noise will be sufficient to make the output of $f$ differentially private, regardless of the input of $f$.


Differential privacy itself can also be seen as an instance of sensitivity. Indeed, define the following distance $\dpdist$ over $\distr{Y}$:
\[
\dpdist(\chi,\chi')=\inf\{\varepsilon\in\RRpi\,|\,\forall y\in Y: \myexp{-\varepsilon}\chi'(y)\leq\chi(y)\leq \myexp{\varepsilon}\chi'(y) \}\enspace.
\]
Then a mechanism $\mathcal{M}$ from $X$ to $Y$ is $\epsilon$-DP iff it is $\epsilon$-sensitive with respect to the distances $d_X$ on $X$ and $\dpdist$ on $\distr{Y}$.

\subsection{Local and Smooth Sensitivity}
Our work extends the results~\cite{DBLP:conf/stoc/NissimRS07}, which makes use of instance-based additive noise. Since noise is always added to the output of a query that is applied to a particular state of the database, and some state may require less noise than the other, the noise magnitude may depend on the data to which the function is applied.

\begin{definition}[local sensitivity]
For $f : X \to Y$ and $x \in X$ , the \emph{local sensitivity} of $f$ at $x$ is $LS_f(x) = \max_{x' \in X} \frac{d_Y(f(x),f(x'))}{d_x(x,x')}$.
\end{definition}

The use of local sensitivity may allow the use of less noise, particularly when the global sensitivity of $f$ is unbounded. However, $LS_f(x)$ may not be directly used to determine the magnitude of the noise, because this magnitude may itself leak something about $x$. To solve this problem, Nissim et al.~\cite{DBLP:conf/stoc/NissimRS07} use a \emph{smooth upper bound} on $LS_f(x)$. It turns out that such an upper bound is sufficient to achieve differential privacy for $f$; potentially with less noise than determined by $GS_f$.

\begin{definition}[smooth bound]\label{def:smooth}
For $\beta > 0$, a function $S : X \to \RR^{+}$ is a \emph{$\beta$-smooth upper bound} on the local sensitivity of $f$ if it satisfies the following requirements:
\begin{itemize}
\item $\forall x \in X : S(x) \geq LS_f(x)$ ;
\item $\forall x, x' \in X : S(x) \leq e^{\beta\cdot d_X(x,x')} S(x')$ .
\end{itemize}
\end{definition}

It has been shown in~\cite{DBLP:conf/stoc/NissimRS07} how to add noise based on the smooth bound on $LS_f$. The statement that we present in Theorem~\ref{thm:localsensnoise} is based on combination of Lemma 2.5 and Example 2 of~\cite{DBLP:conf/stoc/NissimRS07}.

\begin{definition}[generalized Cauchy distribution]\label{def:gencauchy}
For a parameter $\gamma\in\RR_+$, $\gamma>1$, the \emph{generalized Cauchy distribution} $\gencauchy{\gamma}\in\distr{\RR}$ is given by the proportionality
\[
\gencauchy{\gamma}(x) \propto \frac{1}{1+|x|^\gamma}\enspace.
\]
\end{definition}
\begin{theorem}[local sensitivity noise~\cite{DBLP:conf/stoc/NissimRS07}]\label{thm:localsensnoise}
Let $\eta$ be a fresh random variable sampled according to $\gencauchy{\gamma}$. Let $\alpha = \frac{\varepsilon}{4\gamma}$ and $\beta = \frac{\varepsilon}{\gamma}$. For a function $f:X \to \RR$, let $S : X \to \RR$ be a $\beta$-smooth upper bound on the local sensitivity of $f$. Then the information release mechanism $f(x) + \frac{S(x)}{\alpha}\cdot \eta$ is $\varepsilon$-differentially private.
\end{theorem}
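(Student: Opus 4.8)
The plan is to establish the inequality of Definition~\ref{def:dpriv} for the mechanism $\noised{}(x) = f(x) + \frac{S(x)}{\alpha}\cdot\eta$ by controlling the ratio of its output densities pointwise. First I would reduce the set-based statement to a pointwise one: writing $p_x$ for the density of $\noised{}(x)$, it suffices to show $p_x(z)\le\myexp{\varepsilon}p_{x'}(z)$ for every $z\in\RR$ and every pair $x,x'$ with $\distX{x}{x'}=1$, since integrating this over any measurable $Y'\subseteq\RR$ immediately yields $\pr{\noised{}(x)\in Y'}\le\myexp{\varepsilon}\pr{\noised{}(x')\in Y'}$, which is exactly $\varepsilon$-differential privacy. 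Because $\eta$ has density proportional to $1/(1+\abs{t}^\gamma)$ (Definition~\ref{def:gencauchy}), the output density is $p_x(z)\propto\frac{\alpha}{S(x)}\cdot\frac{1}{1+\abs{\alpha(z-f(x))/S(x)}^\gamma}$, i.e. a generalized Cauchy density recentred at $f(x)$ and rescaled by $S(x)/\alpha$.

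Next I would isolate the two ways in which $p_{x'}$ differs from $p_x$ and bound each using one of the hypotheses. The centres differ by $f(x)-f(x')$; since $S$ upper-bounds the local sensitivity, $\abs{f(x)-f(x')}\le LS_f(x)\cdot\distX{x}{x'}\le S(x)$, so after rescaling by $\alpha/S(x)$ this shift has normalized size at most $\alpha$. The scales differ by the factor $\lambda=S(x)/S(x')$, and the $\beta$-smoothness of $S$ (second bullet of Definition~\ref{def:smooth}) gives $\myexp{-\beta}\le\lambda\le\myexp{\beta}$, i.e. $\abs{\ln\lambda}\le\beta$. Thus $p_x/p_{x'}$ arises from the standard generalized Cauchy density by a slide of normalized size $\le\alpha$ composed with a dilation of log-size $\le\beta$. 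I would factor the ratio through an intermediate density sharing the centre of $p_{x'}$ and the scale of $p_x$, so that the passage $p_{x'}\to p_x$ splits into a pure dilation followed by a pure slide; this is the admissibility decomposition underlying Lemma~2.5 of~\cite{DBLP:conf/stoc/NissimRS07}.

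The remaining and principal task is the calculus estimate showing that the combined effect is bounded by $\myexp{\varepsilon}$ for the stated $\alpha=\varepsilon/(4\gamma)$ and $\beta=\varepsilon/\gamma$. For the slide I would bound $\bigl|\frac{d}{dt}\ln(1+\abs{t}^\gamma)\bigr|=\frac{\gamma\abs{t}^{\gamma-1}}{1+\abs{t}^\gamma}$ uniformly in $t$, obtaining a Lipschitz constant below $\gamma$ and hence a slide factor at most $\myexp{\gamma\alpha}=\myexp{\varepsilon/4}$. The delicate part is the dilation factor $\frac{1}{\lambda}\cdot\frac{1+\lambda^\gamma\abs{t}^\gamma}{1+\abs{t}^\gamma}$, whose supremum over $t$ is approached as $\abs{t}\to\infty$, where it tends to $\lambda^{\gamma-1}\le\myexp{\beta(\gamma-1)}$; it is precisely the polynomial — rather than exponential, as for $\laplace{\cdot}$ — tail of the generalized Cauchy that keeps this quantity finite. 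I expect the main obstacle to be reconciling this tail growth with the slide: because the slide is worst near the centre while the dilation is worst in the tail, a naive product of the two separate suprema is too lossy for large $\gamma$, and a careful joint bound on $\sup_z p_x(z)/p_{x'}(z)$ is what makes the stated constants suffice. Carrying out this estimate for all $\gamma>1$, and thereby closing the argument, is the crux.
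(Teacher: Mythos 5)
First, a point of reference: the paper does not actually prove Theorem~\ref{thm:localsensnoise} --- it is imported from Lemma~2.5 and Example~2 of~\cite{DBLP:conf/stoc/NissimRS07} --- so the only in-paper argument to compare against is the proof of the analogous Theorem~\ref{thm:dersenscauchy-banach}. That proof takes a coarser route than yours: it bounds the slide and dilation contributions to $\dpdist$ \emph{separately}, each by $(\gamma+1)$ times the corresponding displacement, adds the two exponents, and accepts a privacy level of the form $\epsilon=(\gamma+1)(b+\beta)$ instead of trying to realize the specific constants $\alpha=\varepsilon/(4\gamma)$, $\beta=\varepsilon/\gamma$. Your skeleton --- pointwise density domination, factoring $p_{x'}\to p_x$ through an intermediate density so that the transition splits into a pure dilation followed by a pure slide, bounding the shift by $S(x)$ via the definition of local sensitivity (no mean value theorem or telescoping is needed here, unlike in Theorem~\ref{thm:dersenscauchy-banach}) and the log-scale change by $\beta$ via smoothness --- is the right one and matches the NRS admissibility decomposition, and your individual estimates (Lipschitz constant of $\ln(1+\abs{t}^\gamma)$ at most $\gamma$, dilation ratio with supremum $\lambda^{\gamma-1}$ attained in the tail) are correct.

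The gap is that you stop exactly at the step that proves the theorem. By your own accounting, the product of the two separate suprema is $\myexp{\varepsilon/4}\cdot\myexp{\varepsilon(\gamma-1)/\gamma}=\myexp{\varepsilon(5/4-1/\gamma)}$, which exceeds $\myexp{\varepsilon}$ as soon as $\gamma>4$; you then assert that ``a careful joint bound'' on $\sup_z p_x(z)/p_{x'}(z)$ rescues the stated constants for all $\gamma>1$ and label this the crux without carrying it out. A proof that names its crux and defers it is incomplete, and the deferred step is not routine: the slide penalty $\gamma\abs{t}^{\gamma-1}/(1+\abs{t}^\gamma)$ is maximized not at the centre but at $\abs{t}=(\gamma-1)^{1/\gamma}$ and decays only like $\gamma/\abs{t}$, so the region where the dilation factor is already a sizeable fraction of $\lambda^{\gamma-1}$ overlaps with the region where the slide factor is still non-negligible, and making that trade-off quantitative is precisely the content of Example~2 of~\cite{DBLP:conf/stoc/NissimRS07}. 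As written, your argument establishes the theorem only for $1<\gamma\le 4$. If you do not wish to reproduce the NRS computation, the fallback is the paper's own style of bound, but note that simply adding the exponents as in Theorem~\ref{thm:dersenscauchy-banach} gives $(\gamma+1)(\alpha+\beta)=\frac{5(\gamma+1)}{4\gamma}\varepsilon$, which is strictly worse than $\varepsilon$; so either the joint estimate must be done, or the theorem must be restated with weaker constants.
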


\subsection{Norms and Banach spaces}
First, we recall some basics of Banach space theory. Throughout this paper, we denote vectors by $\vec{x}$, and norms by $\norm{\cdot}_N$, where $N$ in the subscript specifies the particular norm.

\begin{definition}[norm and seminorm]\label{def:norm}
A \emph{seminorm} is a function $\norm{\cdot}: V \to \RR$ from a vector space $V$, satisfying the following axioms for all $\vec{x},\vec{y}\in V$:
\begin{itemize}
\item $\norm{\vec{x}} \geq 0$;
\item $\norm{\alpha \vec{x}} = |\alpha|\cdot\norm{\vec{x}}$ (implying that $\norm{\vec{0}}=0$);
\item $\norm{\vec{x} + \vec{y}} \leq \norm{\vec{x}} + \norm{\vec{y}}$ (triangle inequality).
\end{itemize}
Additionally, if $\norm{\vec{x}}=0$ holds only if $\vec{x}=\vec{0}$, then $\norm{\cdot}$ is a \emph{norm}.
\end{definition}

\begin{definition}[$\ell_p$-norm]
Let $X_i \subseteq \RR$, $p \in[1,\infty]$. The $\ell_p$ norm of $\vec{x} \in X_1 \times \cdots \times X_n$, denoted $\norm{\vec{x}}_p$ is defined as
\[\norm{\vec{x}}_p = \left({\sum_{i=1}^{n}\abs{x_i}^p}\right)^{1/p}.\enspace\]
For $p = \infty$, $\ell_{\infty}$ is defined as
\[\norm{\vec{x}}_{\infty} = \lim_{p\to\infty} \left({\sum_{i=1}^{n}\abs{x_i}^p}\right)^{1/p} = \max_{i=1}^{n}\abs{x_i}.\enspace\]
\end{definition}

\begin{definition}[Banach space]\label{def:banach-space}
A \emph{Banach space} is a vector space with a norm that is \emph{complete} (i.e. each converging sequence has a limit).
\end{definition}

Banach spaces combine metric spaces with distances, which are necessary for defining differential privacy. The completeness property allows us to define derivatives. Using the norm of a Banach space, we may generalize the notion of continuous function from real numbers to Banach spaces.
\begin{definition}[Continuous function in Banach space]\label{def:cont-function-banach}
Let $V$ and $W$ be Banach spaces, and $U\subset V$ an open subset of $V$. A function $f : U \to W$ is called \emph{continuous} if
\[\forall \varepsilon > 0:\ \exists \delta > 0:\ \norm{x - x'}_V \leq \delta \implies \norm{f(x) - f(x')}_W \leq \varepsilon\enspace.\]
\end{definition}

The notion of the derivative of a function can be also extended to Banach spaces.
\begin{definition}[Fr\'echet derivative]\label{def:derivative-banach}
Let $V$ and $W$ be Banach spaces, and $U\subset V$ an open subset of $V$. A function $f : U \to W$ is called \emph{Fr\'echet differentiable at $x\in U$} if there exists a bounded linear operator $df_x : V \to W$ such that $\lim_{h \to 0}\frac{\norm{f(x + h) - f(x) - df_x(h)}_W}{\norm{h}_V} = 0$. Such operator $df_x$ is called Fr\'echet derivative of $f$ at the point $x$.
\end{definition}

The mean value theorem can be generalized to Banach spaces (to a certain extent).
\begin{theorem}[Mean value theorem (~\cite{baggett1991functional}, Chapter XII)]\label{thm:mvt}
Let $V$ and $W$ be Banach spaces, and $U\subset V$ an open subset of $V$. Let $f : U \to W$, and let $x,x' \in U$. Assume that $f$ is defined and is continuous at each point $(1 - t)x + tx'$ for $0 \leq t \leq 1$, and differentiable for $0 < t < 1$. Then there exists $t^{*} \in (0,1)$ such that
\[\norm{f(x) - f(x')}_W \leq \norm{df_z}_{V \to W} \norm{x - x'}_V \]
for $z = (1 - t^{*})x + t^{*} x'$, where $\norm{\cdot}_{V \to W}$ denotes the norm of operator that maps from $V$ to $W$.
\end{theorem}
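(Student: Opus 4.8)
The plan is to reduce this vector-valued statement to the classical scalar mean value theorem by composing $f$ with a suitable linear functional, since the equality form of the mean value theorem fails for $W$-valued functions in general. First I would dispose of the trivial case $f(x) = f(x')$, where the left-hand side vanishes and the inequality holds for any choice of $t^*$. Assuming $f(x) \neq f(x')$, I would invoke the Hahn--Banach theorem (in the standard corollary guaranteeing a norming functional) to obtain a bounded linear functional $\phi$ in the continuous dual $W^*$ with $\norm{\phi}_{W\to\RR} = 1$ and $\phi(f(x) - f(x')) = \norm{f(x) - f(x')}_W$.

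Next I would parametrize the segment by $\gamma(t) = (1-t)x + tx'$ for $t \in [0,1]$, so that $\gamma(0) = x$ and $\gamma(1) = x'$, and define the real-valued function $g(t) = \phi(f(\gamma(t)))$. By the hypotheses, $f$ is continuous along the segment and Fr\'echet differentiable at each interior point $\gamma(t)$; since $\phi$ is continuous and linear and $t \mapsto \gamma(t)$ is affine with derivative $x' - x$, the chain rule for Fr\'echet derivatives shows that $g$ is continuous on $[0,1]$, differentiable on $(0,1)$, and $g'(t) = \phi\bigl(df_{\gamma(t)}(x' - x)\bigr)$.

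Then I would apply the ordinary mean value theorem to $g$ on $[0,1]$, obtaining $t^* \in (0,1)$ with $g(1) - g(0) = g'(t^*)$. Writing $z = \gamma(t^*)$ and using the choice of $\phi$, the left side equals $\phi(f(x') - f(x)) = -\norm{f(x)-f(x')}_W$, so by linearity $\norm{f(x) - f(x')}_W = \phi\bigl(df_z(x - x')\bigr)$. Finally, since $\norm{\phi} = 1$, the operator-norm estimate gives
\[
\norm{f(x) - f(x')}_W = \phi\bigl(df_z(x - x')\bigr) \leq \norm{df_z(x-x')}_W \leq \norm{df_z}_{V\to W}\,\norm{x - x'}_V,
\]
which is exactly the claimed inequality.

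The step I expect to require the most care is establishing that $g$ is genuinely differentiable on $(0,1)$ with the stated derivative: this is where I must combine the Fr\'echet differentiability of $f$ (an honest limit in the $W$-norm, as in Definition~\ref{def:derivative-banach}) with the linearity and boundedness of $\phi$ and the affine parametrization, i.e.\ verify the chain rule in the Banach setting rather than merely asserting it. The reliance on Hahn--Banach to produce the norming functional is the conceptual heart of the argument, but its invocation is routine; the remaining sign bookkeeping and the trivial case are straightforward.
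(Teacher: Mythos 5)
Your argument is correct: the Hahn--Banach norming functional combined with the scalar mean value theorem applied to $t \mapsto \phi(f((1-t)x+tx'))$ is the standard proof of this inequality, and your sign bookkeeping and the final operator-norm estimate are all in order. The paper itself offers no proof to compare against --- it imports the theorem directly from the cited reference (Baggett, Chapter XII) --- so there is nothing further to reconcile; your write-up would serve as a self-contained justification of the cited result.
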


\subsection{Some properties of $\ell_p$-norms}

Let us give several known facts and prove some lemmas that help us in comparing different norms. The proofs of the lemmas are pushed into App.~\ref{app:proofs}, since they are quite straightforward and follow directly from known properties of norms, but the lemmas have too specific forms to be cited directly.

\begin{fact}\label{fact:lpeqv}
For all $(x_1,\ldots,x_m) \in \RR^m$, $p \geq q$, we have \[\norm{x_1,\ldots,x_m}_p \leq \norm{x_1,\ldots,x_m}_q \leq m^{1/q - 1/p}\cdot\norm{x_1,\ldots,x_m}_p\enspace.\]
\end{fact}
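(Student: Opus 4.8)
The plan is to establish the two inequalities separately, treating the finite-exponent case $p < \infty$ with the standard tools (homogeneity for the left inequality, H\"older's inequality for the right), and then disposing of the boundary case $p = \infty$ by a direct comparison with the maximum. Throughout I write $\vec{x} = (x_1,\ldots,x_m)$, and I note that if $\vec{x} = \vec{0}$ both inequalities hold trivially, so I may assume $\vec{x} \neq \vec{0}$.

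For the left inequality $\norm{\vec{x}}_p \leq \norm{\vec{x}}_q$ (with $q \leq p < \infty$), I would exploit the homogeneity of both norms: since both sides scale by $\abs{\lambda}$ under $\vec{x} \mapsto \lambda\vec{x}$, it suffices to prove the claim after rescaling so that $\norm{\vec{x}}_q = 1$, i.e. $\sum_i \abs{x_i}^q = 1$. Then each $\abs{x_i} \leq 1$, and because $p \geq q$ we have $\abs{x_i}^p \leq \abs{x_i}^q$ for every $i$; summing gives $\sum_i \abs{x_i}^p \leq \sum_i \abs{x_i}^q = 1$, whence $\norm{\vec{x}}_p = (\sum_i \abs{x_i}^p)^{1/p} \leq 1 = \norm{\vec{x}}_q$.

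For the right inequality (with $q < p < \infty$; the case $p = q$ is an equality), I would apply H\"older's inequality to the sum $\sum_i \abs{x_i}^q = \sum_i \abs{x_i}^q \cdot 1$ with the conjugate exponents $p/q$ and $p/(p-q)$. This yields
\[
\sum_i \abs{x_i}^q \leq \left(\sum_i \abs{x_i}^p\right)^{q/p}\left(\sum_i 1\right)^{(p-q)/p} = \norm{\vec{x}}_p^{\,q}\cdot m^{(p-q)/p}\enspace.
\]
Taking $q$-th roots gives $\norm{\vec{x}}_q \leq m^{(p-q)/(pq)}\norm{\vec{x}}_p = m^{1/q - 1/p}\norm{\vec{x}}_p$, which is exactly the stated bound.

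Finally, for $p = \infty$ I would argue directly from the definition $\norm{\vec{x}}_\infty = \max_i \abs{x_i}$. The left inequality is immediate, since $\max_i \abs{x_i} = (\max_i \abs{x_i}^q)^{1/q} \leq (\sum_i \abs{x_i}^q)^{1/q} = \norm{\vec{x}}_q$. For the right inequality, bounding each summand by the maximum gives $\sum_i \abs{x_i}^q \leq m \cdot \max_i \abs{x_i}^q$, so $\norm{\vec{x}}_q \leq m^{1/q}\max_i\abs{x_i} = m^{1/q}\norm{\vec{x}}_\infty$, matching $m^{1/q - 1/\infty}$ under the convention $1/\infty = 0$. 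The only real subtlety --- the main obstacle --- is bookkeeping rather than difficulty: one must keep the edge cases ($\vec{x}=\vec{0}$, $p = q$, and especially $p = \infty$) out of the H\"older computation, since the conjugate exponent $p/(p-q)$ is only meaningful for $q < p < \infty$.
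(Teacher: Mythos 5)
Your proof is correct. Note that the paper states this as a \emph{Fact}, i.e.\ a known result cited without proof (only the Lemmas of that subsection are proved in the appendix), so there is no in-paper argument to compare against; your treatment --- homogeneity plus normalization for the left inequality, H\"older with conjugate exponents $p/q$ and $p/(p-q)$ for the right, and a separate direct argument for $p=\infty$ --- is the standard derivation and handles all the relevant edge cases ($\vec{x}=\vec{0}$, $p=q$, $p=\infty$) cleanly.
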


\begin{fact}\label{fact:lpsingle}
For all $x \in \RR^+$, we have $\norm{x}_p = x$ and $\norm{x}_{\infty} = x$.
\end{fact}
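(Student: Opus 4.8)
The plan is to prove both equalities by direct substitution into the respective definitions, treating $x \in \RR^+$ as a one-component vector (so $n = 1$ in the $\ell_p$-norm definition). There is no genuine obstacle here: the statement is an immediate specialization of the norm definitions to the single-coordinate case, and the only point requiring any care is the passage from $\abs{x}$ to $x$, which is licensed precisely by the hypothesis $x \geq 0$.

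For the first equality, I would start from the definition of the $\ell_p$-norm with a single summand. The defining sum collapses to one term, giving $\norm{x}_p = \left(\abs{x}^p\right)^{1/p}$. Since exponentiation by $p$ and by $1/p$ are mutually inverse on $\RR^+$, this simplifies to $\abs{x}$, and because $x \in \RR^+$ we have $\abs{x} = x$, yielding $\norm{x}_p = x$. This argument is uniform in $p$ for all finite $p \in [1,\infty)$, so no case distinction on $p$ is needed.

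For the second equality, I would likewise specialize the definition of the $\ell_\infty$-norm to a single coordinate: the maximum $\max_{i=1}^{1}\abs{x_i}$ is simply $\abs{x}$, which again equals $x$ by the assumption $x \geq 0$. Thus $\norm{x}_\infty = x$. Taken together, the two computations establish the Fact. I expect the entire proof to be one or two lines, the substance being nothing more than recognizing that the sum and the maximum are each taken over a singleton index set and that the $p$-th power is undone by the $1/p$-th root on nonnegative reals.
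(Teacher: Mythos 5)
Your proof is correct: specializing the $\ell_p$ and $\ell_\infty$ definitions to a one-component vector and using $\abs{x}=x$ for $x\in\RR^+$ is exactly the intended (and only natural) argument. The paper itself states this as a known fact without giving a proof, so there is nothing to diverge from; your one-line computation is precisely what the authors leave implicit.
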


\begin{lemma}\label{lm:surj}
A norm $N: \RR^m \to \RR^{+}$ is surjective.
\end{lemma}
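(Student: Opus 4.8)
The plan is to exploit the positive homogeneity of the norm to produce a preimage for every target value. Since the codomain $\RR^+$ is the set of non-negative reals (including $0$), the goal is to show that every $r \geq 0$ lies in the image of $N$.

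First I would dispose of the value $r = 0$: setting $\alpha = 0$ in the homogeneity axiom $\norm{\alpha\vec{x}} = \abs{\alpha}\cdot\norm{\vec{x}}$ of Definition~\ref{def:norm} already gives $N(\vec{0}) = 0$, so $0$ is attained. For $r > 0$, I would fix any nonzero vector, for instance a standard basis vector $\vec{e} \in \RR^m$ (this is the one place where $m \geq 1$, i.e.\ the non-triviality of $\RR^m$, is used). Because $N$ is a genuine norm rather than merely a seminorm, its positive-definiteness yields $c := N(\vec{e}) > 0$. I would then set $\vec{x} = (r/c)\,\vec{e}$, so that by homogeneity $N(\vec{x}) = (r/c)\cdot N(\vec{e}) = (r/c)\cdot c = r$. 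Hence every $r > 0$ is attained as well, and $N$ is surjective onto $\RR^+$.

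This argument has essentially no obstacle; the only point requiring care is the existence of a vector of strictly positive norm, which is precisely the content of the final clause of Definition~\ref{def:norm} (the axiom separating norms from seminorms), combined with the fact that $\RR^m$ contains a nonzero vector. An alternative route would invoke the continuity of $N$ together with the intermediate value theorem applied to $t \mapsto N(t\vec{e})$, but the homogeneity argument is more direct and avoids any appeal to topology.
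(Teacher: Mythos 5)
Your proof is correct and follows essentially the same route as the paper's: both arguments obtain a vector of strictly positive norm from the positive-definiteness axiom and then scale it by homogeneity to hit any target value in $\RR^+$. The only cosmetic difference is that you treat $r=0$ as a separate case and fix a standard basis vector, whereas the paper takes an arbitrary vector of nonzero norm and lets the scaling factor $y/y'$ cover $y=0$ as well.
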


\begin{lemma}\label{lm:scalecopy} For all $x \in \RR$, $(\alpha_1,\ldots,\alpha_k) \in \RR^k$, $(y_1,\ldots,y_m) \in \RR^m$:
\[\norm{\alpha_1 \cdot x,\ldots,\alpha_k \cdot x,y_1,\ldots,y_m}_p = \norm{\sqrt[p]{\sum_{i=1}^{k} \alpha^p_i} \cdot x,y_1,\ldots,y_m}_p\enspace.\]
\end{lemma}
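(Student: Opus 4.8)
The plan is to prove the identity by directly unfolding the definition of the $\ell_p$-norm and performing elementary algebra on the underlying sum of $p$-th powers; no deep machinery should be required. First I would treat the generic case $1\leq p<\infty$. Applying the definition of $\norm{\cdot}_p$ to the left-hand side and grouping the scaled copies of $x$ separately from the $y_j$ gives
\[
\norm{\alpha_1 x,\ldots,\alpha_k x,y_1,\ldots,y_m}_p = \left(\sum_{i=1}^k \abs{\alpha_i x}^p + \sum_{j=1}^m \abs{y_j}^p\right)^{1/p}\enspace.
\]
Using $\abs{\alpha_i x}^p=\abs{\alpha_i}^p\abs{x}^p$ I would then factor $\abs{x}^p$ out of the first sum, rewriting it as $\left(\sum_{i=1}^k\abs{\alpha_i}^p\right)\abs{x}^p$.

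The key observation is that this quantity is exactly the $p$-th power of a single scaled entry, namely
\[
\abs{\sqrt[p]{\textstyle\sum_{i=1}^k\abs{\alpha_i}^p}\cdot x}^p=\left(\sum_{i=1}^k\abs{\alpha_i}^p\right)\abs{x}^p\enspace.
\]
Hence the combined contribution of the $k$ copies of $x$ coincides with the contribution of the single coordinate $\sqrt[p]{\sum_{i=1}^k\abs{\alpha_i}^p}\cdot x$ appearing on the right-hand side; substituting it back under the outer $(\cdot)^{1/p}$ and reading the result as the definition of $\norm{\cdot}_p$ on the right-hand tuple yields the claimed equality. Here the notation $\alpha_i^p$ in the statement is understood as $\abs{\alpha_i}^p$ (equivalently, the scaling coefficients are taken nonnegative), which is precisely what makes $\abs{\sqrt[p]{\sum_i\alpha_i^p}}^p$ collapse back to $\sum_i\abs{\alpha_i}^p$.

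The only case needing separate attention is $p=\infty$, which I would handle via the max-characterization of $\norm{\cdot}_\infty$. The left-hand side equals $\max\bigl(\abs{x}\cdot\max_i\abs{\alpha_i},\ \max_j\abs{y_j}\bigr)$, while the coefficient satisfies $\lim_{p\to\infty}\sqrt[p]{\sum_{i=1}^k\abs{\alpha_i}^p}=\max_i\abs{\alpha_i}$ (the same limit that defines $\ell_\infty$), so the right-hand side reduces to the identical maximum. I do not anticipate any genuine difficulty: the argument is purely a matter of unfolding definitions and exploiting homogeneity together with the additivity of $p$-th powers over coordinates, and the two points requiring care — the reading of $\alpha_i^p$ as $\abs{\alpha_i}^p$ and the limiting treatment of $p=\infty$ — are both dispatched by the steps above.
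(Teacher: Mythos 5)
Your proof is correct and follows essentially the same route as the paper's: raise to the $p$-th power (equivalently, unfold the definition), factor $\abs{x}^p$ out of the sum over the $k$ scaled copies, and recognize the result as the $p$-th power of a single scaled coordinate. You are in fact slightly more careful than the paper, which omits the absolute values and the separate $p=\infty$ case that you handle via the max-characterization.
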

\noindent Putting $\alpha_i = 1$ for all $i \in [n]$, we get the following corollary.
\begin{corollary}\label{cor:scalecopy}
$\norm{\overbrace{x,\ldots,x}^{k},y_1,\ldots,y_m}_p = \norm{\sqrt[p]{k} \cdot x,y_1,\ldots,y_m}_p$.
\end{corollary}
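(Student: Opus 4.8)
The plan is to obtain Corollary~\ref{cor:scalecopy} as an immediate specialization of Lemma~\ref{lm:scalecopy}, so essentially no new argument is required beyond a substitution. First I would instantiate Lemma~\ref{lm:scalecopy} with $\alpha_i = 1$ for every $i \in [k]$. On the left-hand side each scaled coordinate $\alpha_i \cdot x$ collapses to $x$, so the vector $(\alpha_1 x,\ldots,\alpha_k x, y_1,\ldots,y_m)$ becomes exactly the vector $(\overbrace{x,\ldots,x}^{k}, y_1,\ldots,y_m)$ appearing in the corollary.

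Next I would evaluate the scaling factor on the right-hand side. With all $\alpha_i = 1$ we have $\sum_{i=1}^k \alpha_i^p = \sum_{i=1}^k 1 = k$ for every finite $p$, hence $\sqrt[p]{\sum_{i=1}^k \alpha_i^p} = \sqrt[p]{k}$, which matches the claimed coefficient of $x$. This already yields the stated identity for all finite $p \in [1,\infty)$ directly from the lemma.

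The only point that deserves a word of care is the endpoint $p = \infty$, where the expression $\sqrt[p]{\sum_{i=1}^k \alpha_i^p}$ must be read through the limit defining $\ell_\infty$. Here $\lim_{p\to\infty} k^{1/p} = 1$, and correspondingly $\norm{x,\ldots,x,y_1,\ldots,y_m}_\infty = \max(\abs{x},\ldots,\abs{x},\abs{y_1},\ldots,\abs{y_m})$ is unchanged by the repetition of $x$, so both sides again agree. Thus the substantive content lives entirely in Lemma~\ref{lm:scalecopy}; the corollary is merely its $\alpha_i \equiv 1$ instance, and I expect no genuine obstacle once that lemma is in hand.
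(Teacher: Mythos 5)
Your proposal is correct and is exactly the paper's argument: the corollary is obtained by instantiating Lemma~\ref{lm:scalecopy} with $\alpha_i = 1$ for all $i$, giving the coefficient $\sqrt[p]{k}$. Your extra remark on the $p=\infty$ endpoint (where $k^{1/p}\to 1$ and repetition of $x$ does not change the max) is a careful addition the paper leaves implicit, but it does not change the route.
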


\begin{lemma}\label{lm:ungroup}
Let $\vec{x} = (x_1,\ldots,x_k) \in \RR^k$, $\vec{y} = (y_1,\ldots,y_n) \in \RR^n$, $\vec{z} = (z_1,\ldots,z_m) \in \RR^m$. If $p \geq q \geq 1$, then
\begin{enumerate}
\item $\norm{\norm{\vec{x}}_q,\norm{\vec{y}}_q, z_1,\ldots,z_m}_p \leq \norm{\norm{\vec{x} | \vec{y}}_q, z_1,\ldots,z_m}_p$;
\item $\norm{\norm{\vec{x}}_p,\norm{\vec{y}}_p, z_1,\ldots,z_m}_q \geq \norm{\norm{\vec{x} | \vec{y}}_p, z_1,\ldots,z_m}_q$;
\end{enumerate}
where $\vec{x} | \vec{y}$ denotes concatenation. If $p = q$, then the inequalities become equalities.
\end{lemma}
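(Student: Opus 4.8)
The two inequalities in Lemma~\ref{lm:ungroup} both compare a ``grouped'' norm $\norm{\vec{x}\mid\vec{y}}_q$ (a single inner $\ell_q$-norm over the concatenation) against an ``ungrouped'' version $\norm{\norm{\vec{x}}_q,\norm{\vec{y}}_q,\ldots}$ (two separate inner $\ell_q$-norms, then combined by the outer $\ell_p$-norm). My plan is to reduce the whole statement to a single scalar fact about how $\ell_p$ compares to $\ell_q$ on a two-element vector, and then let monotonicity of $\ell_p$ in each coordinate do the rest. The key observation is that the concatenated inner norm satisfies
\[
\norm{\vec{x}\mid\vec{y}}_q = \norm{\norm{\vec{x}}_q,\norm{\vec{y}}_q}_q,
\]
because $\sum_{i}|x_i|^q+\sum_j|y_j|^q = \norm{\vec{x}}_q^q+\norm{\vec{y}}_q^q$, and taking the $q$-th root is exactly applying the outer $\ell_q$-norm to the pair $(\norm{\vec{x}}_q,\norm{\vec{y}}_q)$. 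So the problem becomes: compare $\norm{a,b}_p$ with $\norm{a,b}_q$ for the nonnegative scalars $a=\norm{\vec{x}}_q$, $b=\norm{\vec{y}}_q$, once these are placed alongside the fixed tail $z_1,\ldots,z_m$.

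First I would rewrite both sides using this identity so that the grouped right-hand side of inequality~(1) reads $\norm{\norm{a,b}_q, z_1,\ldots,z_m}_p$ and the ungrouped left-hand side reads $\norm{a,b,z_1,\ldots,z_m}_p$. Then I would invoke Fact~\ref{fact:lpeqv} in the form $\norm{a,b}_p \leq \norm{a,b}_q$ (the left inequality there, valid since $p\geq q$). Treating the outer $\ell_p$-norm as a function of the single slot holding the value $\norm{a,b}_\bullet$, I only need that $\ell_p$ is monotone nondecreasing in the absolute value of each coordinate: replacing the coordinate value $\norm{a,b}_q$ by the smaller value $\norm{a,b}_p$ can only decrease the outer $\ell_p$-norm. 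Combined with the regrouping of $a,b$ into a single inner $\ell_q$-slot (again via the identity and Fact~\ref{fact:lpeqv}), this yields inequality~(1). Inequality~(2) is the mirror image: here the outer norm is $\ell_q$, the inner norms are $\ell_p$, and the roles of the inequality direction in Fact~\ref{fact:lpeqv} reverse accordingly, so the grouped form is the smaller one.

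The genuinely delicate point, and the one I would be most careful about, is bookkeeping the direction of each inequality: which of $p,q$ is the inner exponent and which is the outer changes between the two parts, and Fact~\ref{fact:lpeqv} supplies a two-sided bound, so one must pick the correct side each time and make sure the monotonicity step pushes the estimate the right way. I would also note that all quantities $a,b,\norm{a,b}_\bullet$ are nonnegative, so Fact~\ref{fact:lpsingle} applies and no sign issues arise when they sit inside the outer norm. Finally, when $p=q$ every use of Fact~\ref{fact:lpeqv} becomes an equality and the regrouping identity is exact, so both inequalities collapse to equalities, giving the last sentence of the statement for free. The argument for general $m$ and for the $\ell_\infty$ endpoints requires no new idea, since monotonicity of $\ell_p$ in each coordinate holds uniformly for $p\in[1,\infty]$.
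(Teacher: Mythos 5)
Your proof is correct. At bottom it rests on the same elementary fact as the paper's proof, but the route is organized differently: the paper raises both sides to the power $p$ (resp.\ $q$), expands, and applies the scalar superadditivity inequality $a^n+b^n\leq(a+b)^n$ for $n=p/q\geq 1$, whereas you reduce everything to the flattening identity $\norm{\vec{x}\mid\vec{y}}_q=\norm{\norm{\vec{x}}_q,\norm{\vec{y}}_q}_q$, the two-coordinate instance of Fact~\ref{fact:lpeqv}, and coordinatewise monotonicity of the outer norm. These are equivalent in content --- Fact~\ref{fact:lpeqv} applied to the pair $\left(\norm{\vec{x}}_q,\norm{\vec{y}}_q\right)$ \emph{is} that superadditivity inequality in disguise --- but your decomposition buys modularity (it reuses a fact the paper has already stated rather than reproving it inline) and handles the $p=\infty$ endpoint uniformly, since the flattening identity and monotonicity hold there directly, while the paper's power-raising computation only covers finite exponents and leaves $p=\infty$ to a limit. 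The price is one extra step you must make explicit, namely $\norm{a,b,z_1,\ldots,z_m}_p=\norm{\norm{a,b}_p,z_1,\ldots,z_m}_p$, which is again the flattening identity with equal inner and outer exponents; your sketch does invoke this, and the direction bookkeeping you flag as the delicate point works out correctly in both parts, as does the collapse to equalities when $p=q$.
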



\begin{lemma}\label{lm:injmap}
Let $\vec{x} = (x_1,\ldots,x_m) \in \RR^m$, $\vec{y} = (y_1,\ldots,y_n) \in \RR^n$, $m \leq n$. If $p \geq q \geq 1$, and there exists an injective mapping $f: [m] \injto [n]$ such that $|x_i| \leq |y_{f(i)}|$, then $\norm{x_1,\ldots,x_m}_p \leq \norm{y_1,\ldots,y_n}_q$.
\end{lemma}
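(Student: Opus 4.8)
The plan is to reduce this two-parameter comparison to two elementary monotonicity facts: monotonicity of the $\ell_p$-norm in the exponent $p$, and monotonicity of a fixed $\ell_q$-norm both under pointwise domination and under enlargement of the index set. The injection $f$ is exactly what lets me realize the components of $\vec{x}$ as dominated by, and sitting inside, the components of $\vec{y}$.

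First I would pass from the exponent $p$ down to $q$. Since $p \geq q$, Fact~\ref{fact:lpeqv} gives $\norm{\vec{x}}_p \leq \norm{\vec{x}}_q$. Hence it suffices to prove $\norm{\vec{x}}_q \leq \norm{\vec{y}}_q$, and the claim then follows by transitivity of $\leq$.

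Next I would exploit the injection $f$. Because $q \geq 1$, the map $t \mapsto t^q$ is monotone on $\RR^{+}$, so $\abs{x_i} \leq \abs{y_{f(i)}}$ yields $\abs{x_i}^q \leq \abs{y_{f(i)}}^q$ for every $i \in [m]$. Summing over $i$ and taking the $q$-th root gives $\norm{\vec{x}}_q \leq (\sum_{i=1}^{m} \abs{y_{f(i)}}^q)^{1/q}$. Since $f$ is injective, $f([m])$ is a subset of $[n]$ of cardinality $m$; as all summands $\abs{y_j}^q$ are nonnegative, extending the sum from $f([m])$ to all of $[n]$ can only increase it, so $(\sum_{i=1}^{m} \abs{y_{f(i)}}^q)^{1/q} \leq (\sum_{j=1}^{n} \abs{y_j}^q)^{1/q} = \norm{\vec{y}}_q$. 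Chaining these inequalities with the first step completes the argument.

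The reasoning is essentially routine, and I expect the only points needing care to be the boundary cases with infinite exponents, rather than any genuine difficulty. When $q = \infty$ (which forces $p = \infty$), the sums are replaced by maxima, and each step above becomes the corresponding statement for $\max$: both pointwise domination and enlargement of the index set preserve the maximum. When $p = \infty$ but $q < \infty$, the first step still holds because $\norm{\cdot}_\infty \leq \norm{\cdot}_q$. Stating these degenerate cases cleanly is the main obstacle such as it is.
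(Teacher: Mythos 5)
Your argument is correct and follows essentially the same route as the paper's own proof: reduce from $p$ to $q$ via Fact~\ref{fact:lpeqv}, then compare $q$-th powers term by term along the injection and enlarge the sum to all of $[n]$. The only difference is your explicit handling of the $p=\infty$ and $q=\infty$ cases, which the paper leaves implicit.
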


\section{Derivative Sensitivity}\label{sec:banach-theory}
In this section,  we introduce our new results. We study more precisely the local sensitivity of \emph{continuous} functions. It turns out that, instead of estimating the quantity $\frac{d(f(x),f(x'))}{d(x,x')}$ directly, we may use \emph{derivative} of $f$ to compute the noise for differential privacy.
\subsection{Derivative sensitivity and its properties}\label{sec:ext-dersens-banach}



We propose constructions that allow us to achieve differential privacy for functions defined over Banach spaces.

\begin{definition}\label{def:dersens-banach}
Let $X$ be (an open convex subset of) a Banach space.
Let $f:X\rightarrow\RR$.
Let $f$ be Fr\'echet differentiable at each point of
$X$.
The \emph{derivative sensitivity} of $f$ is the following mapping from $X$ to $\RR_+$, where $\RR_+$ denotes the set of all non-negative real numbers:
\[
\dersens{f}(\vec{x})=\|df_{\vec{x}}\|\enspace.
\]
where $df_{\vec{x}}$ is the Fr\'echet derivative of $f$ at $\vec{x}$ and $\|df_{\vec{x}}\|$ is the operator norm of $df_{\vec{x}}$.
\end{definition}

Similarly to the local sensitivity of~\cite{DBLP:conf/stoc/NissimRS07}, we will need to find smooth upper bounds on derivative sensitivity to compute the noise. We extend the definition of smoothness (Def.~\ref{def:smooth}) to the case where $X$ is any Banach space:
\begin{definition}\label{def:smoothness}
Let $p:X\rightarrow\RR$ and $\beta\in\RR$. The mapping $p$ is \emph{$\beta$-smooth}, if $p(\vec{x})\leq e^{\beta\cdot \|\vec{x}'-\vec{x}\|}\cdot p(\vec{x}')$ for all $\vec{x},\vec{x}'\in X$.
\end{definition}

The next theorem shows how to compute noise for differential privacy. As we see in the proof, crucial in selecting the amount of noise to be added to $f(\vec{x})$ is the knowledge of a $\beta$-smooth upper bound on the derivative sensitivity of $f$. We let $c$ denote such a bound. We consider the same noise distributions as in~\cite{DBLP:conf/stoc/NissimRS07}. 

Noise distributed by generalized Cauchy distribution (Def.~\ref{def:gencauchy}), weighed by a smooth upper bound on the derivative sensitivity of $f$, allows us to achieve $\epsilon$-DP.

\begin{theorem}\label{thm:dersenscauchy-banach}
Let $\gamma,b,\beta\in\RR_+$, $\gamma>1$. Let $\epsilon=(\gamma+1)(b+\beta)$. Let $\eta$ be a random variable distributed according to $\gencauchy{\gamma}$. Let $c$ be a $\beta$-smooth upper bound on $\dersens{f}$ for a function $f:X\rightarrow\RR$. Then $g(\vec{x})=f(\vec{x})+\frac{c(\vec{x})}{b}\cdot\eta$ is $\epsilon$-differentially private.
\end{theorem}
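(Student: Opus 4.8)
The plan is to reduce $\epsilon$-differential privacy to a pointwise bound on the log-ratio of output densities, and then to control that ratio by integrating along the straight segment between the two inputs. Since $\eta$ has the generalized Cauchy density proportional to $1/(1+\abs{z}^\gamma)$, the output $g(\vec{x})=f(\vec{x})+\frac{c(\vec{x})}{b}\eta$ has, on writing $\sigma(\vec{x})=c(\vec{x})/b$, the density $p_{\vec{x}}(y)\propto \frac{1}{\sigma(\vec{x})}\cdot\frac{1}{1+\abs{(y-f(\vec{x}))/\sigma(\vec{x})}^\gamma}$, where the normalizing constant cancels in every ratio. By Definition~\ref{def:dpriv} it then suffices to show $\abs{\ln p_{\vec{x}}(y)-\ln p_{\vec{x}'}(y)}\leq\epsilon$ for every $y\in\RR$ and every $\vec{x},\vec{x}'$ with $\norm{\vec{x}-\vec{x}'}=1$. (If $c\equiv 0$ then smoothness forces $\dersens{f}\equiv0$, so $f$ is constant and the claim is trivial; hence I may assume $c>0$ throughout, so every density is well defined.)

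First I would parametrise the segment $\vec{x}_t=(1-t)\vec{x}+t\vec{x}'$, $t\in[0,1]$, and set $L(t)=\ln p_{\vec{x}_t}(y)$, so that the quantity to bound is $\abs{L(0)-L(1)}=\abs{\int_0^1 L'(t)\,dt}\leq\int_0^1\abs{L'(t)}\,dt$. Writing $m(t)=f(\vec{x}_t)$, $\sigma(t)=\sigma(\vec{x}_t)$ and $U(t)=(y-m(t))/\sigma(t)$, we have $L(t)=\text{const}-\ln\sigma(t)-\ln(1+\abs{U(t)}^\gamma)$, whence
\[
\abs{L'(t)}\leq\abs{\tfrac{\sigma'(t)}{\sigma(t)}}+\frac{\gamma\abs{U(t)}^{\gamma-1}}{1+\abs{U(t)}^\gamma}\,\abs{U'(t)},\qquad \abs{U'(t)}\leq\frac{\abs{m'(t)}}{\sigma(t)}+\abs{U(t)}\,\abs{\tfrac{\sigma'(t)}{\sigma(t)}}.
\]

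The heart of the argument is to bound the three ingredients. From $\beta$-smoothness of $c$ the map $t\mapsto\ln c(\vec{x}_t)$ is $\beta$-Lipschitz (its increments are at most $\beta\norm{\vec{x}_t-\vec{x}_s}=\beta\abs{t-s}$), so $\abs{\sigma'/\sigma}=\abs{\frac{d}{dt}\ln c(\vec{x}_t)}\leq\beta$ almost everywhere. By the chain rule for Fréchet derivatives and Definition~\ref{def:dersens-banach}, $\abs{m'(t)}=\abs{df_{\vec{x}_t}(\vec{x}'-\vec{x})}\leq\norm{df_{\vec{x}_t}}\cdot\norm{\vec{x}'-\vec{x}}=\dersens{f}(\vec{x}_t)\leq c(\vec{x}_t)=b\,\sigma(t)$, so $\abs{m'(t)}/\sigma(t)\leq b$; note that evaluating $c$ at the \emph{same} point $\vec{x}_t$ in numerator and denominator is exactly what keeps the estimate free of any $e^\beta$ factor. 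Finally, since $\gamma>1$ one checks the elementary inequalities $\frac{\abs{U}^{\gamma-1}}{1+\abs{U}^\gamma}\leq1$ and $\frac{\abs{U}^{\gamma}}{1+\abs{U}^\gamma}\leq1$. Substituting $\abs{U'}\leq b+\beta\abs{U}$ gives $\frac{\gamma\abs{U}^{\gamma-1}}{1+\abs{U}^\gamma}\abs{U'}\leq\gamma b+\gamma\beta$, and adding the $\abs{\sigma'/\sigma}\leq\beta$ term yields $\abs{L'(t)}\leq\gamma b+(\gamma+1)\beta\leq(\gamma+1)(b+\beta)$. Integrating over $t\in[0,1]$ gives $\abs{L(0)-L(1)}\leq(\gamma+1)(b+\beta)=\epsilon$, as required.

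I expect the main obstacle to be regularity rather than the inequalities: $c$ is only assumed smooth, not differentiable, so $L$ need not be $C^1$. The fix is to observe that $t\mapsto\ln\sigma(t)$ is Lipschitz and $t\mapsto m(t)$ is $C^1$ (as $f$ is Fréchet differentiable), so $L$ is absolutely continuous; hence $L'$ exists almost everywhere, the bound above holds a.e., and the fundamental theorem of calculus $L(1)-L(0)=\int_0^1 L'$ is valid. The only other points needing care are verifying the two Cauchy-specific inequalities (both follow by splitting into $\abs{U}\leq1$ and $\abs{U}>1$) and confirming that the pointwise density bound indeed yields the measure-theoretic statement of Definition~\ref{def:dpriv} by integrating over $Y'$.
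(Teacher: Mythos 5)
Your proof is correct, and it reaches the stated bound (in fact the slightly sharper $\gamma b+(\gamma+1)\beta\leq(\gamma+1)(b+\beta)$). The high-level strategy is the same as the paper's --- walk along the segment from $\vec{x}$ to $\vec{x}'$ so that the $\beta$-smoothness of $c$ only has to be used infinitesimally --- but the execution is genuinely different. The paper stays at the level of the pseudo-metric $\dpdist$: it cites the shift/stretch stability inequalities for $\gencauchy{\gamma}$ from Nissim et al.\ as a black box, splits the segment into $n$ equal pieces, applies the mean value theorem and the triangle inequality for $\dpdist$ on each piece, and lets $n\to\infty$ to remove the $e^{\beta L/n}$ factor. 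You instead differentiate the log-density $L(t)=\ln p_{\vec{x}_t}(y)$ directly and integrate, which amounts to taking that limit from the start; in doing so you re-derive the Cauchy stability bounds from the explicit density via the elementary inequalities $\frac{|U|^{\gamma-1}}{1+|U|^{\gamma}}\leq 1$ and $\frac{|U|^{\gamma}}{1+|U|^{\gamma}}\leq 1$, so your argument is self-contained where the paper's relies on a citation. The price is that you must handle regularity and measure-theoretic details explicitly (absolute continuity of $L$ since $c$ is only Lipschitz in log, not differentiable; passing from a pointwise density ratio to the event-level statement; the degenerate case $c\equiv 0$), all of which you address adequately --- note only that $t\mapsto f(\vec{x}_t)$ need not be $C^1$ as you assert, but it is Lipschitz with $|m'(t)|\leq c(\vec{x}_t)$ everywhere, which is all your integration needs. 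One last cosmetic difference: the paper proves the Lipschitz form $\dpdist(g(\vec{x}),g(\vec{x}'))\leq\epsilon\|\vec{x}'-\vec{x}\|$ for arbitrary pairs, whereas you specialize to $\|\vec{x}'-\vec{x}\|=1$ as in Definition~\ref{def:dpriv}; your argument gives the general form verbatim by not normalizing.
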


\begin{proof}
Let $\dpdist$ be defined as in Sec.~\ref{sec:sensdef}. Let $\eta\sim\gencauchy{\gamma}$. The generalized Cauchy distribution is relatively stable under shifts and stretchings, satisfying the following inequalities for all\\ $a_1,a_2,c_1,c_2\in\RR$~\cite{DBLP:conf/stoc/NissimRS07}:
\begin{align*}
\dpdist(a_1+c_1\cdot\eta, a_2+c_1\cdot\eta)&\leq(\gamma+1)\cdot\left|\frac{a_2-a_1}{c_1}\right| \\
\dpdist(c_1\cdot\eta, c_2\cdot\eta)&\leq (\gamma+1)\cdot\left|\ln\frac{c_2}{c_1}\right|\enspace.
\end{align*}
The combination of these two inequalities gives
\begin{equation}
\dpdist(a_1+c_1\cdot\eta, a_2+c_2\cdot\eta)\leq(\gamma+1)\cdot\left(\frac{|a_2-a_1|}{\max\{|c_1|,|c_2|\}}+\left|\ln\frac{c_2}{c_1}\right|\right)\label{eq:gcshiftstretch}\enspace.
\end{equation}

Let $\vec{x},\vec{x}'\in X$.
Denote $L=\|\vec{x}' - \vec{x}\|$.
We have to show that $\dpdist(g(\vec{x}'),g(\vec{x}))\leq \epsilon L =(\gamma+1)(b+\beta)L$.

We can substitute the definition of $g$ into the left side of the desired inequality above, and using the inequality (\ref{eq:gcshiftstretch}) and the definition of smoothness, obtain
\begin{eqnarray*}
\dpdist(g(\vec{x}),g(\vec{x}')) & = & \dpdist(f(\vec{x})+\frac{c(\vec{x})}{b}\cdot\eta, f(\vec{x}')+\frac{c(\vec{x}')}{b}\cdot\eta) \\
& \leq & (\gamma+1)\cdot \left( b\cdot\frac{|f(\vec{x}')-f(\vec{x})|}{|c(\vec{x})|} + \left|\ln\frac{c(\vec{x}')}{c(\vec{x})} \right|\right) \\
& \leq & (\gamma+1)\cdot \left( b\cdot\frac{|f(\vec{x}')-f(\vec{x})|}{|c(\vec{x})|} + \beta L\right)
\end{eqnarray*}

Unfortunately, we cannot directly bound ${|f(\vec{x}')-f(\vec{x})|}/{|c(\vec{x})|}$ with $L$.
Instead, we can only claim,
using the mean value theorem for Banach spaces and Fr\'echet derivative, that there exists some $\vec{v}$ in the segment
connecting $\vec{x}$ and $\vec{x}'$ satisfying
\[
|f(\vec{x}')-f(\vec{x})| = |df_{\vec{v}}(\vec{x}'-\vec{x})| \le \|df_{\vec{v}}\| \cdot \|\vec{x}'-\vec{x}\| \le c(\vec{v})\cdot L
\]
where the last inequality is due to $c$ being an upper bound on the derivative sensitivity of $f$.
However, by using this claim many times, we obtain the necessary inequality as follows.
Let $n\in\NN$ be arbitrary.
Let $\vec{v}_0 = \vec{x}$, $\vec{v}_n = \vec{x}'$,
and $\vec{v}_i = \frac{n-i}{n}\cdot \vec{x} + \frac{i}{n}\cdot \vec{x}'$,
i.e. the values $\vec{v}_0,\ldots,\vec{v}_n$ are evenly distributed in the segment connecting $\vec{x}$ to $\vec{x}'$.
These values are in $X$ because $X$ is convex.
Again, the mean value theorem implies that there exist $\vec{t}_1,\ldots,\vec{t}_n$
with $\vec{t}_i$ in the segment connecting $\vec{v}_{i-1}$ to $\vec{v}_i$, satisfying

\begin{eqnarray*}
|f(\vec{v}_i)-f(\vec{v}_{i-1})| & = & |df_{\vec{v}_i}(\vec{v}_i-\vec{v}_{i-1})|\leq
\|df_{\vec{v}_i}\| \cdot \|\vec{v}_i-\vec{v}_{i-1}\| \\
& \le & c(\vec{t}_i)\cdot \frac{L}{n} \leq e^{\beta L/n} \cdot c(\vec{v}_{i-1})\cdot \frac{L}{n}
\end{eqnarray*}

for all $i\in\{1,\ldots,n\}$. Here the last inequality follows from the $\beta$-smoothness of $c$. We can use these claims together with the triangle inequality and obtain
\begin{multline*}
\dpdist(g(\vec{x}),g(\vec{x}')) \leq \sum_{i=1}^n \dpdist(g(\vec{v}_{i-1}),g(\vec{v}_i)) = \\
\sum_{i=1}^n\dpdist(f(\vec{v}_{i-1})+\frac{c(\vec{v}_{i-1})}{b}\cdot\eta, f(\vec{v}_i)+\frac{c(\vec{v}_i)}{b}\cdot\eta) \leq \\
(\gamma+1)\sum_{i=1}^n \left( b\cdot \frac{|f(\vec{v}_i)-f(\vec{v}_{i-1})|}{|c(\vec{v}_{i-1})|} + \frac{\beta L}{n}\right) \leq \\ 
(\gamma+1)\sum_{i=1}^n \left( b \cdot e^{\beta L/n} \cdot \frac{L}{n} + \frac{\beta L}{n} \right) = 
(\gamma+1)(be^{\beta L/n}+\beta)L\enspace.
\end{multline*}
This inequality holds for any $n\in\NN$. If $n\rightarrow\infty$ then $e^{\beta L/n}\rightarrow 1$ and we obtain the inequality that we had to show.
%
\end{proof}

\subsection{Computing derivative sensitivity}\label{sec:compute-dersens}
We show how to compute the derivative sensitivity for mappings from certain Banach spaces. The following lemmas are proven in App.~\ref{app:proofs}.
\begin{lemma}\label{lemma:banach-base}
\label{lemma:dersens-lpnorm}
Let $f:\RR^n\to \RR$, and let $\RR^n$ be equipped with the norm $\ell_p$.
Then $\|df_{\vec{x}}\|$ is the $\ell_q$-norm of the gradient vector $\nabla f(\vec{x})$, where $q=\frac{p}{p-1}$ (if $p=1$ then $q=\infty$ and \emph{vice versa}).
\end{lemma}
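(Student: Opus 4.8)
The plan is to recognize that the statement is, at bottom, the classical duality between $\ell_p$ and $\ell_q$ applied to the gradient. First I would unpack what $df_{\vec{x}}$ is concretely. Since $f:\RR^n\to\RR$ is Fr\'echet differentiable at $\vec{x}$, the operator $df_{\vec{x}}$ is a bounded linear functional $\RR^n\to\RR$, and in finite dimensions this forces the existence of all partial derivatives with $df_{\vec{x}}(\vec{h})=\langle\nabla f(\vec{x}),\vec{h}\rangle=\sum_{i=1}^n\frac{\partial f}{\partial x_i}(\vec{x})\,h_i$. Writing $\vec{a}=\nabla f(\vec{x})$ for brevity, the operator norm unwinds by definition to
\[
\norm{df_{\vec{x}}} = \sup_{\norm{\vec{h}}_p\leq 1}\abs{df_{\vec{x}}(\vec{h})} = \sup_{\norm{\vec{h}}_p\leq 1}\abs{\textstyle\sum_{i=1}^n a_i h_i}\enspace,
\]
so the entire lemma reduces to showing that this supremum equals $\norm{\vec{a}}_q$.

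The core of the argument, and the step I expect to carry the real weight, is establishing this dual-norm identity in both directions. For the upper bound $\norm{df_{\vec{x}}}\leq\norm{\vec{a}}_q$ I would invoke H\"older's inequality, $\abs{\sum a_i h_i}\leq\norm{\vec{a}}_q\norm{\vec{h}}_p$, which over the unit ball $\norm{\vec{h}}_p\leq 1$ gives the bound immediately. The harder direction is exhibiting a vector that attains it, so that the supremum is actually a maximum equal to $\norm{\vec{a}}_q$. For $1<p<\infty$ the natural candidate is the equality case of H\"older: take $h_i=\sgn(a_i)\,\abs{a_i}^{q-1}/\norm{\vec{a}}_q^{q-1}$, and then verify by a direct computation (using $p(q-1)=q$, equivalently $1/p+1/q=1$) both that $\norm{\vec{h}}_p=1$ and that $\sum a_i h_i=\norm{\vec{a}}_q$. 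If $\vec{a}=\vec{0}$ the claim is trivial, so this normalization is well defined.

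Finally I would dispatch the two boundary exponents separately, since the maximizer above degenerates there. For $p=1$, $q=\infty$: H\"older reads $\abs{\sum a_i h_i}\leq\norm{\vec{h}}_1\,\max_i\abs{a_i}$, and equality is reached by placing all unit mass on a coordinate achieving $\max_i\abs{a_i}=\norm{\vec{a}}_\infty$ with the matching sign. For $p=\infty$, $q=1$: choosing $h_i=\sgn(a_i)$ gives $\norm{\vec{h}}_\infty\leq 1$ and $\sum a_i h_i=\sum\abs{a_i}=\norm{\vec{a}}_1$, while H\"older supplies the reverse inequality. Combining the upper bound with the attained lower bound in each regime yields $\norm{df_{\vec{x}}}=\norm{\nabla f(\vec{x})}_q$, which is the claim.
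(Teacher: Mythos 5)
Your proof is correct and follows essentially the same route as the paper's: both reduce $\norm{df_{\vec{x}}}$ to the dual-norm identity $\sup_{\norm{\vec{h}}_p\leq 1}\abs{\langle\nabla f(\vec{x}),\vec{h}\rangle}=\norm{\nabla f(\vec{x})}_q$, establish the upper bound by a H\"older-type inequality (the paper derives it via the weighted power means inequality rather than citing H\"older), and exhibit the extremizer $h_i\propto\abs{a_i}^{1/(p-1)}=\abs{a_i}^{q-1}$. The only minor differences are that you treat $p=1$ and $p=\infty$ with explicit extremizers while the paper obtains them as limits, and you are slightly more careful in attaching $\sgn(a_i)$ to the extremizer.
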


The $\ell_q$-norm happens to be the \emph{dual norm} of the $\ell_p$-norm.

\begin{lemma}
\label{lemma:banach-combine}
(a) Let $(V_1,\norm{\cdot}_{V_1})$ and $(V_2,\norm{\cdot}_{V_2})$ be Banach spaces.
Let $V = V_1 \times V_2$.
Let for all $(v_1,v_2)\in V$, \[ \|(v_1,v_2)\|_V = \|(\|v_1\|_{V_1},\|v_2\|_{V_2})\|_p \]
Then $(V,\norm{\cdot}_V)$ is a Banach space.
\\
(b) Suppose furthermore that a function $f : V \rightarrow \RR$ is differentiable at each point of $V$.
Fix a point $v = (v_1,v_2) \in V$.
Let $g : V_1 \rightarrow \RR$ be such that $g(x_1) = f(x_1,v_2)$
and $h : V_2 \rightarrow \RR$ be such that $h(x_2) = f(v_1,x_2)$.
Let $c_1 = \dersens{g}(v_1)$ and $c_2 = \dersens{h}(v_2)$.
Then $\dersens{f}(v) = \|(c_1,c_2)\|_q$ where $\norm{\cdot}_q$ is the dual norm of $\norm{\cdot}_p$.
\end{lemma}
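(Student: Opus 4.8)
The plan is to handle the two parts separately: part (a) is a routine check that the $\ell_p$-combination of two Banach norms is again a complete norm, while part (b) is the substantive claim that the operator norm of $df_v$ is the dual-norm combination of the two partial derivative sensitivities. For part (a) I would first verify the seminorm axioms for $\norm{\cdot}_V$. Homogeneity is immediate from homogeneity of $\norm{\cdot}_{V_1}$, $\norm{\cdot}_{V_2}$ and of $\norm{\cdot}_p$. For the triangle inequality I would combine two facts: the triangle inequalities in $V_1$ and $V_2$ give $\norm{v_i+w_i}_{V_i}\le\norm{v_i}_{V_i}+\norm{w_i}_{V_i}$, and the $\ell_p$-norm on $\RR^2$ is monotone in each nonnegative coordinate, so applying it to these bounds and then the triangle inequality of $\norm{\cdot}_p$ yields $\norm{(v_1+w_1,v_2+w_2)}_V\le\norm{(v_1,v_2)}_V+\norm{(w_1,w_2)}_V$. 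Definiteness holds because $\norm{(v_1,v_2)}_V=0$ forces $\norm{v_1}_{V_1}=\norm{v_2}_{V_2}=0$. For completeness, the key observation is that each coordinate norm is dominated by the ambient norm, $\norm{v_i}_{V_i}\le\norm{(v_1,v_2)}_V$; hence a Cauchy sequence in $V$ projects to Cauchy sequences in $V_1$ and $V_2$, which converge by completeness of the factors, and the coordinatewise limit is the limit in $V$.

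For part (b) I would start from the decomposition of the total derivative into its two partial components. Since $f$ is Fr\'echet differentiable at $v$, the composition $t\mapsto f(v_1+th_1,v_2)$ agrees with $t\mapsto g(v_1+th_1)$; writing $g=f\circ\iota$ for the affine inclusion $\iota(x_1)=(x_1,v_2)$ and applying the chain rule identifies $df_v(h_1,\vec{0})$ with $dg_{v_1}(h_1)$, and symmetrically $df_v(\vec{0},h_2)=dh_{v_2}(h_2)$. By linearity of $df_v$,
\[ df_v(h_1,h_2)=dg_{v_1}(h_1)+dh_{v_2}(h_2)\enspace. \]
Writing $\phi_1=dg_{v_1}$ and $\phi_2=dh_{v_2}$ (so $\norm{\phi_1}=c_1$, $\norm{\phi_2}=c_2$), the quantity $\dersens{f}(v)=\norm{df_v}$ is the supremum of $|\phi_1(h_1)+\phi_2(h_2)|$ over all $(h_1,h_2)$ with $\norm{(h_1,h_2)}_V\le 1$.

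The claim then follows from two matching inequalities. For the upper bound, the triangle inequality and the operator-norm bounds give $|\phi_1(h_1)+\phi_2(h_2)|\le c_1\norm{h_1}_{V_1}+c_2\norm{h_2}_{V_2}$, and since the vector $(\norm{h_1}_{V_1},\norm{h_2}_{V_2})$ has $\ell_p$-norm at most $1$, H\"older's inequality bounds this by $\norm{(c_1,c_2)}_q$. For the matching lower bound I would pick nonnegative scalars $s_1,s_2$ with $\norm{(s_1,s_2)}_p=1$ saturating H\"older, i.e. $c_1 s_1+c_2 s_2=\norm{(c_1,c_2)}_q$, and then, for arbitrary $\delta>0$, choose $h_i$ with $\norm{h_i}_{V_i}=s_i$ and $\phi_i(h_i)\ge(c_i-\delta)s_i$ (possible by definition of the operator norm, taking $h_i=\vec{0}$ when $s_i=0$). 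These $h_i$ satisfy $\norm{(h_1,h_2)}_V=\norm{(s_1,s_2)}_p=1$ and witness $\norm{df_v}\ge\norm{(c_1,c_2)}_q-\delta(s_1+s_2)$; letting $\delta\to 0$ closes the gap. Combining the two bounds gives $\dersens{f}(v)=\norm{(c_1,c_2)}_q$, with $q$ the conjugate exponent as in Lemma~\ref{lemma:banach-base}.

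The main obstacle is precisely this lower bound: one must simultaneously approximate the two operator norms $\norm{\phi_i}$ by genuine vectors $h_i$ and arrange their magnitudes $s_i$ in exactly the ratio that makes H\"older tight. The subtlety is that the operator norms need not be attained, so the construction has to be carried out up to a $\delta$ and then taken to the limit, and the degenerate case $s_i=0$ must be handled separately. Everything else — the norm axioms, completeness, the H\"older upper bound, and the partial-derivative decomposition — is standard once the setup is in place.
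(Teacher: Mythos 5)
Your proposal is correct and follows essentially the same route as the paper's proof: the same identification $df_v(h_1,h_2)=dg_{v_1}(h_1)+dh_{v_2}(h_2)$, the same H\"older upper bound, and the same tightness argument for the dual norm, with part (a) handled by the identical monotonicity-plus-projection argument. If anything, your lower bound is slightly more careful than the paper's, which asserts that vectors \emph{attaining} the operator norms exist, whereas your $\delta$-approximation (with the $s_i=0$ case handled separately) is what is actually justified in a general Banach space and still suffices since the operator norm is a supremum.
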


\subsection{Composite norms}\label{sec:compositenorms}

Lemma~\ref{lemma:banach-base} and Lemma~\ref{lemma:banach-combine} together give us a construction of a certain class of norms and Banach spaces over $\RR^n$, summarized in Def.~\ref{def:compbanach}.

\begin{definition}[composite seminorm; variables used by a seminorm]\label{def:compbanach}
Let $\norm{\cdot}_N$ be a seminorm of the vector space $\RR^n$. It is a \emph{composite seminorm} if one of the following holds for all $\vec{x}=(x_1,\ldots,x_n)\in\RR^n$:
\begin{itemize}
\item There exists $i\in[n]$, such that $\norm{\vec{x}}_N=|x_i|$. Such seminorm \emph{uses the variable} $x_i$.
\item There exists a composite seminorm $\norm{\cdot}_M$ and $a\in\RR^+$, such that $\norm{\vec{x}}_N=a\cdot\norm{\vec{x}}_M$. The seminorm $\norm{\cdot}_N$ \emph{uses the same variables} as $\norm{\cdot}_M$.
\item There exist composite seminorms $\norm{\cdot}_{M_1},\ldots,\norm{\cdot}_{M_k}$ and $p\in[1,\infty]$, such that $\norm{\vec{x}}_N=\norm{\norm{\vec{x}}_{M_1},\ldots,\norm{\vec{x}}_{M_k}}_p$. The seminorm $\norm{\cdot}_N$ uses the union of the variables used by all $\norm{\cdot}_{M_i}$.
\end{itemize}
\end{definition}

An example of computing derivative sensitivity for a particular function from a Banach space with {\compbanach} can be found in App.~\ref{app:theoryexample1}. As next, we state some simple lemmas that help in comparing different composite norms. Their proofs can again be found in App.~\ref{app:proofs}.

A composite seminorm in $\RR^n$ can be seen as the semantics of a formal expression over the variables $x_1,\ldots,x_n$, where the term constructors are $\norm{\ldots}_p$ of any arity. We write $N\preceq M$ for two seminorms $N$ and $M$, if $\norm{\vec{x}}_N\leq\norm{\vec{x}}_M$ for all $\vec{x}\in\RR^n$.

\begin{lemma}\label{lm:scaleineq}
Let $N$ and $M$ be two {\compbanach}s over variables $\vec{x} = (x_1,\ldots,x_n)$, such that $N \preceq M$. Let composite seminorms $N',M',V_1,\ldots,V_m$ be such, that $N = N'(V_1,\ldots,V_m)$ and $M = M'(V_1,\ldots,V_m)$, and $V_i$ are mutually disjoint w.r.t. variables $x_j$ that they use. Then $N'(\alpha_1 V_1,\ldots,\alpha_m V_m) \preceq M'(\alpha_1 V_1,\ldots,\alpha_m V_m)$ for any choice of $\alpha_1,\ldots,\alpha_m \in \RR^{+}$.
\end{lemma}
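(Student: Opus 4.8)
The plan is to reduce the statement about seminorm expressions to an elementary pointwise inequality between two functions of $m$ nonnegative real arguments, which is manifestly insensitive to rescaling those arguments. First I would read $N'$ and $M'$ as composite seminorms in $m$ formal variables $y_1,\ldots,y_m$, into whose slots the $V_i$ are substituted. A structural induction on Def.~\ref{def:compbanach} shows that the value of a composite seminorm depends only on the absolute values of its arguments (the base case reads $|y_i|$, scaling preserves this, and the $\ell_p$-combination takes nonnegative arguments). Hence there is a well-defined $\psi_{N'}:(\RR^+)^m\to\RR^+$ with $\norm{\vec{y}}_{N'}=\psi_{N'}(|y_1|,\ldots,|y_m|)$, and likewise $\psi_{M'}$. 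Substituting $V_i$ for the $i$-th slot and evaluating at $\vec{x}$ then gives $\norm{\vec{x}}_N=\psi_{N'}(\norm{\vec{x}}_{V_1},\ldots,\norm{\vec{x}}_{V_m})$ and $\norm{\vec{x}}_M=\psi_{M'}(\norm{\vec{x}}_{V_1},\ldots,\norm{\vec{x}}_{V_m})$, since each $\norm{\vec{x}}_{V_i}$ is already nonnegative. The identical computation for the scaled expressions yields $\norm{\vec{x}}_{N'(\alpha_1 V_1,\ldots,\alpha_m V_m)}=\psi_{N'}(\alpha_1\norm{\vec{x}}_{V_1},\ldots,\alpha_m\norm{\vec{x}}_{V_m})$ and the analogue for $M'$, using $\norm{\vec{x}}_{\alpha_i V_i}=\alpha_i\norm{\vec{x}}_{V_i}$.

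The key step, and the part I expect to be the main obstacle, is to show that as $\vec{x}$ ranges over $\RR^n$ the tuple $(\norm{\vec{x}}_{V_1},\ldots,\norm{\vec{x}}_{V_m})$ attains every value in $(\RR^+)^m$. Here I would exploit the disjointness hypothesis: each $V_i$ reads only the coordinates in some index set $S_i$, and the $S_i$ are pairwise disjoint. Every composite seminorm uses at least one variable and is therefore not identically zero, so restricted to its own coordinates $V_i$ is a nonzero seminorm; by the homogeneity axiom $\norm{t\vec{v}}=|t|\cdot\norm{\vec{v}}$ it is surjective onto $\RR^+$ (in the spirit of Lemma~\ref{lm:surj}, but for seminorms). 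Given targets $t_1,\ldots,t_m\in\RR^+$, I would choose for each $i$ an assignment of the coordinates in $S_i$ realizing $\norm{\cdot}_{V_i}=t_i$, and assemble these into one $\vec{x}$; because the supports are disjoint and $V_i$ ignores coordinates outside $S_i$, all $m$ values are realized simultaneously.

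Combining these observations, the hypothesis $N\preceq M$ says exactly that $\psi_{N'}\le\psi_{M'}$ holds at every point of the image of $\vec{x}\mapsto(\norm{\vec{x}}_{V_1},\ldots,\norm{\vec{x}}_{V_m})$, which by the surjectivity just established is all of $(\RR^+)^m$. Finally, for any fixed $\alpha_1,\ldots,\alpha_m\in\RR^+$ and any $\vec{x}$, the point $(\alpha_1\norm{\vec{x}}_{V_1},\ldots,\alpha_m\norm{\vec{x}}_{V_m})$ again lies in $(\RR^+)^m$, so the pointwise inequality $\psi_{N'}\le\psi_{M'}$ applies there directly. Translating back through the identities of the first paragraph gives $\norm{\vec{x}}_{N'(\alpha_1 V_1,\ldots,\alpha_m V_m)}\le\norm{\vec{x}}_{M'(\alpha_1 V_1,\ldots,\alpha_m V_m)}$ for all $\vec{x}$, i.e.\ $N'(\alpha_1 V_1,\ldots,\alpha_m V_m)\preceq M'(\alpha_1 V_1,\ldots,\alpha_m V_m)$, as required. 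I note that monotonicity of $\psi_{N'}$ and $\psi_{M'}$ is not actually needed: the argument relies only on the fact that rescaling keeps the arguments inside the domain $(\RR^+)^m$ where the inequality is already known to hold.
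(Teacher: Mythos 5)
Your proposal is correct and follows essentially the same route as the paper's own proof: establish that the disjointness of the $V_i$ and the surjectivity of each seminorm (Lemma~\ref{lm:surj}) make $(\norm{\vec{x}}_{V_1},\ldots,\norm{\vec{x}}_{V_m})$ range over all of $(\RR^+)^m$, so that $N\preceq M$ forces the outer inequality to hold at every point of $(\RR^+)^m$, including the rescaled points $(\alpha_1\norm{\vec{x}}_{V_1},\ldots,\alpha_m\norm{\vec{x}}_{V_m})$. Your explicit remark that surjectivity must be argued for nonzero \emph{seminorms} rather than norms is a small but worthwhile refinement of the paper's direct appeal to Lemma~\ref{lm:surj}.
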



\begin{lemma}\label{lm:hammer}
Let $\vec{x} = (x_1,\ldots,x_m) \in \RR^m$. Let $N$ be a {\compbanach}, defined over variables $\vec{x}$, such that $x_i$ occurs $k_i$ times in $N$. We have $\norm{\alpha_1 x_1,\ldots,\alpha_m x_m}_p \leq \norm{x_1,\ldots,x_m}_N \leq \norm{\alpha_1 x_1,\ldots,\alpha_m x_m}_q$, where:
\begin{itemize}
\item $p$ is the largest used subnorm of $N$;
\item $q$ is the smallest used subnorm of $N$;
\item $\alpha_i = \sqrt[p]{\sum_{j=1}^{k_i} \alpha_{ij}}$, where $\alpha_{ij}$ is the scaling of the $j$-th occurrence of $x_i$.
\end{itemize}
\end{lemma}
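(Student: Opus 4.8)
The plan is to prove the two inequalities separately, in each case by \emph{flattening} the nested composite seminorm $N$ into a single $\ell_p$ (for the lower bound) or $\ell_q$ (for the upper bound) layer applied to the scaled variables. The rigorous backbone is a structural induction on the grammar of Def.~\ref{def:compbanach}: the base case $\norm{\vec{x}}_N=|x_i|$ is immediate, the scaling case $a\cdot\norm{\vec{x}}_M$ merely multiplies all accumulated coefficients by $a$, and the combining case $\norm{\norm{\vec{x}}_{M_1},\ldots,\norm{\vec{x}}_{M_k}}_r$ is where the work happens. I would fix an arbitrary $\vec{x}$ throughout, so that each target inequality is a pointwise (i.e.\ $\preceq$) statement.

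For the lower bound I would first replace the index $r$ of every combining node by the global maximum $p$. Since $p\geq r$, Fact~\ref{fact:lpeqv} gives $\norm{\cdot}_p\leq\norm{\cdot}_r$, so this substitution can only decrease the value of that node. Because every combining node is an $\ell$-norm of \emph{nonnegative} quantities (the values of its subnorms) and is therefore monotone in each coordinate, and because the scaling nodes multiply by nonnegative constants, replacing a subexpression by a smaller one decreases the whole composite value; hence after replacing all indices by $p$ we obtain a quantity $\leq\norm{\vec{x}}_N$. The resulting expression uses the single index $p$ at every combining node, so the equality case ($p=q$) of Lemma~\ref{lm:ungroup} lets me collapse the nesting into one $\ell_p$-norm, with every scaling constant pushed onto the leaf it multiplies; this yields $\norm{(\alpha_{ij}x_i)_{i\in[m],\,j\in[k_i]}}_p$, where $\alpha_{ij}$ accumulates the scalings along the path to the $j$-th occurrence of $x_i$. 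Finally, Lemma~\ref{lm:scalecopy} (and Cor.~\ref{cor:scalecopy} when all scalings are $1$) merges the $k_i$ copies of each $x_i$ into a single term with coefficient $\alpha_i$, producing the claimed $\norm{\alpha_1 x_1,\ldots,\alpha_m x_m}_p$.

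The upper bound is completely symmetric, with $p$ replaced by the global minimum index $q$: now $q\leq r$ for every combining node, so Fact~\ref{fact:lpeqv} gives $\norm{\cdot}_q\geq\norm{\cdot}_r$ and the substitution \emph{increases} each node's value, hence increases the whole; the all-$q$ expression then collapses via the equality case of Lemma~\ref{lm:ungroup} and the copies merge via Lemma~\ref{lm:scalecopy}, giving an $\ell_q$-norm of the scaled variables as an upper bound on $\norm{\vec{x}}_N$.

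The main obstacle is the bookkeeping that makes the two ``global'' operations --- lowering all indices to the extremal one, and merging repeated occurrences --- precise. Carrying this out cleanly forces the structural induction rather than a one-shot argument: one must maintain an inductive invariant recording, for each subexpression, the multiset of leaf occurrences together with their accumulated scalings, and check that the monotonicity step and the Lemma~\ref{lm:ungroup} collapse interact correctly with the scaling nodes (in particular, a scaling $a$ sitting above a combining node must be pushed inside it by the homogeneity of $\ell_r$, so that it ultimately reaches the leaves). The direction of Fact~\ref{fact:lpeqv} is exactly what forces the largest index $p$ into the lower bound and the smallest index $q$ into the upper bound, and the coordinatewise monotonicity of $\ell$-norms on nonnegative vectors is the single fact needed to propagate a local inequality on one subnorm up to the root; the delicate point to verify is that the coefficient $\alpha_i$ emerging from Lemma~\ref{lm:scalecopy} matches the stated form once the extremal index has been fixed.
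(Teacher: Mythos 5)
Your proposal is correct and follows essentially the same route as the paper's own proof: replace every combining index by the extremal one ($p$ for the lower bound, $q$ for the upper), invoking Fact~\ref{fact:lpeqv} and monotonicity, then flatten via the equality case of Lemma~\ref{lm:ungroup} and merge repeated occurrences via Lemma~\ref{lm:scalecopy}. Your version is in fact slightly more explicit than the paper's about the induction, the coordinatewise-monotonicity step, and pushing scalings to the leaves, but the argument is the same.
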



\subsection{Smoothing}\label{sec:banach-smoothing}

In Sec.~\ref{sec:compute-dersens}, we have shown how to compute a valid upper bound on the derivative sensitivity. We now show how to find a \emph{smooth} upper bound on the derivative sensitivity for particular functions. This is similar to finding smooth upper bounds in~\cite{DBLP:conf/stoc/NissimRS07}. For continuous functions, we have an alternative definition of $\beta$-smoothness, which is easier to use in practice, and implies Def.~\ref{def:smoothness}.

For better presentation of the main ideas behind smoothing, we outline essential steps into several lemmas. Their proofs are straightforward and can be found in App.~\ref{app:proofs}.



\begin{lemma}\label{lm:smoothness}
Let $X$ be a Banach space. If $\dersens{f}$ exists then $f : X \rightarrow \RR$ is $\beta$-smooth if $\frac{\dersens{f}(x)}{|f(x)|} \le \beta$ for all $x \in X$.

\end{lemma}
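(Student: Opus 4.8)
The plan is to pass from $f$ to its logarithm, which converts the multiplicative condition of Def.~\ref{def:smoothness} into an additive one that the Banach mean value theorem (Theorem~\ref{thm:mvt}) delivers directly. First I note that for the hypothesis $\dersens{f}(x)/\abs{f(x)}\leq\beta$ to even be well defined at every $x$, we must have $f(x)\neq 0$ for all $x\in X$. Hence $g:=\ln\abs{f}:X\to\RR$ is defined everywhere, and by the chain rule for Fr\'echet derivatives (Def.~\ref{def:derivative-banach}) it is differentiable with $dg_{\vec{x}}=\frac{1}{f(\vec{x})}\,df_{\vec{x}}$. Taking operator norms gives
\[
\dersens{g}(\vec{x})=\norm{dg_{\vec{x}}}=\frac{\norm{df_{\vec{x}}}}{\abs{f(\vec{x})}}=\frac{\dersens{f}(\vec{x})}{\abs{f(\vec{x})}}\leq\beta,
\]
so $g$ has derivative sensitivity bounded by the constant $\beta$ everywhere.

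Second, I fix $\vec{x},\vec{x}'\in X$, set $L=\norm{\vec{x}'-\vec{x}}$, and apply Theorem~\ref{thm:mvt} to $g$ along the segment joining $\vec{x}$ and $\vec{x}'$ (which lies in $X$ since $X$ is a Banach space, hence convex). This yields a point $\vec{z}$ on the segment with
\[
\abs{g(\vec{x})-g(\vec{x}')}\leq\norm{dg_{\vec{z}}}\cdot L=\dersens{g}(\vec{z})\cdot L\leq\beta L.
\]
Unwinding $g=\ln\abs{f}$ gives $\abs{\ln\abs{f(\vec{x})}-\ln\abs{f(\vec{x}')}}\leq\beta L$, and exponentiating produces $\abs{f(\vec{x})}\leq e^{\beta\norm{\vec{x}'-\vec{x}}}\abs{f(\vec{x}')}$, which is exactly the inequality of Def.~\ref{def:smoothness} applied to the nonnegative quantity $\abs{f}$. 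Since $\vec{x},\vec{x}'$ were arbitrary, this establishes the claim.

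The step I expect to be the crux is conceptual rather than computational: derivative sensitivity and the mean value theorem naturally produce \emph{additive} bounds of the form $\abs{f(\vec{x})-f(\vec{x}')}\leq\dersens{f}(\vec{z})\,L$, whereas $\beta$-smoothness is a \emph{multiplicative} statement. Passing to $\ln\abs{f}$ is precisely what reconciles the two, because it turns the \emph{relative} bound $\dersens{f}/\abs{f}\leq\beta$ into an \emph{absolute} bound $\dersens{g}\leq\beta$ to which the additive mean value theorem can be applied. The remaining technical points --- that $f$ never vanishes (forced by well-definedness of the hypothesis), and the chain-rule formula $dg_{\vec{x}}=df_{\vec{x}}/f(\vec{x})$ together with the operator-norm bound $\abs{df_{\vec{x}}(h)}\leq\norm{df_{\vec{x}}}\norm{h}$ used implicitly inside Theorem~\ref{thm:mvt} --- are standard facts about the Fr\'echet derivative and the operator norm.
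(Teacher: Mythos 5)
Your proof is correct and follows essentially the same route as the paper's: pass to $\ln\abs{f}$, use the chain rule to identify $\dersens{\ln\abs{f}}=\dersens{f}/\abs{f}\leq\beta$, and apply the Banach-space mean value theorem to get the multiplicative bound. If anything, your write-up is slightly more careful than the paper's (you use the inequality form of Theorem~\ref{thm:mvt} where the paper writes an equality, and you make explicit both the non-vanishing of $f$ and the absolute value inside the logarithm).
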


As a particular instance of Lemma~\ref{lm:smoothness}, we get that a differentiable function $f : \RR \rightarrow \RR$ is $\beta$-smooth if $\left|\frac{f'(x)}{f(x)}\right| \le \beta$.

\begin{lemma}\label{lm:sumprod}
Let $f(x): \RR \to \RR$ be $\beta_f$-smooth, and let $g(x): \RR \to \RR$ be $\beta_g$-smooth.
\begin{enumerate}
\item If $f(x),g(x) > 0$, then $f(x) + g(x)$ is $\max(\beta_f,\beta_g)$-smooth;
\item $f(x) \cdot g(x)$ is $\beta_f + \beta_g$-smooth;
\item $f(x)\mathbin{/}g(x)$ is $\beta_f + \beta_g$-smooth.
\end{enumerate}
\end{lemma}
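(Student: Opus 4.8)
The plan is to argue directly from the definition of $\beta$-smoothness (Def.~\ref{def:smoothness}), which for the Banach space $X=\RR$ specializes to: a function $p$ is $\beta$-smooth iff $p(x)\le e^{\beta|x-x'|}p(x')$ for all $x,x'\in\RR$. Throughout I treat $f$ and $g$ as positive; this is stated explicitly in part~(1), it is the regime in which Def.~\ref{def:smoothness} is meaningful (a smooth bound on derivative sensitivity is non-negative by construction), and positivity is exactly what lets me combine the defining inequalities multiplicatively. I deliberately avoid routing through the derivative characterisation of Lemma~\ref{lm:smoothness}, since that lemma supplies only the sufficient direction $|f'/f|\le\beta\Rightarrow$ smooth, whereas here smoothness is handed to me as a hypothesis; working from the definition sidesteps needing its converse and avoids assuming differentiability of $f,g$.

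For part~(1), I observe that $e^{\beta_f|x-x'|}\le e^{\max(\beta_f,\beta_g)|x-x'|}$ and likewise for $\beta_g$, so the smoothness inequalities for $f$ and for $g$ both hold with the common rate $\max(\beta_f,\beta_g)$; adding them yields $f(x)+g(x)\le e^{\max(\beta_f,\beta_g)|x-x'|}\bigl(f(x')+g(x')\bigr)$, which is the claim. For part~(2), I multiply the inequality $f(x)\le e^{\beta_f|x-x'|}f(x')$ by $g(x)\le e^{\beta_g|x-x'|}g(x')$; since all four quantities are positive the product of the two inequalities is valid and the exponents add, giving $f(x)g(x)\le e^{(\beta_f+\beta_g)|x-x'|}f(x')g(x')$.

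Part~(3) is the step that needs the most care, and I expect it to be the main obstacle, since the quotient does not split additively the way a product does. Instead I apply the smoothness of $g$ with the two points \emph{exchanged}: from $g(x')\le e^{\beta_g|x'-x|}g(x)$ and positivity of $g$ I obtain $1/g(x)\le e^{\beta_g|x-x'|}/g(x')$. Multiplying this by $f(x)\le e^{\beta_f|x-x'|}f(x')$ then gives $f(x)/g(x)\le e^{(\beta_f+\beta_g)|x-x'|}\,f(x')/g(x')$, as required. The only genuinely non-obvious manipulation is this argument-swap; once it and positivity are in hand, every case reduces to adding or multiplying two instances of the defining inequality.

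As a sanity check that the three rates are correct, I note that the same constants emerge from the logarithmic-derivative viewpoint of Lemma~\ref{lm:smoothness}: for positive differentiable $p$, $\beta$-smoothness is equivalent to $\ln p$ being $\beta$-Lipschitz, and $\ln(fg)=\ln f+\ln g$, $\ln(f/g)=\ln f-\ln g$ make the Lipschitz constants add for parts~(2) and~(3), while for the sum one bounds $|(f+g)'|\le\beta_f f+\beta_g g\le\max(\beta_f,\beta_g)(f+g)$, recovering part~(1). I will present the definitional argument as the main proof, as it is self-contained and requires neither differentiability nor a converse of Lemma~\ref{lm:smoothness}.
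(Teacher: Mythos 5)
Your proof is correct, but it takes a genuinely different route from the paper's. The paper proves all three parts by differentiating: it computes the logarithmic derivative of the sum, product, and quotient (e.g.\ $\bigl|\tfrac{(fg)'}{fg}\bigr|\leq\bigl|\tfrac{f'}{f}\bigr|+\bigl|\tfrac{g'}{g}\bigr|\leq\beta_f+\beta_g$) and then invokes the derivative criterion of Lemma~\ref{lm:smoothness}. That argument is a one-line calculus computation, but it tacitly uses the \emph{converse} of Lemma~\ref{lm:smoothness} (that $\beta$-smoothness of the hypotheses yields $|f'/f|\le\beta_f$), which the paper never states, and it assumes $f$ and $g$ are differentiable. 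Your definitional argument --- adding the two inequalities for the sum, multiplying them for the product, and swapping the roles of $x$ and $x'$ in the smoothness of $g$ to invert it for the quotient --- avoids both issues: it needs no differentiability and no converse, only positivity, which the paper's proof also implicitly requires (it divides by $f$ and $g$) and which is the regime in which Def.~\ref{def:smoothness} is meaningful anyway. The argument-swap in part~(3) is exactly the right replacement for the quotient rule. Your closing remark correctly identifies that the two proofs are really the same computation seen through $\ln$; the paper's version is shorter within its calculus-based workflow, while yours is the more self-contained and slightly more general statement.
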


\begin{lemma}\label{lm:lpnormderiv}
Let $X_i$ for $i \in \set{1,\ldots,n}$ be Banach spaces, $f_i:X_i\rightarrow\RR$. Let $x = (x_1,\ldots,x_n)$, and let $f(x_1,\ldots,x_n) = \|f_1(x_1),\ldots,f_n(x_n)\|_p$. Then $\frac{\partial f}{\partial x_i}(x) = \frac{\partial f_i}{\partial x_i}(x_i) \cdot \left(\frac{f_i(x_i)}{f(x)}\right)^{p-1} \leq \frac{\partial f_i}{\partial x_i}(x_i)$.
\end{lemma}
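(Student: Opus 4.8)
The plan is to recognize $f$ as a composition of the outer scalar map $g:\RR^n\to\RR$ given by $g(t_1,\ldots,t_n)=\norm{t_1,\ldots,t_n}_p$ with the inner map $(x_1,\ldots,x_n)\mapsto(f_1(x_1),\ldots,f_n(x_n))$, and then to apply the chain rule. Since each $f_i$ depends only on $x_i$, the partial Fr\'echet derivative of $f$ in the $x_i$-direction is obtained by freezing all coordinates except the $i$-th; the inner map restricted to $x_i$ is just $f_i$, whose Fr\'echet derivative is $d(f_i)_{x_i}$, while the outer map contributes the scalar $\frac{\partial g}{\partial t_i}$ evaluated at $(f_1(x_1),\ldots,f_n(x_n))$. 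Because $g$ is real-valued, its derivative acts by scalar multiplication, so the chain rule gives $\frac{\partial f}{\partial x_i}=\frac{\partial g}{\partial t_i}\cdot d(f_i)_{x_i}$; passing to operator norms (equivalently, reading $\frac{\partial f_i}{\partial x_i}(x_i)$ as the derivative sensitivity $\dersens{f_i}(x_i)=\norm{d(f_i)_{x_i}}$) turns the claim into the scalar identity to be established.

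It then remains to compute $\frac{\partial g}{\partial t_i}$. Differentiating $g(t)=\bigl(\sum_{j} t_j^p\bigr)^{1/p}$ directly, treating the entries as non-negative (which holds here since the $f_j$ are non-negative), yields $\frac{\partial g}{\partial t_i}=t_i^{p-1}\bigl(\sum_j t_j^p\bigr)^{1/p-1}$. I would then simplify using $\bigl(\sum_j t_j^p\bigr)^{1/p}=g(t)=f(x)$, so that $\bigl(\sum_j t_j^p\bigr)^{1/p-1}=f(x)^{-(p-1)}$ and therefore $\frac{\partial g}{\partial t_i}=\bigl(t_i/f(x)\bigr)^{p-1}$. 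Substituting $t_i=f_i(x_i)$ and multiplying by $\frac{\partial f_i}{\partial x_i}(x_i)$ gives exactly the claimed equality.

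For the inequality, the key observation is that a single coordinate never exceeds the whole $\ell_p$-norm: applying Lemma~\ref{lm:injmap} with the singleton $f_i(x_i)$ and the injection $1\mapsto i$ (taking $p=q$) gives $f_i(x_i)=\norm{f_i(x_i)}_p\leq\norm{f_1(x_1),\ldots,f_n(x_n)}_p=f(x)$. Hence the ratio $f_i(x_i)/f(x)$ lies in $[0,1]$, and since $p-1\geq 0$ the factor $\bigl(f_i(x_i)/f(x)\bigr)^{p-1}$ is at most $1$; multiplying the non-negative quantity $\frac{\partial f_i}{\partial x_i}(x_i)$ by a factor in $[0,1]$ can only decrease it, which yields $\frac{\partial f}{\partial x_i}(x)\leq\frac{\partial f_i}{\partial x_i}(x_i)$.

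The main obstacles lie not in the algebra but in the justification of the chain rule in the Banach setting and in the edge cases. I expect the chain rule to be the most delicate point: one must argue that the partial Fr\'echet derivative of the composite is genuinely the scalar $\frac{\partial g}{\partial t_i}$ times $d(f_i)_{x_i}$, which is clean precisely because $g$ is scalar-valued, so its derivative is multiplication by a number and the operator norm factors as $\bigl|\frac{\partial g}{\partial t_i}\bigr|\cdot\norm{d(f_i)_{x_i}}$. The remaining care is for $p=\infty$, handled by the limit $p\to\infty$ (or directly via $g=\max_j t_j$, where only the maximizing coordinate contributes a derivative equal to $\frac{\partial f_i}{\partial x_i}$), and for the degenerate case $f(x)=0$, which is excluded in the intended use where the $f_i$ are strictly positive, so that no division by zero occurs and $g$ is differentiable at the relevant point.
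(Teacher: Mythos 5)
Your proposal is correct and takes essentially the same route as the paper: the chain rule through the intermediate variables $y_i=f_i(x_i)$, the computation $\frac{\partial}{\partial y_i}\norm{y}_p=\bigl(y_i/\norm{y}_p\bigr)^{p-1}$, and the observation that this factor lies in $[0,1]$ because a single coordinate never exceeds the $\ell_p$-norm. Your additional remarks on the Banach-space chain rule, the $p=\infty$ limit, and the degenerate case $f(x)=0$ only make explicit what the paper leaves implicit.
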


\begin{lemma}\label{lm:lpnormderiv2}
Let $X_i$ for $i \in \set{1,\ldots,n}$ be Banach spaces. Let $x = (x_1,\ldots,x_n)$, and let  $f(x) = \|f_1(x),\ldots,f_n(x)\|_p$. Then $\frac{\partial f}{\partial x_i}(x) = \sum_{j=1}^n \left(\frac{f_j(x)}{f(x)}\right)^{p-1} \cdot \frac{\partial f_j}{\partial x_i}(x)$. This can be upper bounded as:
\begin{enumerate}
\item $\sum_{j=1}^n \frac{\partial f_j}{\partial x_i}(x)$;
\item $\max_{j=1}^n \frac{f(x)}{f_j(x)} \cdot \frac{\partial f_j}{\partial x_i}(x)$.
\end{enumerate}
\end{lemma}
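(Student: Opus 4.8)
The plan is to obtain the stated identity by a plain chain-rule computation, and then to read off the two upper bounds from the resulting expression, the second one being the only genuinely nontrivial step. Throughout I treat the $f_j$ as positive (so that $f(x) = \|f_1(x),\ldots,f_n(x)\|_p$ carries no absolute values and is differentiable) and the partials $\frac{\partial f_j}{\partial x_i}$ as non-negative; these are the standing positivity conventions under which these derivative-sensitivity expressions are used, and I would flag them explicitly. I also restrict to finite $p \geq 1$; the case $p = \infty$ reduces to differentiating $\max_j f_j$ and is consistent with both bounds, or can be recovered as a limit.

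First I would write $f(x) = g(x)^{1/p}$ with $g(x) = \sum_{j=1}^n f_j(x)^p$. Differentiating gives $\frac{\partial f}{\partial x_i}(x) = \frac{1}{p}\, g(x)^{1/p - 1}\, \frac{\partial g}{\partial x_i}(x)$ and $\frac{\partial g}{\partial x_i}(x) = p \sum_{j=1}^n f_j(x)^{p-1}\, \frac{\partial f_j}{\partial x_i}(x)$. Substituting $g(x)^{1/p} = f(x)$, so that $g(x)^{1/p - 1} = f(x)^{-(p-1)}$, the factor $p$ cancels and I obtain exactly $\frac{\partial f}{\partial x_i}(x) = \sum_{j=1}^n \big(f_j(x)/f(x)\big)^{p-1}\, \frac{\partial f_j}{\partial x_i}(x)$, the claimed equality. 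For the first bound, I would use that every component of a vector is dominated by its $\ell_p$-norm, so $f_j(x) \leq f(x)$ and hence $\big(f_j(x)/f(x)\big)^{p-1} \leq 1$ for $p \geq 1$; since each $\frac{\partial f_j}{\partial x_i}(x) \geq 0$, discarding these factors yields $\frac{\partial f}{\partial x_i}(x) \leq \sum_{j=1}^n \frac{\partial f_j}{\partial x_i}(x)$.

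The more interesting step, which I expect to be the crux, is the second bound. The key observation is the normalization identity $\sum_{j=1}^n \big(f_j(x)/f(x)\big)^p = f(x)^{-p} \sum_{j=1}^n f_j(x)^p = 1$. Writing $a_j = f_j(x)/f(x)$ and setting $M = \max_j \frac{f(x)}{f_j(x)} \frac{\partial f_j}{\partial x_i}(x) = \max_j a_j^{-1} \frac{\partial f_j}{\partial x_i}(x)$, I have $\frac{\partial f_j}{\partial x_i}(x) \leq M a_j$ for every $j$, so $\frac{\partial f}{\partial x_i}(x) = \sum_{j=1}^n a_j^{p-1}\, \frac{\partial f_j}{\partial x_i}(x) \leq M \sum_{j=1}^n a_j^p = M$, which is precisely the asserted bound. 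Thus the only substantive idea is that the weights $a_j^p$ sum to one; once that is in hand, both bounds follow by elementary domination, and the positivity assumptions are what make the division in the second bound well defined.
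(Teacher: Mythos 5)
Your proof is correct and follows essentially the same route as the paper's: the chain-rule computation of the identity, the bound $\left(f_j(x)/f(x)\right)^{p-1}\leq 1$ for the first estimate, and factoring out $\max_j \frac{1}{f_j(x)}\frac{\partial f_j}{\partial x_i}(x)$ against the normalization $\sum_j f_j(x)^p = f(x)^p$ for the second. Your explicit statement of the positivity conventions is a small improvement over the paper, which leaves them implicit.
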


\begin{lemma}\label{lm:lpnorm}
Let $X_i$ for $i \in \set{1,\ldots,n}$ be Banach spaces, $X = \prod_{i=1}^{n} X_i$. Let $f_i: X_i \to \RR$ be $\beta_i$-smooth. Then, $f(x_1,\ldots,x_n) = \|f_1(x_1),\ldots,f_n(x_n)\|_p$ is $\norm{(\beta_i)_{i=1}^{n}}_{p}$-smooth as well as $\max_{i=1}^n (\beta_i)$-smooth, where the norm of $X$ is the $\ell_{p/(p-1)}$-combination of the norms of all $X_i$.
\end{lemma}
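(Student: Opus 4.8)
The plan is to reduce both claims, via Lemma~\ref{lm:smoothness}, to an upper bound on the ratio $\dersens{f}(x)/f(x)$, so the first task is to compute $\dersens{f}(x)$ explicitly. Since the norm on $X=\prod_{i=1}^n X_i$ is the $\ell_{p/(p-1)}$-combination of the norms on the $X_i$, I would iterate Lemma~\ref{lemma:banach-combine} over the $n$ factors. As the dual of $\ell_{p/(p-1)}$ is $\ell_p$, this yields $\dersens{f}(x)=\norm{(c_1,\ldots,c_n)}_p$, where each $c_i$ is the derivative sensitivity, evaluated at $x_i$, of the slice $x_i\mapsto f(x_1,\ldots,x_i,\ldots,x_n)$ with the other coordinates frozen. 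That slice is $N_i\circ f_i$ for the one-variable map $N_i(s)=\norm{(f_1(x_1),\ldots,s,\ldots,f_n(x_n))}_p$, so the chain rule --- exactly the computation underlying Lemma~\ref{lm:lpnormderiv} --- gives $c_i=\dersens{f_i}(x_i)\cdot\bigl(f_i(x_i)/f(x)\bigr)^{p-1}$.

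Next I would feed in the smoothness of the $f_i$. For positive differentiable functions, $\beta$-smoothness is equivalent to the ratio bound of Lemma~\ref{lm:smoothness} (both assert that $\ln f_i$ is $\beta_i$-Lipschitz), so $\dersens{f_i}(x_i)\le\beta_i\, f_i(x_i)$. Writing $t_i=f_i(x_i)/f(x)$, this gives $c_i/f(x)\le\beta_i\, t_i^{\,p}$, and crucially $\sum_{i=1}^n t_i^{\,p}=f(x)^p/f(x)^p=1$ with each $t_i\in[0,1]$ (here I use $f_i\ge 0$, so that $f(x)=(\sum_i f_i(x_i)^p)^{1/p}$).

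Then both claimed bounds fall out of
\[
\frac{\dersens{f}(x)}{f(x)}=\Bigl(\sum_{i=1}^n\bigl(c_i/f(x)\bigr)^p\Bigr)^{1/p}\le\Bigl(\sum_{i=1}^n\beta_i^{\,p}\,t_i^{\,p^2}\Bigr)^{1/p}\enspace.
\]
Using $t_i^{\,p^2}\le 1$ gives the bound $\norm{(\beta_i)_{i=1}^n}_p$; using instead $\beta_i\le\max_j\beta_j$ together with $t_i^{\,p^2}=(t_i^{\,p})^p\le t_i^{\,p}$ and $\sum_i t_i^{\,p}=1$ gives the bound $\max_j\beta_j$. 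Lemma~\ref{lm:smoothness} then converts each ratio bound into the corresponding $\beta$-smoothness of $f$.

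The main obstacle is not the final estimate but the setup: correctly assembling $\dersens{f}(x)=\norm{(c_1,\ldots,c_n)}_p$ from the combine lemma and the slice chain rule, and keeping the positivity assumptions ($f_i\ge 0$, so that the ratio form of smoothness and the normalization $\sum_i t_i^{\,p}=1$ are available) explicit. Once $\sum_i t_i^{\,p}=1$ is in hand the inequalities are elementary, with only the $\max$-bound requiring the extra monotonicity observation $t_i^{\,p^2}\le t_i^{\,p}$.
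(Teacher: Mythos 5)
Your proposal is correct and follows essentially the same route as the paper's proof: both assemble $\dersens{f}(x)=\norm{(c_1,\ldots,c_n)}_p$ from Lemma~\ref{lemma:banach-combine}, bound each $c_i$ via the slice chain rule of Lemma~\ref{lm:lpnormderiv} and the ratio form of $\beta_i$-smoothness, and then obtain the two bounds from $\abs{f_i(x_i)}\leq\abs{f(x)}$ (your $t_i^{p^2}\leq 1$) and from $\norm{(\abs{f_i(x_i)})_{i=1}^n}_p=\abs{f(x)}$ (your $\sum_i t_i^p=1$), respectively. The only cosmetic difference is that you retain the factor $(f_i(x_i)/f(x))^{p-1}$ a step longer before discarding it, where the paper drops it immediately via the inequality in Lemma~\ref{lm:lpnormderiv}.
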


In general, the smoothness is worse if the variables of different $f_i$ are not disjoint. This is shown in the next lemma.
\begin{lemma}\label{lm:lpnorm2}
Let $X_i$ for $i \in \set{1,\ldots,n}$ be Banach spaces, $X = \prod_{i=1}^{n} X_i$. Let $f_i: X \to \RR$ be $\beta^j_i$-smooth for $X_j$. Let $x = (x_1,\ldots,x_n)$. Then, $f(x) = \|f_1(x),\ldots,f_n(x)\|_p$ is $\norm{(\max_j \beta^j_i)_{i=1}^{n}}_{p}$-smooth.
\end{lemma}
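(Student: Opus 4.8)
The plan is to deduce the $\beta$-smoothness of $f$ from a bound on its derivative sensitivity via Lemma~\ref{lm:smoothness}, and to build that bound up one coordinate direction at a time. By Lemma~\ref{lm:smoothness} it suffices to produce a constant $\beta$ with $\dersens{f}(x)\le\beta\cdot\abs{f(x)}$ at every point, and then to identify this $\beta$ with the claimed norm. Following the set-up of Lemma~\ref{lm:lpnorm}, I would equip $X=\prod_i X_i$ with the $\ell_{p/(p-1)}$-combination of the norms of the $X_i$, so that the $n$-ary form of Lemma~\ref{lemma:banach-combine} splits the derivative sensitivity as $\dersens{f}(x)=\norm{(c_1,\ldots,c_n)}_p$, where $c_j$ is the operator norm of the partial Fr\'echet derivative of $f$ in the $X_j$ direction, with all other coordinates frozen.

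The core of the argument is to estimate a single $c_j$. Freezing every coordinate except $x_j$ turns $f$ into $\norm{f_1(x),\ldots,f_n(x)}_p$ regarded as a function of $x_j$ alone, so the chain rule underlying Lemma~\ref{lm:lpnormderiv2} writes its derivative as $\sum_i (f_i/f)^{p-1}\,\partial_{x_j} f_i$. Passing to operator norms and using subadditivity gives $c_j\le\sum_i (f_i/f)^{p-1}\, s^j_i$, where $s^j_i$ denotes the derivative sensitivity of $f_i$ in the $X_j$ direction. The hypothesis that $f_i$ is $\beta^j_i$-smooth for $X_j$ is, by the one-dimensional corollary of Lemma~\ref{lm:smoothness}, exactly the statement $s^j_i\le\beta^j_i\,\abs{f_i}$.

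Now I would invoke bound (2) of Lemma~\ref{lm:lpnormderiv2}, whose proof rests on the identity $\sum_i\abs{f_i}^p=\abs{f}^p$: it replaces the weighted sum $\sum_i (f_i/f)^{p-1}\, s^j_i$ by $\max_i (f/f_i)\, s^j_i$. Substituting $s^j_i\le\beta^j_i f_i$ makes each $f_i$ cancel \emph{exactly} against the factor $f/f_i$, leaving the clean estimate $c_j\le\abs{f}\cdot\max_i\beta^j_i$. Feeding the $n$ directional bounds back into the $\ell_p$-combination from the first step then yields $\dersens{f}(x)\le\abs{f(x)}\cdot\norm{(\max_i\beta^j_i)_{j=1}^n}_p$, i.e. $f$ is smooth with parameter the $\ell_p$-combination, over the $n$ coordinate directions, of the largest smoothness constant attained by any $f_i$ in that direction, which is the asserted bound.

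The main obstacle I expect is the coupling introduced by the shared variables: unlike the disjoint case of Lemma~\ref{lm:lpnorm}, within a single direction $X_j$ all $n$ summands $f_i$ contribute simultaneously to $\partial_{x_j}f$, so I cannot isolate one term and invoke $f_i\le f$ directly. The whole point of routing through bound (2) rather than bound (1) is that the $\max_i$ it produces, together with the cancellation of the $f_i$, keeps a single constant $\max_i\beta^j_i$ per direction instead of an uncontrolled sum. Care is also needed to justify the operator-norm (Fr\'echet) analogue of the scalar chain-rule estimate used in Lemma~\ref{lm:lpnormderiv2}, and to carry the positivity of the $f_i$ through so that $\norm{f_1,\ldots,f_n}_p$ and the derivative formulas behave exactly as in the one-dimensional lemmas.
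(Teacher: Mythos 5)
Your proposal is correct and follows essentially the same route as the paper's own proof: bound each directional derivative sensitivity via bound (2) of Lemma~\ref{lm:lpnormderiv2} so that the factor $f/f_j$ cancels against the $f_j$ coming from the smoothness hypothesis, combine the $n$ directional bounds with the $\ell_p$-norm via Lemmas~\ref{lemma:banach-base} and~\ref{lemma:banach-combine}, and conclude by Lemma~\ref{lm:smoothness}. The only difference is cosmetic (your index roles for direction versus component are swapped relative to the paper's proof, which itself is not fully consistent with the lemma statement on this point).
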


\begin{lemma}[a part of Lemma.2.3 of~\cite{DBLP:conf/stoc/NissimRS07}]\label{lm:smoothub}
Let $f: X \to \RR$. For $\beta > 0$, a $\beta$-smooth upper bound on $f$ is
\[g_{f,\beta} = \max_{x' \in X}(f(x) \cdot e^{-\beta \cdot d(x,x')})\enspace.\]
\end{lemma}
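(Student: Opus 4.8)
The plan is to verify the two defining properties of a $\beta$-smooth upper bound directly from the construction, reading the maximand as $f(x')\cdot e^{-\beta\cdot d(x,x')}$, i.e. $g_{f,\beta}(x)=\max_{x'\in X}\bigl(f(x')\cdot e^{-\beta\cdot d(x,x')}\bigr)$ (this is the standard construction of~\cite{DBLP:conf/stoc/NissimRS07}). Since both local and derivative sensitivity are non-negative, I would assume $f\geq 0$ throughout; this is precisely what makes the multiplicative smoothness inequality of Def.~\ref{def:smoothness} meaningful, and it is the setting in which the lemma is applied.

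For the upper-bound property $g_{f,\beta}(x)\geq f(x)$, I would simply instantiate the maximand at $x'=x$. Because $d(x,x)=0$, the value $f(x)\cdot e^{-\beta\cdot 0}=f(x)$ occurs among the quantities over which the maximum is taken, so the maximum is at least $f(x)$. This step is immediate and requires no assumption on $f$.

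For smoothness I would fix $x,y\in X$ and let $x^{*}$ attain the maximum defining $g_{f,\beta}(x)$, so that $g_{f,\beta}(x)=f(x^{*})\cdot e^{-\beta\cdot d(x,x^{*})}$. The key estimate is the triangle inequality $d(y,x^{*})\leq d(x,y)+d(x,x^{*})$, which rearranges to $-d(x,x^{*})\leq d(x,y)-d(y,x^{*})$ and hence, after scaling by $\beta$ and exponentiating, to $e^{-\beta\,d(x,x^{*})}\leq e^{\beta\,d(x,y)}\cdot e^{-\beta\,d(y,x^{*})}$. Multiplying through by the nonnegative factor $f(x^{*})$ yields $g_{f,\beta}(x)\leq e^{\beta\,d(x,y)}\cdot f(x^{*})\,e^{-\beta\,d(y,x^{*})}$. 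Since $f(x^{*})\,e^{-\beta\,d(y,x^{*})}$ is one of the quantities in the maximum defining $g_{f,\beta}(y)$, it is bounded above by $g_{f,\beta}(y)$, and combining the two bounds gives $g_{f,\beta}(x)\leq e^{\beta\,d(x,y)}\,g_{f,\beta}(y)$, which is exactly the condition of Def.~\ref{def:smoothness}.

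The only delicate point, which I expect to be the main obstacle, is the existence of the maximizer $x^{*}$: the statement is written with $\max$, but on a general Banach (or metric) space the supremum need not be attained. I would handle this by replacing $x^{*}$ with an $\varepsilon$-approximate maximizer, i.e. a point with $f(x^{*})\,e^{-\beta\,d(x,x^{*})}\geq g_{f,\beta}(x)-\varepsilon$, pushing $\varepsilon$ through the same chain of inequalities and letting $\varepsilon\to 0$. Alternatively, one simply observes that this is a part of Lemma~2.3 of~\cite{DBLP:conf/stoc/NissimRS07}, whose argument transfers verbatim once $d(x,x')$ is read as the Banach-space distance $\norm{x-x'}$.
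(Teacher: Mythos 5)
Your proof is correct and is exactly the standard argument: the paper itself gives no proof of this lemma, importing it directly from Lemma~2.3 of Nissim et al., so there is nothing to diverge from. You correctly repair the statement's typo (the maximand must be $f(x')\cdot e^{-\beta\cdot d(x,x')}$, not $f(x)\cdot e^{-\beta\cdot d(x,x')}$), and both of your additional observations --- that $f\geq 0$ is needed to multiply through by $f(x^{*})$ and is the setting in which the lemma is applied, and that the maximum should be read as a supremum handled via $\varepsilon$-approximate maximizers --- are the right points of care.
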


\subsection{Smooth upper bounds of functions and derivatives}\label{sec:concretesmoothub}

We now show how to find $\beta$-smooth upper bounds for some particular functions. 
For univariate functions, we consider ordinary derivatives over $\RR$, where the norm is the absolute value. For multivariate functions, we can find the derivative w.r.t. a certain norm that depends on the function, and in general it it is not possible to compute the derivative over an arbitrary norm. We discuss how to solve this problem in Sec.~\ref{sec:qdbnorms}.

To achieve differential privacy, we will also need to find smooth upper bounds on their \emph{derivatives}. Suppose we have a function $f : \RR^n \rightarrow \RR$ and want to find $\beta$-smooth upper bounds on both $f$ and $\dersens{f}$. Let us denote them $\ubf{f}$ and $\ubdsf{f}$ respectively. Below, we give the smooth upper bounds for certain basic functions and constructions. These can be composed to a certain extent and some examples of that can be found in App.~\ref{app:functions}. An example of computing smooth upper bound on derivative sensitivity for a particular function can be found in App.~\ref{app:theoryexample2}.

\textbf{Power function.} Let $f(x) = x^r, r\in\RR_+, x>0$. We have
\[ \frac{f'(x)}{f(x)} = \frac{rx^{r-1}}{x^r} = \frac{r}{x} \]
\[ \left|\frac{r}{x}\right| \le \beta \Leftrightarrow x \ge \frac{|r|}{\beta} \]
For $x \leq \frac{r}{\beta}$, the function $f'(x)$ achieves its maximum at the point $\frac{r}{\beta}$. By Lemma~\ref{lm:smoothub}, a $\beta$-smooth upper bound on $f$ is
\[ \ubf{f}(x) =
\begin{cases}
x^r & \mbox{if $x \ge \frac{r}{\beta}$} \\
e^{\beta x - r} \left(\frac{r}{\beta}\right)^r & \mbox{otherwise}
\end{cases} \]

If $r\ge 1$, we may also find a smooth upper bound on the derivative sensitivity $\dersens{f}$ of $f$. We have
\[ \dersens{f}(x) = |f'(x)| = |r|x^{r-1} \enspace.\]
A $\beta$-smooth upper bound on $\dersens{f}$ is
\[ \ubdsf{f}(x) =
\begin{cases}
rx^{r-1} & \mbox{if $x \ge \frac{r-1}{\beta}$} \\
re^{\beta x - (r-1)} \left(\frac{r-1}{\beta}\right)^{r-1} & \mbox{otherwise}
\end{cases} \]

\textbf{Identity.} Let $f(x) = x, x\in\RR$. Here $x$ may also be negative. By Lemma~\ref{lm:smoothub},  a $\beta$-smooth upper bound on $f$ is
\[ \ubf{f}(x) =
\begin{cases}
|x| & \mbox{if $|x| \ge \frac{1}{\beta}$} \\
\frac{e^{\beta |x| - 1}}{\beta} & \mbox{otherwise}
\end{cases} \]

The upper bound on $\dersens{f}(x)$ is much simpler. We have
\[ \dersens{f}(x) = 1 \enspace,\]
which is trivially $\beta$-smooth for any $\beta$.

\textbf{Exponent.} Let $f(x) = e^{rx}, r\in\RR, x\in\RR$. We have $\dersens{f}(x) = |f'(x)| = |r|e^{rx}$, hence:
\begin{itemize}
\item $\abs{\frac{f'(x)}{f(x)}} = \frac{re^{rx}}{e^{rx}} = r$ ;
\item $\abs{\frac{f''(x)}{f(x)}} = \frac{r^2 e^{rx}}{re^{rx}} = r$.
\end{itemize}
Thus both $f$ and $\dersens{f}$ are $\beta$-smooth if $|r| \le \beta$. Since $r$ is a constant that does not depend on $x$, we constrain ourselves to functions that satisfy this conditions, and for larger $r$ we may just increase $\beta$, which in turn increases the noise of differential privacy.

\textbf{Sigmoid.} Consider the (sigmoid) function $\sigma(x) = \frac{e^{\alpha x}}{e^{\alpha x} + 1}$. This function can be viewed as a continuous approximation of the indicator function $I_{\RR_+}:\RR\to \set{0,1}$, which is less precise for values close to $0$, and the error decreases when $\alpha$ increases. We have:
\begin{itemize}
\item $\sigma'(x) = \frac{\alpha e^{\alpha x}}{(e^{\alpha x} + 1)^2}$;
\item $\sigma''(x) = \frac{\alpha^2 e^{\alpha x}(e^{\alpha x} - 1)}{(e^{\alpha x} + 1)^3}$;
\item $\left|\frac{\sigma'(x)}{\sigma(x)}\right| = \left|\alpha\cdot\frac{1}{e^{\alpha x} + 1}\right| \leq \alpha$;
\item $\left|\frac{\sigma''(x)}{\sigma'(x)}\right| = \left|\alpha \cdot \frac{e^{\alpha x} - 1}{e^{\alpha x} + 1}\right| \leq \alpha$.
\end{itemize}
Thus both $\sigma(x)$ and $DS_{\sigma(x)} = \abs{\sigma'(x)}$ are $\alpha$-smooth. If we want less DP noise, we should decrease $\alpha$, which in turn makes the sigmoid itself less precise, so there is a tradeoff.

\textbf{Tauoid.} Consider the function $\tau(x) = \frac{2}{e^{-\alpha x} + e^{\alpha x}}$ (let us call it a \emph{tauoid}). This function can be viewed as a continuous approximation of the indicator function $I_{\set{0}}:\RR \to \set{0,1}$, which works similarly to a sigmoid. We have:
\begin{itemize}
\item $\tau'(x) = - \frac{2\alpha(e^{\alpha x} - e^{-\alpha x})}{(e^{-\alpha x} + e^{\alpha x})^2} = \frac{2\alpha(e^{-\alpha x} - e^{\alpha x})}{(e^{-\alpha x} + e^{\alpha x})^2} = \frac{2\alpha e^{\alpha x}(1 - e^{2\alpha x})}{(1 + e^{2\alpha x})^2}$;
\item $\left|\frac{\tau'(x)}{\tau(x)}\right| = \frac{|\alpha| \cdot |e^{-\alpha x} - e^{\alpha x}|}{e^{-\alpha x} + e^{\alpha x}} \le |\alpha|$;
\item $|\tau'(x)| \le \frac{2|\alpha| e^{\alpha x}}{1 + e^{2\alpha x}} = \frac{2|\alpha|}{e^{-\alpha x} + e^{\alpha x}} = |\alpha|\tau(x) =: \ubdsf{\tau}(x)$;
\item $\ubdsf{\tau}'(x) = |\alpha|\tau'(x)$;
\item $\left|\frac{\ubdsf{\tau}'(x)}{\ubdsf{\tau}(x)}\right| = \left|\frac{\tau'(x)}{\tau(x)}\right| \le |\alpha|$.
\end{itemize}
Thus both $\tau$ itself and $\ubdsf{\tau}$, an upper bound on its derivative sensitivity, are $\alpha$-smooth.


\textbf{An $\ell_p$-norm.} Consider the function $f(x) = \|x\|_p = \left(\sum x_i^p\right)^{1/p}$, $x\in\RR^n,x = (x_1,\ldots,x_n)$. We have

\[ \frac{\partial f}{\partial x_i}(x) = \frac{px_i^{p-1}}{p\left(\sum x_i^p\right)^{(p-1)/p}} =
\frac{x_i^{p-1}}{\left(\sum x_i^p\right)^{(p-1)/p}} = \left(\frac{x_i}{\|x\|_p}\right)^{p-1} \enspace.\]


By Lemma~\ref{lemma:banach-base}, the derivative sensitivity of $f$ in $(\RR^n,\ell_p)$ is
\[ \dersens{f}(x) = \left(\sum \left(\frac{x_i^{p-1}}{\left(\sum x_i^p\right)^{\frac{p-1}{p}}}\right)^{\frac{p}{p-1}}\right)^{\frac{p-1}{p}} =
\left(\sum \frac{x_i^p}{\sum x_i^p}\right)^{\frac{p-1}{p}} = 1\enspace.
\]
This is constant and thus $\beta$-smooth for all $\beta$. The function $f$ itself is $\beta$-smooth if $\frac{1}{\|x\|_p} \le \beta$, i.e. if $\|x\|_p \ge \frac{1}{\beta}$.
By Lemma~\ref{lm:smoothub}, a $\beta$-smooth upper bound on $f$ is
\[ \ubf{f}(x) =
\begin{cases}
\|x\|_p & \mbox{if $\|x\|_p \ge \frac{1}{\beta}$} \\
\frac{e^{\beta\|x\|_p - 1}}{\beta} & \mbox{otherwise}
\end{cases} \]
This also holds for $p = \infty$.

\textbf{The $\ell_{\infty}$-norm.} Let $f(x) = \|x\|_\infty = \max_i \abs{x_i}$. We have
\[ \frac{\partial f}{\partial x_i}(x) =
\begin{cases}
1 & \mbox{if $i = \argmax_j \abs{x_j}$} \\
\mbox{undefined} & \mbox{if $\argmax_j \abs{x_j}$ is not unique} \\
0 & \mbox{otherwise}
\end{cases}
\]
The derivative sensitivity of $f$ in $(\RR^n,\ell_\infty)$ is
\[ \dersens{f}(x) =
\begin{cases}
1 & \mbox{if $\argmax_j \abs{x_j}$ is unique} \\
\mbox{undefined} & \mbox{if $\argmax_j \abs{x_j}$ is not unique} \\
\end{cases}
\]
Because we are interested in upper bounds on the derivative sensitivity, we define
\[ \dersens{f}(x_0) := \limsup_{x\rightarrow x_0} \dersens{f}(x) = 1 \]
for those $x_0$ for which $\dersens{f}(x_0)$ is undefined.
Thus $\dersens{f}(x) = 1$, which is constant and $\beta$-smooth for all $\beta$. The smooth upper bound on the function $f$ itself can be found similarly to the $\ell_p$-norm case.

\textbf{Product.} Let $f:\prod_{i=1}^n X_i \rightarrow \RR,f(x_1,\ldots,x_n) = \prod_{i=1}^n f_i(x_i)$ where $X_i$ are
Banach spaces. Let $X = \prod_{i=1}^n X_i$ and $x = (x_1,\ldots,x_n)$. First, suppose that variables $x_i$ are independent. We have $\frac{\partial f}{\partial x_i}(x) = \prod_{i\neq j=1}^n f_j(x_j)\cdot f_i'(x_i)$, and $\abs{\frac{\partial f}{\partial x_i}(x) \cdot \frac{1}{f(x)}} = \abs{\frac{f_i'(x_i)}{f_i(x_i)}}$, hence:
\begin{itemize}
\item If $\left|\frac{f_i'(x_i)}{f_i(x_i)}\right| \le \beta$, then $f$ is $\beta$-smooth w.r.t. $x_i$.\\
\item By Lemmas~\ref{lemma:banach-base} and~\ref{lemma:banach-combine}, $\norm{df_x} = \norm{\left( \prod_{i\neq j=1}^n f_j(x_j)\cdot f_i'(x_i) \right)_{i=1}^{n}}_{\frac{p}{p-1}}$ in $(X,\ell_p)$, so we have
\[\frac{\norm{df_x}}{\abs{f(x)}} = \frac{\norm{df_x}}{\abs{\prod_{i=1}^n f_i(x_i)}} = \norm{\left( \frac{f_i'(x_i)}{f_i(x_i)} \right)_{i=1}^{n}}_{\frac{p}{p-1}} \leq \norm{\left( \beta_i \right)_{i=1}^{n}}_{\frac{p}{p-1}} \enspace,\]
where $\beta_i$ is the smoothness of $f_i$. Hence, if $f_i$ is $\beta$-smooth w.r.t. $x_i$ for all $i$, then $f$ is $\beta$-smooth in $(X,\ell_1)$ and $n\beta$-smooth in $(X,\ell_\infty)$.
\end{itemize}
The derivative sensitivity of $f$ w.r.t. $x_i$ is $c_i(x) = \dersens{f_i}(x_i) \cdot \left|\frac{f(x)}{f_i(x_i)}\right|$.
The derivative sensitivity of $f$ in $(X,\ell_p)$ is, by Lemma~\ref{lemma:banach-combine},
\[ \dersens{f}(x) = \|(c_1(x),\ldots,c_n(x))\|_{\frac{p}{p-1}} = \left\|\left(\frac{\dersens{f_i}(x_i)}{|f_i(x_i)|}\right)_{i=1}^n\right\|_{\frac{p}{p-1}} \cdot |f(x)|\enspace. \]

We have $c_i(x) = \dersens{f_i}(x_i) \cdot \abs{\frac{f(x)}{f_i(x_i)}} = \dersens{f_i}(x_i) \cdot \prod_{j\neq i} \abs{f_j(x_j)}$. Since $\prod_{j\neq i} \abs{f_j(x_j)}$ does not depend on $x_i$ and $\dersens{f_i}(x_i) \geq 0$, by Lemma~\ref{lm:sumprod}, if $\dersens{f_i}$ is $\beta$-smooth in $X_i$ then $c_i(x)$ is also $\beta$-smooth in $X_i$. Similarly, if $f_j(x_j)$ is $\beta$-smooth, then $c_i(x)$ is also $\beta$-smooth in $X_j$. Hence, if $f_i$ and $\dersens{f_i}$ are $\beta$-smooth for all $i$, by Lemma~\ref{lm:lpnorm2}, $\dersens{f}$ is
$\beta$-smooth in $(X,\ell_1)$ and $n\beta$-smooth in $(X,\ell_\infty)$.
If $\dersens{f_i}$ are not all $\beta$-smooth then we can use their $\beta$-smooth upper bounds when computing $c_i$.
Then we get a $\beta$-smooth upper bound on $\dersens{f}$ instead of the actual $\dersens{f}$.

We may also consider the case where the variables $x_i$ are fully
dependent, i.e. equal (the case where they are partially dependent is currently not considered).
Consider a function $f(x) = g(x) \cdot h(x)$ where $g,h : X \rightarrow \RR_+$ and $X$ is a Banach space. We have
\[ \dersens{f}(x) = g(x) \cdot \dersens{h}(x) + h(x) \cdot \dersens{g}(x) \enspace.\]
By Lemma~\ref{lm:sumprod}, if $g$ is $\beta_g$-smooth, $h$ is $\beta_h$-smooth, $\dersens{g}$ is $\beta_{g'}$-smooth, and $\dersens{h}$ is
$\beta_{h'}$-smooth, then $\dersens{f}$ is $\max(\beta_g + \beta_{h'}, \beta_{h} + \beta_{g'})$-smooth.
The function $f$ itself is $(\beta_g + \beta_h)$-smooth.

\textbf{Sum.} Let $f:\prod_{i=1}^n X_i \rightarrow \RR,f(x_1,\ldots,x_n) = \sum_{i=1}^n f_i(x_i)$ where $X_i$ are
Banach spaces. Let $X = \prod_{i=1}^n X_i$ and $x = (x_1,\ldots,x_n)$. First, suppose that the variables $x_i$ are independent.
The derivative sensitivity of $f$ w.r.t. $x_i$ is $\dersens{f_i}(x_i)$.
By Lemmas~\ref{lemma:banach-base} and~\ref{lemma:banach-combine}, the derivative sensitivity of $f$ in $(X,\ell_p)$ is $\dersens{f}(x) = \|\dersens{f_1}(x_1),\ldots,\dersens{f_n}(x_n)\|_{\frac{p}{p-1}}$. 

\begin{itemize}
\item Let $f_i \geq 0$ for all $i \in \set{1,\ldots,n}$ (or $f_i \leq 0$ for all $i \in \set{1,\ldots,n}$) and $\beta_i$-smooth w.r.t. $X_i$. Now we have $\abs{f(x)} = \sum_{i=1}^n \abs{f_i(x_i)} = \norm{\abs{f_i(x_i)}_{i=1}^n}_1$. By Lemma~\ref{lm:lpnorm}, $f(x)$ is $\beta := \max_i(\beta_i)$-sensitive in $(X,\ell_p)$. We do not get a good bound in the case when $f_i$ may have different signs, since then $f_i(x)$ may cancel each other out and make $f(x)$ arbitrarily small even if $\abs{f_i(x)}$ are large.
\item Let $\dersens{f_i}$ be $\beta_i$-smooth for $i \in \set{1,\ldots,n}$. By Lemma~\ref{lm:lpnorm}, $\dersens{f}$ is $\norm{(\beta_i)_{i=1}^n}_{\frac{p}{p-1}}$-smooth in $(X,\ell_p)$, and if all $\dersens{f_i}$ are $\beta$-smooth, then $\dersens{f}$ is also $\beta$-smooth.
\end{itemize}

Consider the case where $x_i$ are equal: $f(x) = \sum_{i=1}^n g_i(x)$ where $g_i : X \rightarrow \RR$ and $X$ is a Banach
space. Then
\[ \dersens{f}(x) = \sum_{i=1}^n \dersens{g_i}(x) \]
By Lemma~\ref{lm:sumprod}, if all $\dersens{g_i}$ are $\beta$-smooth then $\dersens{f}$ is $\beta$-smooth.
If all $g_i$ are non-negative and $\beta$-smooth then $f$ is $\beta$-smooth.

\textbf{Min / max.} Let $f:\prod_{i=1}^n X_i \rightarrow \RR,f(x_1,\ldots,x_n) = \min_{i=1}^n f_i(x_i)$ where $X_i$ are
Banach spaces (the case with $\max$ instead of $\min$ is similar). Let $X = \prod_{i=1}^n X_i$ and $x = (x_1,\ldots,x_n)$. Let the variables $x_i$ be independent.

If for all $i$, $f_i$ is $\beta$-smooth in $X_i$ then $f$ is $\beta$-smooth in $(X,\ell_p)$. The same holds with $\max$
or sum (with non-negative $f_i$) or $\ell_{p'}$-norm instead of $\min$.

The derivative sensitivity of $f$ w.r.t. $x_i$ is $\dersens{f_i}(x_i)$ if $i = \argmin f_i(x_i)$ and $0$ otherwise.
The derivative sensitivity of $f$ in $(X,\ell_p)$ is $\dersens{f}(x) = \dersens{f_i}(x_i)$ where $i = \argmin f_i(x_i)$.
In general, $\dersens{f}$ is discontinuous at points where $\argmin f_i(x_i)$ is not unique.
One $\beta$-smooth (in $(X,\ell_p)$) upper bound on $\dersens{f}$ is $\max c_i(x_i)$ where $c_i$ is a $\beta$-smooth upper bound on
$\dersens{f_i}$.

\textbf{Norm scaling.} Let $f:X\rightarrow\RR$ in the Banach space $(X,\norm{\cdot})$. Scaling the norm by $a$ scales the derivative $f'(x)$ by $\frac{1}{a}$ while keeping the value of $f(x)$ the same. Hence, if $f$ is $\beta$-smooth in $(X,\norm{\cdot})$ then it is
$\frac{\beta}{a}$-smooth in $(X,a\cdot \norm{\cdot})$. 

Let $c(x)$ be a $\beta$-smooth upper bound on the derivative sensitivity of $f$ at $x$ in $(X,\norm{\cdot})$.
Then $\frac{c(x)}{a}$ is a $\frac{\beta}{a}$-smooth upper bound on the derivative sensitivity of $f$ at $x$ in $(X,a\cdot\norm{\cdot})$.

This construction will be very useful in the cases when we want to compute the sensitivity of some function w.r.t. a different norm.

\textbf{Constants.} Consider the constant function $f(x) = c, f:X\rightarrow\RR$ where $X$ is a Banach space (with arbitrary norm).
Then $f$ is $0$-smooth, $\dersens{f}(x) = 0$, and $\dersens{f}$ is also $0$-smooth.

\textbf{Composition with a real function.} Let $f(x) = h(g(x)), x\in X$ where $g : X \rightarrow \RR, h : \RR \rightarrow \RR$ and $X$ is a Banach space.
\[ \dersens{f}(x) = |h'(g(x))| \cdot \dersens{g}(x) \]
\[ \frac{\dersens{f}(x)}{|f(x)|} = \frac{|h'(g(x))|}{|h(g(x))|} \cdot \dersens{g}(x) \]
Suppose that $h$ is $\beta_h$-smooth and $\dersens{g}(x) \le B$ for all $x$. Then $f$ is $\beta_h B$-smooth.
\[ \dersens{\dersens{f}}(x) = |h''(g(x))| (\dersens{g}(x))^2 + |h'(g(x))| \cdot \dersens{\dersens{g}}(x) \]
\[ \frac{\dersens{\dersens{f}}(x)}{\dersens{f}(x)} = \frac{|h''(g(x))|}{|h'(g(x))|} \cdot \dersens{g}(x) + \frac{\dersens{\dersens{g}}(x)}{\dersens{g}(x)} \]
By Lemma~\ref{lm:sumprod}, if $h'$ is $\beta_{h'}$-smooth, $\dersens{g}$ is $\beta_{g'}$-smooth, and $\dersens{g}(x) \le B$ for all $x$ then
$\dersens{f}$ is $(\beta_{h'} B + \beta_{g'})$-smooth.

\section{Application to SQL Queries}\label{sec:sqlapplication}

In this section, we describe how the theory of Sec.~\ref{sec:banach-theory} has been applied to SQL queries. Our theory deals with functions that return a numeric value, so the query should end up in a single output. Another constraint is that the theory only supports continuous functions. That is, we consider queries of the form
\[\texttt{SELECT}\ \textit{aggr expr}\ \texttt{FROM t1 AS s1,...,tn AS tn WHERE}\ \textit{condition}\enspace,\]
where:
\begin{itemize}
\item \emph{expr} is an expression over table columns, computed as a continuous function. 
\item \emph{condition} is a boolean expression over predicates $P(x) \in \set{x < 0, x = 0}$, where $x$ is an expression of the same form as \emph{expr}. Since all functions have to be continuous, these predicates are computed using \emph{sigmoid} and \emph{tauoid} functions defined in Sec~\ref{sec:banach-smoothing}.

\item $aggr \in \set{\mathtt{SUM}, \mathtt{COUNT}, \mathtt{PRODUCT}, \mathtt{MIN}, \mathtt{MAX}}$ is implemented using corresponding functions of Sec.~\ref{sec:banach-smoothing}.
\end{itemize}

\subsection{Banach spaces of databases}\label{sec:banach-db}

Suppose we have a database of $n$ tables. The schema of each table, and its number of rows are fixed. 

Our ultimate goal is to enforce differential privacy w.r.t. a certain \emph{component} of a database. A component corresponds to a subset of columns and rows of a data table. The set of all (sensitive) entries can be viewed as a vector over $\RR^m$. By default, the $\ell_{\infty}$-norm of all sensitive table entries can be computed, so that computed noise is enough to ensure differential privacy w.r.t. \emph{each} sensitive entry: $\ell_{\infty}$-norm changes by $d$ if any (or all) of its arguments change by $d$. Alternatively, it can be reasonable to combine the norm of several columns, e.g. treating the Euclidean norm of space coordinates as distance. This depends on the data that the owner actually wants to hide, so we allow to define a customized norm for each data table. We assume that the components are independent, i.e. if a certain component's value has changed, it will not prompt a change in some other component's value. The vector of sensitive entries together with the norm forms a Banach space.

Let $T_i = R_i^{n_i}$ be the Banach space for the
$i$-th table, where $n_i$ is the number of rows in the $i$-th table and $R_i$ is the Banach space of potential values of a row of the $i$-th table.
The input contains a tuple of $n$ tables, which is an element of $D = T_1 \times \cdots \times T_n$. We can make $D$ a
Banach space by combining the norms of $T_i$ using any $\ell_p$-norm.  If we take $\ell_{\infty}$-norm, then we achieve differential privacy w.r.t. each table $T_i$ at once.

\subsubsection{Query without a filter.}
To make a query on the database, we want to join those $n$ tables. Consider an input $(t_1,\ldots,t_n) \in D$.
Let us first consider the cross product $t = t_1 \times \cdots \times t_n$, i.e. joining without any filters.
First, let us assume that the $n$ joined tables are distinct, i.e. no table is used more than once.
Each row of the cross product is an element of $R = R_1 \times \cdots \times R_n$, thus $t$ is an element of
$T = R^{n_1\cdots n_n}$.

The query contains an aggregating function $f : T \rightarrow \RR$. All non-sensitive entries of the data tables are treated as constants. Our analyzer supports the real-valued functions listed in Sec.~\ref{sec:banach-smoothing}; let us call them \emph{basic} functions. These functions are summarized in Table~\ref{tab:funbasic}, where $x_i$ belongs to some Banach space $X_i$ with norm $N_i$, and $LUB(N_1,\ldots,N_n)$ is a norm $N$ such that $N_i \preceq N$ for all $i \in \set{1,\ldots,n}$. The column ``Norm'' in Table~\ref{tab:funbasic} indicates, according to which norm(s) of $T$ we can ``naturally'' compute the smooth derivative sensitivity of $f$. We allow some more functions, which do not not require any additional smoothness analysis and are used as syntactic sugar, being reduced to the basic functions. These functions are given in Table~\ref{tab:funext}.

\begin{table}[t]
\caption{Basic functions}\label{tab:funbasic}
\begin{center}
\begin{tabular}{| c | c | c |}
\hline
 Function & Variables  & Norm \\ 
          & may repeat &      \\
\hline
\hline
$\min(x_1,\ldots,x_n)$ & no & $\norm{N_1,\ldots,N_n}_p$ \\ \hline
$\max(x_1,\ldots,x_n)$ & no & $\norm{N_1,\ldots,N_n}_p$ \\ \hline
$e^{rx}$ & -- & $N$ \\ \hline
$x^r$ (for $r \in \RR^{+}$) & -- & $N$ \\ \hline
$\sum(x_1,\ldots,x_n)$ & no & $\norm{N_1,\ldots,N_n}_p$ \\ \hline
$\prod(x_1,\ldots,x_n)$ & no & $\norm{N_1,\ldots,N_n}_p$ \\ \hline
$\sum(x_1,\ldots,x_n)$ & yes & $LUB(N_1,\ldots,N_n)$ \\ \hline
$\prod(x_1,\ldots,x_n)$ & yes & $LUB(N_1,\ldots,N_n)$ \\ \hline
$\norm{x_1,\ldots,x_n}_p$ & no & $\norm{N_1,\ldots,N_n}_p$ \\
\hline
\end{tabular}
\end{center}
\end{table}

\begin{table}[t]
\caption{Extended functions}\label{tab:funext}
\begin{center}
\begin{tabular}{| c | c | c |}
\hline
 Function & Reduction to basic & Norm \\
\hline
\hline
$|x|$ & $\norm{x}_1$ & $|x|$\\ \hline
$x^r$ (for $r \in \RR^{-}$) & $e^{-r\ln(x)}$ & $\ln(x)$ \\ \hline
$x / y$ & $x \cdot e^{-\ln(y)}$ & $\norm{x,\ln(y)}_p$\\
\hline
\end{tabular}
\end{center}
\end{table}

We can consider $t\in T$ as one large table whose number
of rows is the product of the numbers of rows of tables $t_i \in T_i$ and number of columns is the sum of the numbers of
columns of tables $t_i \in T_i$. We can compute the derivative
sensitivity of $f$ w.r.t. the components of $t$ specified by a subset of rows and a subset of columns of $t$ as follows.

Suppose we want to compute the derivative sensitivity of $f$ w.r.t. a row $r$ of $t_i$. We can compute the
sensitivity w.r.t. the following component of $t$: the subset of rows is the set of rows affected by $r$, i.e.
$t_r = t_1 \times \cdots \times t_{i-1} \times \{r\} \times t_{i+1} \times \cdots \times t_n$, the subset of columns is the
set of columns corresponding to table $t_i$. To perform the computation, we also need to specify a norm for combining
the rows of $t$. Because changing $r$ by distance $d$ changes each row of $t_r$ by distance $d$, we must combine the
rows of $t_r$ by $\ell_\infty$-distance to ensure that $t_r$ also changes by distance $d$.

Suppose we now want to compute the sensitivity of $f$ w.r.t. a subset $s$ of rows of $t_i$.
Then the subset of rows affected by $s$, is
$t_s = t_1 \times \cdots \times t_{i-1} \times s \times t_{i+1} \times \cdots \times t_n$.
Each row in $t_s$ is affected by exactly one row in $s$.
Let $s = {r_1,\ldots,r_k}$ and let $t_s = \bigcup_{j=1}^k t_{r_j}$ where $t_{r_j}$ is the subset of rows affected by
$r_j$. The $t_{r_j}$ are disjoint. Let $t_{r_j} = \{u_{j1},\ldots,u_{jm_j}\}$. Then the norm for $u_{jm}$ is the same as
the norm for $r_j$, with the additional columns having zero norm. The norm for $t_{r_j}$ is computed by combining the
norms for $u_{jm}$ using $\ell_\infty$-norm. The norm for $t_s$ is then computed by combining the norms for
$t_{r_j}$ using the norm that combined the norms of the rows of $t_i$, i.e. $\ell_{p_i}$.

\subsubsection{Query with a filter.}

A filter that \emph{does not depend} on sensitive data can be applied directly to the cross product of the input tables, and we may then proceed with the query without a filter. A filter that \emph{does depend} on sensitive data is treated as a part of the query. It should be treated as a continuous function, applied in such a way that the the discarded rows would be ignored by the aggregating function. We combine sigmoids and tauoids to obtain the approximated value of the indicator $\sigma(x_i) \in \set{0,1}$, denoting whether the row $x_i$ satisfies the filter. Let $f_i$ be the function applied to the row $x_i$ before aggregation. The value $\sigma(x_i)$ is treated by different aggregators as follows.

\begin{itemize}
\item $\mathrm{SUM}$. The values $0$ do not affect the sum, so we may compute sum over $f_i(x_i) \cdot \sigma(x_i)$.

\item $\mathrm{COUNT}$: We may apply the $\mathrm{SUM}$ function directly to the output of $\sigma(x_i)$, counting all entries for which $\sigma(x_i) = 1$. In this case, we do not use $f_i$ at all, and the sensitivity only depends on the sensitivity of $\sigma$.

\item $\mathrm{PRODUCT}$: The values discarded by the filter should be mapped to $1$, so that they would not affect the product. The simplest way to do it is to take $\sigma(x_i)\cdot f_i(x_i) + (1 - \sigma(x_i))$. Here the problem of sigmoid approach is that the accuracy error grows fast with the number of rows due to multiplication.

\item $\mathrm{MIN}$, $\mathrm{MAX}$: We need to map $f_i(x_i)$ to special values that would be ignored by these functions. If we take $f_i(x_i) \cdot \sigma(x_i)$, it will be correct only if all inputs of $\mathrm{MAX}$ have non-negative values, or all inputs of $\mathrm{MIN}$ have non-positive values. In general, we need to map dropped values to $\max_i{f(x_i)}$ for MIN, and $\min_i{f(x_i)}$ for $\mathrm{MAX}$, doing it only for the rows where $\sigma(x_i) = 0$, and keep $f_i(x_i)$ for the rows where $\sigma(x_i) = 1$. One possible solution is to add or subtract the largest difference of any two values $\Delta(x_1,\ldots,x_n) := \mathrm{MAX}(f(x_1),\ldots,f(x_n)) - \mathrm{MIN}(f(x_1),\ldots,f(x_n))$, where the $\mathrm{MIN}$ and $\mathrm{MAX}$ are now applied to \emph{all} rows. Instead of applying $\mathrm{MIN}$ directly to $f_i(x_i)$, we apply it to $f_i(x_i) + (1 - \sigma(x_i)) \cdot \Delta(x_1,\ldots,x_n)$. Similarly, we apply $\mathrm{MAX}$ to $f_i(x_i) - (1 - \sigma(x_i)) \cdot \Delta(x_1,\ldots,x_n)$. We note that the answer would be wrong in the case when no rows satisfy the filter. We find it fine, since in that case a special N/A value should be returned, and the fact that ``the answer is N/A'' may itself be sensitive, so uncovering it would in general violate differential privacy.
\end{itemize}

If we know that the compared values are integers and hence $d(x,x') \geq 1$ for $x \neq x'$, we can do better than use sigmoids or tauoids, defining precise functions:
\begin{itemize}
\item $x > y \iff \min(1, \max(0, x - y))$.
\item $x = y \iff 1 - \min(1,\max(0, \abs{x1 - x2}))$.
\end{itemize}
If the filter expression is a complex boolean formula over several conditions, an important advantage of these functions is that they do not lose precision due to addition and multiplication.

For real numbers, we may bound the precison and assume e.g. that $d(x,x') \geq 1/k$ for some $k \geq 1$, which allows to use similar functions. The sensitvity of such comparisons will be $k$ times larger than for integers.

\subsection{Query Norm vs Database Norm}\label{sec:qdbnorms}

In Sec.~{\REFBANACHSmooth}, we described how to compute smooth upper bounds on function sensitivity w.r.t. certain norms. These \emph{query norms} (denoted $N_{query}$; listed in the last column of Tables~\ref{tab:funbasic} and~\ref{tab:funext}) can be different from the \emph{database norm} (denoted $N_{db}$) specified by the data owner. We describe how our analyser solves this problem, so that the computed sensitivity would not be underestimated, and the computed noise would be sufficient for differential privacy.


Theorem~\ref{thm:dersenscauchy-banach} tells us how to make a function $f$ differentially private, given its derivative sensitivity $\dersens{f}$. We show that the noise that we get from this theorem is valid not only for the particular norm w.r.t. which we have computed $\dersens{f}$, but also w.r.t. any larger norm. The next theorem is proven in App.~\ref{app:proofs}.
\begin{theorem}\label{thm:diffnoisenorm}
Let $\gamma,b,\beta\in\RR_+$, $\gamma>1$. Let $\epsilon=(\gamma+1)(b+\beta)$. Let $\eta$ be a random variable distributed according to $\gencauchy{\gamma}$. Let $c$ be a $\beta$-smooth upper bound on $\dersens{f}$ w.r.t. norm $\norm{\cdot}_N$ for a function $f:X\rightarrow\RR$. Then, $g(x) : f(x)+\frac{c(x)}{b}\cdot\eta$ is $\epsilon$-differentially private w.r.t. any norm $\norm{\cdot}_M \succeq \norm{\cdot}_N$.
\end{theorem}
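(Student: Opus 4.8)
The plan is to reduce the statement to Theorem~\ref{thm:dersenscauchy-banach} by checking that the very bound $c$ which certifies $\epsilon$-DP of $g$ with respect to $\norm{\cdot}_N$ remains a $\beta$-smooth upper bound on $\dersens{f}$ once $X$ is re-equipped with the larger norm $\norm{\cdot}_M$. The proof of Theorem~\ref{thm:dersenscauchy-banach} uses only two facts about $c$, both measured in the norm under consideration: that $c$ dominates the derivative sensitivity pointwise, and that $c$ is $\beta$-smooth. So it suffices to show that these two properties are inherited by $\norm{\cdot}_M$ from $\norm{\cdot}_N$, and then invoke Theorem~\ref{thm:dersenscauchy-banach} verbatim with $X$ carrying the norm $\norm{\cdot}_M$. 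Here ``$\epsilon$-DP w.r.t.\ a norm'' is read as in that proof, namely $\dpdist(g(\vec{x}),g(\vec{x}'))\le\epsilon\norm{\vec{x}'-\vec{x}}$ for all $\vec{x},\vec{x}'$; I would state this explicitly so the reduction is unambiguous.

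The first step is to observe that the Fr\'echet derivative $df_{\vec{x}}$ is the same linear functional irrespective of which of the two norms sits on $X$, while its \emph{operator} norm shrinks as the domain norm grows. Concretely, from $\norm{\cdot}_M\succeq\norm{\cdot}_N$, i.e.\ $\norm{\vec{h}}_M\ge\norm{\vec{h}}_N$ for all $\vec{h}$, and the definition of the operator norm as $\sup_{\vec{h}\neq 0}|df_{\vec{x}}(\vec{h})|/\norm{\vec{h}}$, every quotient can only decrease when the denominator is replaced by the larger $\norm{\vec{h}}_M$. Hence $\dersens{f}$ computed in $\norm{\cdot}_M$ is pointwise no larger than $\dersens{f}$ computed in $\norm{\cdot}_N$, and so still lies below $c$. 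The same comparison of quotients shows that $f$ stays Fr\'echet differentiable w.r.t.\ $\norm{\cdot}_M$ with the identical derivative (since $\norm{\vec{h}}_M\to 0$ forces $\norm{\vec{h}}_N\to 0$ and the Fr\'echet remainder divided by $\norm{\vec{h}}_M$ is bounded by the same quantity divided by $\norm{\vec{h}}_N$), so $\dersens{f}$ is well defined in the new norm.

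The second step transfers smoothness. From $\beta$-smoothness of $c$ w.r.t.\ $\norm{\cdot}_N$ we have $c(\vec{x})\le e^{\beta\norm{\vec{x}'-\vec{x}}_N}c(\vec{x}')$ for all $\vec{x},\vec{x}'$; since $\beta\ge 0$, $c(\vec{x}')\ge 0$, and the exponential $e^{\beta t}$ is nondecreasing in $t$, replacing $\norm{\vec{x}'-\vec{x}}_N$ by the larger $\norm{\vec{x}'-\vec{x}}_M$ preserves the inequality, giving $c(\vec{x})\le e^{\beta\norm{\vec{x}'-\vec{x}}_N}c(\vec{x}')\le e^{\beta\norm{\vec{x}'-\vec{x}}_M}c(\vec{x}')$, which is exactly $\beta$-smoothness w.r.t.\ $\norm{\cdot}_M$. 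With both properties in hand, Theorem~\ref{thm:dersenscauchy-banach} applied to $X$ under $\norm{\cdot}_M$ delivers the claimed $\epsilon$-DP.

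The only genuinely non-routine point is the monotonicity in the first step: the fact that enlarging the norm on the domain can only decrease the operator norm of the derivative, while leaving the derivative itself (and differentiability) intact. Everything after that is a mechanical rereading of the hypotheses of Theorem~\ref{thm:dersenscauchy-banach} in the enlarged norm, so I expect no further obstacle.
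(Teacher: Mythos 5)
Your proof is correct and follows essentially the same route as the paper: the paper phrases your key monotonicity step as the reversal of the ordering on dual norms ($\norm{\cdot}_M \succeq \norm{\cdot}_N$ implies $\norm{\cdot}_{N'} \succeq \norm{\cdot}_{M'}$), which for the linear functional $df_{\vec{x}}$ is exactly your observation that enlarging the domain norm can only shrink the operator norm of the derivative. Your explicit checks that Fr\'echet differentiability and $\beta$-smoothness transfer to the larger norm are details the paper leaves implicit, but the argument is the same.
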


By Theorem~\ref{thm:diffnoisenorm}, we do not need to change anything if $N_{query} \preceq N_{db}$, as the noise would be sufficient for differential privacy w.r.t. $N_{db}$ (although scaling may still be useful to reduce the noise). However, we will need to modify the query if $N_{query} \not\preceq N_{db}$. In this section, we describe how to do it without affecting the output of the initial query.

\subsection{Adjusting norms through variable scaling}\label{sec:rescaling}
Let $N_{query}$ be a query norm, and let $N_{db}$ be a database norm, both defined over the same variables $x_1,\ldots,x_m$. As described in Sec.~{\REFBANACHSmooth}, we know how to rescale norms in such a way that $N_{query}$ will change while the query itself remains the same. The main idea behind adjusting the query is to add scalings to $N_{query}$, so that we eventually get $N_{query} \preceq N_{db}$. The outline of our methods is the following:
\begin{enumerate}
\item Find $N'_{query} \succeq N_{query}$ and $N'_{db} \preceq N_{db}$, such that there exist $\alpha_1,\ldots,\alpha_m \in \RR^{+}$ satisfying the inequality
\[\norm{\alpha_1 x_1, \ldots, \alpha_m x_m}_{N'_{query}} \leq \norm{x_1,\ldots,x_m}_{N'_{db}}\]
for all $(x_1,\ldots,x_m) \in \RR^m$. Here $x_i$ can be norm variables as well as composite norms over mutually disjoint variables.
\item Apply the scalings $\alpha_1,\ldots,\alpha_m$ to $x_1,\ldots,x_m$. By Lemma~\ref{lm:scaleineq},
\begin{eqnarray*}
\norm{\alpha_1 x_1, \ldots, \alpha_m x_m}_{N_{query}} & \leq & \norm{\alpha_1 x_1, \ldots, \alpha_m x_m}_{N'_{query}} \\
& \leq & \norm{x_1,\ldots,x_m}_{N'_{db}} \\
& \leq & \norm{x_1,\ldots,x_m}_{N_{db}} \enspace.
\end{eqnarray*}

\item Modify the query in such a way that all norms $|x_i|$ are substituted by the norm $\alpha_i \cdot |x_i|$. The norm of this new query is $\norm{\alpha_1 x_1, \ldots, \alpha_m x_m}_{N_{query}} \leq  \norm{x_1,\ldots,x_m}_{N_{db}}$, so by Theorem~\ref{thm:diffnoisenorm} we get enough noise for differential privacy, based on the derivative sensitivity of this new query.
\end{enumerate}
We will show that scaling is always possible for the {\compbanach} if $N_{db}$ contains all the variables that $N_{query}$ does. Note that, if there is a variable that $N_{db}$ does not contain, then it cannot be a sensitive variable, and hence it should be removed also from $N_{query}$, being treated as a constant.

\subsubsection{Straightforward scaling}\label{sec:method1}
First, we describe a straightforward way of finding the scalings $\alpha_1,\ldots,\alpha_m$, that always succeeds for {\compbanach}s, but is clearly not optimal. We assume that $N_{db}$ contains all the variables that $N_{query}$ does. We assume that all scalings inside the composite norm are pushed directly in front of variables, which is easy to achieve using the equality $\alpha \norm{x_1,\ldots,x_n} = \norm{\alpha x_1,\ldots,\alpha x_n}$.

We use Lemma~\ref{lm:hammer} to construct the intermediate norms $N'_{query}$ and $N'_{db}$ discussed in Sec.~\ref{sec:rescaling}. Given two norms $\norm{x_1,\ldots,x_m}_{N_{query}}$ and $\norm{x_1,\ldots,x_n}_{N_{db}}$, such that $m \leq n$ (i.e. each variable of $N_{query}$ is also present in $N_{db}$) we do the following:
\begin{enumerate}
\item Using Lemma~\ref{lm:hammer}, find $\alpha_1,\ldots,\alpha_m$ and $p$ such that \\$\norm{x_1,\ldots,x_m}_{N_{query}} \leq \norm{\alpha_1 x_1,\ldots,\alpha_m x_m}_p$.
\item Using Lemma~\ref{lm:hammer}, find $\beta_1,\ldots,\beta_n$ and $q$ such that \\$\norm{\beta_1 x_1,\ldots,\beta_n x_n}_q \leq \norm{x_1,\ldots,x_n}_{N_{db}}$.
\item Take $\gamma_i := \min{(\alpha_i, \beta_i)}$. If $p \leq q$, take $\gamma := n^{1/p-1/q}$, otherwise $\gamma := 1$. Since Lemma~\ref{lm:hammer} holds for all $x_i$, we have  $\norm{\frac{\gamma_1}{\alpha_1}x_1,\ldots,\frac{\gamma_m}{\alpha_m}x_m}_{N_{query}} \leq \norm{\gamma_1 x_1,\ldots,\gamma_m x_m}_p$.

Since $n > m$ and $\gamma_i \leq \beta_i$, applying Lemma~\ref{lm:injmap}, we get \\ $\norm{\gamma_1 x_1,\ldots,\gamma_m x_m}_q \leq \norm{\beta_1 x_1,\ldots,\beta_n x_n}_q$.

By Fact~\ref{fact:lpeqv}, we get $\gamma\cdot\norm{\gamma_1 x_1,\ldots,\gamma_m x_m}_p \leq \norm{\gamma_1 x_1,\ldots,\gamma_m x_m}_q$. We get that $\gamma\cdot\norm{\frac{\gamma_1}{\alpha_1} x_1,\ldots,\frac{\gamma_m}{\alpha_m} x_m}_{N_{query}}  \leq  \gamma\cdot\norm{\gamma_1 x_1,\ldots,\gamma_m x_m}_p \leq  \norm{\gamma_1 x_1,\ldots,\gamma_m x_m}_q \leq  \norm{\beta_1 x_1,\ldots,\beta_n x_n}_q \leq \norm{x_1,\ldots,x_n}_{N_{db}}$.
\item Add norm scalings to the initial query: scale each variable $x_i$ by $\gamma_i$, and the entire query by $\gamma$.
\end{enumerate}

\subsubsection{A more elaborated scaling}\label{sec:method2}

In Sec.~\ref{sec:method1}, we tried to fit all variables under the same $\ell_p$-norm. This may give us very rough bounds, if there is only a small subnorm that should be matched in $N_{query}$ and $N_{db}$, and the other variables do not need to be scaled at all.

In our second method, we start comparing two {\compbanach}s as terms, starting from the toplevel operation, gradually going deeper and keeping a record of additional scalings that may be necessary for some subterms to satisfy the desired inequality. Let $x$ denote atoms of the terms, and $v_i$ intermediate subterms. When comparing $N_{query}$ and $N_{db}$, we use the following inequalities:

\begin{enumerate}
\item\label{item:ineq1} $x \leq x$ for an atom $x$, all comparisons finally end up in this base case;
\item\label{item:ineq2} $v \leq \norm{v_1,\ldots,v_n}_p$ if $v \leq v_i$ for some $i \in [n]$;
\item\label{item:ineq3}$a \cdot v_a \leq b \cdot v_b$ for $a,b \in \RR$ if $a \leq b$ and $v_a \leq v_b$.
\item\label{item:ineq4} $\norm{v_1,\ldots,v_m}_p \leq \norm{w_1,\ldots,w_n}_q$ if $p \geq q$ and there is an injective mapping $f: [m] \injto [n]$ such that $|v_i| \leq |w_{f(i)}|$ for all $i \in [m]$ (here we use Fact~\ref{fact:lpeqv} and Lemma~\ref{lm:injmap}).
\end{enumerate}

\paragraph{Scaling.} The check (\ref{item:ineq3}) immediately fails if $a > b$, and (\ref{item:ineq4}) fails if $p < q$. In both cases it is easy to solve the problem by scaling. In (\ref{item:ineq3}), if $a > b$, then we scale each variable of $v_a$ by $\frac{b}{a}$, and it suffices to verify $v_a \leq v_b$. In (\ref{item:ineq4}), if $p < q$, then we use Fact~\ref{fact:lpeqv} and scale each variable of $v_1,\ldots,v_m$ by $n^{1/q-1/p}$, and it becomes sufficient to find the injective map.

\paragraph{Regrouping.} It is possible that an injective map does not exist because the term structure is too complicated. For example, it fails for the norms $N_{query} = \norm{\norm{x,y}_2, z}_1$ and $N_{db} = \norm{x, y, z}_1$, since neither $x$ nor $y$, taken alone, is at least as large as $\norm{x,y}_2$. We use Lemma~\ref{lm:ungroup} together with Fact~\ref{fact:lpsingle} to ungroup the variables and make the matching easier. In this particular example, we have $\norm{\norm{x,y}_2, z}_1 \leq \norm{\norm{x}_2,\norm{y}_2, z}_1 = \norm{x, y, z}_1$.

On the other hand, if the same variable is used multiple times in the norms, it is possible that there are several possible injective mappings. For example, if $N_{query} = \norm{\norm{x,y}_2, x}_1$ and $N_{db} = \norm{\norm{x,y}_2, \norm{x,y}_3}_1$, then the matching $x \leq \norm{x,y}_3$ and $\norm{x,y}_2 \leq \norm{x,y}_2$ requires no additional scaling, but $x \leq \norm{x,y}_2$ and $\norm{x,y}_2 \leq \norm{x,y}_3$ is possible only if we scale $x$ and $y$ with $n^{1/3 - 1/2}$. We treat an injective mapping as a weighted matching in a bipartite graph, where the terms are vertices, and the $\leq$ relation forms edges. If any of $\leq$ relations requires additional scaling, we assign to it a weight defined by the scaling. We then look for a minimum-weight perfect matching.

\paragraph{The case of failure.} If the subnorms are too complicated, then ungrouping the variables may still fail to detect similar subterms, and the matching fails. If it happens, then we apply the solution of Sec.~\ref{sec:method2}.


\section{How to choose the norm and the $\varepsilon$}

In the standard definition of DP, we want that the output would be (sufficiently) indistinguishable if we add/remove \emph{one} row to the table. In the new settings, the notion of \emph{unit} can be different. If we scale the norm by $a$ and keep noise level the same, the $\epsilon$ will increase proportionally to $a$. Since there is no standard definition of ``good'' $\varepsilon$, it may be unclear which norm is reasonable. For this, we need to understand what the table attributes mean. For example, if the ship location is presented in meters, and we want to conceal a change in a kilomter, we need to scale the location norm by $0.001$ to capture a larger change.

To give a better interpretation to $\varepsilon$, we may relate it to other security definitions such as guessing probability advantage. Let $X' \subseteq X$ be the subset of inputs for which we consider the attacker guess as ``correct'' (e.g. he guesses ship location precisely enough). Let the posterior belief of the adversary be expressed by the probability distribution $\prpost{\cdot}$. Let the initial distribution of $X$ be $\prpre{\cdot}$, and $\fxpre$ the corresponding probability density function, i.e. $\prpre{X'} = \int_{X'} \fxpre(x) dx$ for $X' \subseteq X$. We need an upper bound on $\prpost{X'}$.

The quantity $\prpost{X'}$ characterizes the knowledge of attacker after seeing the output of a differential privacy mechanism $\noised{q}$. Let $\rndvar{X}$ be the random variable for $x \in  X$. We define $\prpost{X'} := \pr{\rndvar{X} \in X'\ |\ \noised{q}(\rndvar{X}) = y}$, where $y$ is the output that the attacker observed. Since the probability weight of a single element can be $0$ in the continuous case, we actually write $\prpost{X'} := \pr{\rndvar{X} \in X'\ |\ \noised{q}(\rndvar{X}) \in Y}$, where $Y \subseteq \noised{q}(X)$ is any subset of outputs to which $y$ may belong.

\begin{eqnarray*}
\prpost{X'} & = & \pr{\rndvar{X} \in X'\ |\ \noised{q}(\rndvar{X}) \in Y}\\
& = & \frac{\pr{\rndvar{X} \in X'\ \wedge \noised{q}(\rndvar{X}) \in Y}}{\pr{\noised{q}(\rndvar{X}) \in Y}}\\
& = & \frac{\int_{X'} \pr{\noised{q}(x') \in Y} \fxpre(x') dx'}{\pr{\noised{q}(\rndvar{X}) \in Y}}\\
& = & \frac{\int_{X'} \pr{\noised{q}(x') \in Y} \fxpre(x') dx'}{\int_{X} \pr{\noised{q}(x) \in Y} \fxpre(x) dx}\\
& = & \frac{\int_{X'} \pr{\noised{q}(x') \in Y} \fxpre(x') dx'}{\int_{X'} \pr{\noised{q}(x) \in Y} \fxpre(x) dx + \int_{\bar{X'}} \pr{\noised{q}(x) \in Y} \fxpre(x) dx}\\
& = & \frac{1}{1 + \frac{\int_{\bar{X'}} \pr{\noised{q}(x) \in Y} \fxpre(x) dx}{\int_{X'} \pr{\noised{q}(x') \in Y} \fxpre(x') dx'}}\\
& \leq & \frac{1}{1 + \frac{\int_{X''} \pr{\noised{q}(x) \in Y} \fxpre(x) dx}{\int_{X'} \pr{\noised{q}(x') \in Y} \fxpre(x') dx'}} (\forall X'' \subseteq \bar{X'})\enspace.
\end{eqnarray*}

Let $X'' := \set{x\ | \ d(x,x') \leq a}$ for any $a \in \RR$ (e.g. $a := \max_{x \in X}{d(x,x')}$). Differential privacy gives us $\frac{\pr{\noised{q}(x') \in Y}}{\pr{\noised{q}(x) \in Y}} \leq e^{\varepsilon\cdot a}$ for all $x \in X''$. We get
\[\frac{1}{1 + \frac{\int_{X''} \pr{\noised{q}(x) \in Y} \fxpre(x) dx}{\int_{X'} \pr{\noised{q}(x') \in Y} \fxpre(x') dx'}} \leq \frac{1}{1 + \frac{e^{-\varepsilon a}\int_{X''}\fxpre(x) dx}{\int_{X'} \fxpre(x') dx'}} = \frac{1}{1 + e^{-\varepsilon a}\cdot\frac{1 - \prpre{X''}}{\prpre{X'}}}\enspace.\]

The optimal value of $a$ depends on the distribution of $\rndvar{X}$. We defer this research to future work.

Applying this approach to standard DP would give us the probaility of guessing that the record is present in the table. Applying it to component privacy proposed in this work, we could answer questions like ``how likely the attacker guesses that the location of some ship is within 5 miles from the actual location''.

\section{Implementation and Evaluation}\label{sec:eval}

\subsection{Implementation}

Our analyser (available on GitHub) has been implemented in Haskell. As an input, it takes an SQL query, a database schema, and a description of the norm w.r.t. which the data owner wants to achieve differential privacy. It returns \emph{another query} (as a string) that describes the way in which derivative sensitivity should be computed from a particular database. This new query represents the function $c(x)$ such that the additive noise would be $\frac{c(x)}{b} \cdot \eta$ for $\eta \gets \gencauchy{\gamma}$, according to Theorem~\ref{thm:dersenscauchy-banach}. In our analyser, $\gamma = 4$ is fixed, and $b = \varepsilon / (\gamma + 1) - \beta$, where $\beta$ is a parameter given as an input to the analyser (by default, $\beta = 0.1$), and $\varepsilon$ is the desired differential privacy level. The resulting query can be fed to a database engine to evaluate the sensitivity on particular data.

The data owner decides which rows of the database are sensitive an which are not. We assume that each table contains a row $ID$ of unique keys. For each table $X$, we expect a table named $X\_sensRows$ that contains the same column $ID$ of keys, and another column $sensitive$ of boolean values that tell for each row whether it is sensitive or not.


\subsection{Evaluation}

We performed evaluation on 4 x Intel(R) Core(TM) i5-6300U CPU @ 2.40GHz laptop, Ubuntu 16.04.4 LTS, using PostgreSQL 9.5.12.

We have taken the queries of TCP-H set~\cite{tpc-h} for benchmarking. Most of these queries contain GROUP BY constructions and nested aggregating queries that our analyser does not support. Hence, we had to modify these queries. Instead of GROUP BY, we have added filtering that chooses one particular group. Theoretically, it could be possible to compute the entire GROUP BY query this way, substituting it with a number of queries, each aggregating one particular group. However, it would be reasonable only as far as the number of possible groups is small.

The largest challenge was coming from the filters. The queries that contained many filterings by sensitive values required increase in $\beta$ and hence in $\varepsilon$. We had to manually rewrite the filters in such a way that public filters would be easily extractable. For example, if we left the filters as they are or even converted them to CNF form, we may get combinations of public and private filters related by an OR, so that we can no longer push the public part of the filter into WHERE clause. Knowing that some OR filters were obviously mutually exclusive, we replaced (on the analyser level) OR with XOR, which improved sensitivity since instead of $x_1\vee x_2 = x_1 + x_2 - x_1 \cdot x_2$ we got $x_1 + x_2$.

By default, we took $\varepsilon = 1.0$, and $\beta = 0.1$. This choice gives $b = 0.1$, and the additive noise with $78\%$ probability is below $10 \cdot c(x)$, where the value $78\%$ comes from analysing distribution $\gencauchy{4.0}$. Too large value of $\beta$ makes $b$ (and hence the noise) larger, and too small $\beta$ makes the sensitivity larger, so $\beta$ is a parameter that could be optimized. For some queries, it was not possible to achieve smoothness $0.1$. In those cases, we took as small $\beta$ as possible and increased $\varepsilon$ so that we would have $b = 0.1$ to make the sensitivities easier to compare.

We have run our analyser on the TCP-H dataset with scale factors $0.1$, $0.5$, $1.0$, denoting how much data is generated for the sample database. For $1.0$, the size of the largest table is ca $6$ million rows. The table schema, together with numbers of rows for different tables, in given in App.~\ref{app:eval:schema}. We have considered integer, decimal, and date columns of these tables as sensitive, assigning to them different weights, described more precisely in App.~\ref{app:eval:norms}. Sensitive entries have been combined using an $\ell_1$-norm.

We adjusted (as described above) the queries $Q1$ (splitting it to $5$ queries), $Q3 - Q7$, $Q9 - Q10$, $Q12$ (splitting it to $2$ queries), $Q16$, $Q17$, $Q19$ of the TCP-H dataset to our analyser. The queries on which we actually ran the analysis can be found in App.~\ref{app:eval:queries}. The time benchmarks can be found in Table~\ref{tab:benchtime}, and precision in Table~\ref{tab:benchprec}, where $K$ denotes $\cdot 10^3$, $M$ denotes $\cdot 10^6$, and $G$ denotes $\cdot 10^9$. The time spent to generate the query that computes sensitivity is negligible (below $20ms$), as it does not depend on the number of rows, so we only report the time spent to execute that query. In addition to the initial query and the sensitivity, we have also measured time and output the \emph{modified} query, where private filtering is replaced with continuous approximation (sigmoids). The reason is that the sensitivity has been computed for the modified query, and it does not guarantee differential privacy if we add noise to the output of the initial query. The error has been computed as the ratio between the actual result and the difference between the actual and the private results, which is $\frac{\abs{(mod.res + 10\cdot sens) - init.res}}{init.res}\cdot 100$.

The worst precision is achieved for the queries that use several filters over sensitive values. While $b4$ and $b5$ are similar queries, in $b4$ the filtering aims to capture a $3$-unit span, while it is a $12$-unit span in $b5$, and sigmoids better distinguish values that are further apart.
When the two corresponding sigmoids are multiplied, the function gets too low-scaled in the short span. The result seems especially bad for $b10$, where no rows satisfy the filtering in the real query and the correct answer should be $0.0$, but the small errors of single rows aggregate too much for large tables. The error would not seem so large if compared to the result that we would get if all rows actually satisfied the filter, and for sparse filtering the relative error will be larger than for dense filtering.

The time overhead is also larger for queries with multiple sensitive filters, not only because the query itself becomes longer and requires computation of exponent functions, but mainly since the sensitive filters are no longer a part of WHERE clause, and the modified query function needs to be applied to more rows.

\begin{table*}
\begin{center}
\begin{tabular}{|l |r |r |r |r |r |r |r |r |r |}
\hline & \multicolumn{3}{c|}{SF = 0.1} & \multicolumn{3}{c|}{SF = 0.5} & \multicolumn{3}{c|}{SF = 1.0} \\ 
\hline
 & init.query & mod.query & sens. & init.query & mod.query & sens. & init.query & mod.query & sens. \\ 
\hline
\hline
$\mathtt{b1\_1}$ & $205.61$ & $5.83K$ & $15.97K$ & $919.07$ & $29.49K$ & $80.69K$ & $1.76K$ & $57.13K$ & $155.34K$\\ 
$\mathtt{b1\_2}$ & $180.87$ & $5.72K$ & $16.12K$ & $883.25$ & $29.47K$ & $87.0K$ & $1.78K$ & $56.7K$ & $155.42K$\\ 
$\mathtt{b1\_3}$ & $191.4$ & $5.8K$ & $16.54K$ & $1.07K$ & $29.08K$ & $83.59K$ & $1.88K$ & $55.67K$ & $158.74K$\\ 
$\mathtt{b1\_4}$ & $196.88$ & $5.91K$ & $17.25K$ & $1.02K$ & $28.55K$ & $81.27K$ & $1.91K$ & $56.64K$ & $159.98K$\\ 
$\mathtt{b1\_5}$ & $177.97$ & $5.75K$ & $5.82K$ & $1.0K$ & $28.91K$ & $29.62K$ & $1.72K$ & $55.23K$ & $57.13K$\\ 
$\mathtt{b3}$ & $176.42$ & $137.24$ & $473.92$ & $1.03K$ & $722.25$ & $2.44K$ & $337.31$ & $486.79$ & $1.11K$\\ 
$\mathtt{b4}$ & $239.58$ & $9.82K$ & $33.35K$ & $1.18K$ & $46.5K$ & $168.3K$ & $2.16K$ & $92.54K$ & $332.54K$\\ 
$\mathtt{b5}$ & $139.85$ & $304.15$ & $2.55K$ & $1.02K$ & $1.06K$ & $4.73K$ & $1.48K$ & $2.16K$ & $10.65K$\\ 
$\mathtt{b6}$ & $156.01$ & $38.64K$ & $199.75K$ & $795.69$ & $188.32K$ & $946.76K$ & $1.5K$ & $375.49K$ & $1.86M$\\ 
$\mathtt{b7}$ & $190.33$ & $325.91$ & $1.23K$ & $999.83$ & $1.4K$ & $6.01K$ & $2.0K$ & $3.07K$ & $12.68K$\\ 
$\mathtt{b9}$ & $190.48$ & $169.74$ & $6.27K$ & $856.59$ & $800.38$ & $4.81K$ & $1.76K$ & $1.67K$ & $10.25K$\\ 
$\mathtt{b10}$ & $168.22$ & $174.9$ & $588.18$ & $943.38$ & $917.2$ & $3.03K$ & $319.81$ & $1.78K$ & $5.86K$\\ 
$\mathtt{b12\_1}$ & $250.2$ & $16.92K$ & $79.13K$ & $1.39K$ & $87.39K$ & $391.23K$ & $2.79K$ & $163.11K$ & $787.36K$\\ 
$\mathtt{b12\_2}$ & $232.81$ & $7.03K$ & $32.49K$ & $1.25K$ & $36.13K$ & $163.06K$ & $2.39K$ & $67.93K$ & $312.06K$\\ 
$\mathtt{b16}$ & $19.02$ & $214.31$ & $476.22$ & $101.77$ & $1.22K$ & $2.8K$ & $210.06$ & $2.27K$ & $5.96K$\\ 
$\mathtt{b17}$ & $120.45$ & $111.22$ & $312.88$ & $722.6$ & $585.05$ & $1.69K$ & $1.18K$ & $1.17K$ & $3.24K$\\ 
$\mathtt{b19}$ & $165.74$ & $332.22$ & $1.62K$ & $933.98$ & $1.76K$ & $8.47K$ & $1.74K$ & $3.6K$ & $16.02K$\\ 
\hline
\end{tabular}
\end{center}
\caption{Time benchmarks (ms)}\label{tab:benchtime}
\end{table*}

\begin{table*}
\begin{center}
\begin{tabular}{|l |l |r |r |r |r |r |r |r |r |r |r |r |r |}
\hline & & \multicolumn{4}{c|}{SF = 0.1} & \multicolumn{4}{c|}{SF = 0.5} & \multicolumn{4}{c|}{SF = 1.0} \\ 
\hline
 & $\varepsilon$ & init.res & mod.res & sens. & \% error & init.res & mod.res & sens. & \% error & init.res & mod.res & sens. & \% error \\ 
\hline
\hline
$\mathtt{b1\_1}$ & 1.0 & $3.79M$ & $3.55M$ & $1.0$ & $6.18$ & $18.87M$ & $17.7M$ & $1.0$ & $6.2$ & $37.72M$ & $35.38M$ & $1.0$ & $6.2$\\ 
$\mathtt{b1\_2}$ & 1.0 & $5.34G$ & $5.01G$ & $9.96K$ & $6.18$ & $27.35G$ & $25.65G$ & $9.96K$ & $6.2$ & $56.57G$ & $53.06G$ & $9.96K$ & $6.2$\\ 
$\mathtt{b1\_3}$ & 1.0 & $5.07G$ & $4.76G$ & $11.16K$ & $6.18$ & $25.98G$ & $24.37G$ & $11.16K$ & $6.2$ & $53.74G$ & $50.41G$ & $11.16K$ & $6.2$\\ 
$\mathtt{b1\_4}$ & 1.0 & $5.27G$ & $4.95G$ & $12.06K$ & $6.18$ & $27.02G$ & $25.34G$ & $12.06K$ & $6.2$ & $55.89G$ & $52.43G$ & $12.06K$ & $6.2$\\ 
$\mathtt{b1\_5}$ & 1.0 & $148.3K$ & $139.12K$ & $0.0006$ & $6.19$ & $739.56K$ & $693.7K$ & $0.0006$ & $6.2$ & $1.48M$ & $1.39M$ & $0.0006$ & $6.2$\\ 
$\mathtt{b3}$ & 1.0 & $3.62K$ & $1.09K$ & $7.98K$ & $2.13K$ & $3.21K$ & $963.25$ & $7.98K$ & $2.42K$ & $0.0$ & $0.0$ & $0.0$ & 0.0\\ 
$\mathtt{b4}$ & 1.0 & $2.92K$ & $8.58K$ & $0.0068$ & $194.14$ & $14.17K$ & $42.89K$ & $0.0069$ & $202.66$ & $28.07K$ & $85.67K$ & $0.0069$ & $205.18$\\ 
$\mathtt{b5}$ & 1.0 & $5.43M$ & $5.05M$ & $4.96K$ & $5.98$ & $25.28M$ & $23.82M$ & $4.96K$ & $5.56$ & $47.56M$ & $45.33M$ & $4.96K$ & $4.6$\\ 
$\mathtt{b6}$ & 8.5 & $17.45M$ & $17.09M$ & $192.5K$ & $9.03$ & $88.13M$ & $87.69M$ & $196.87K$ & $1.74$ & $181.93M$ & $181.41M$ & $201.81K$ & $0.82$\\ 
$\mathtt{b7}$ & 1.0 & $22.07M$ & $22.12M$ & $22.25K$ & $1.24$ & $95.63M$ & $100.99M$ & $22.51K$ & $5.85$ & $212.11M$ & $219.31M$ & $22.58K$ & $3.5$\\ 
$\mathtt{b9}$ & 1.0 & $30.32M$ & $30.32M$ & $40.0K$ & $1.32$ & $137.73M$ & $137.73M$ & $49.2K$ & $0.36$ & $283.82M$ & $283.82M$ & $49.2K$ & $0.17$\\ 
$\mathtt{b10}$ & 1.0 & $100.31K$ & $111.39K$ & $3.42K$ & $45.15$ & $149.6K$ & $175.47K$ & $3.42K$ & $40.18$ & $0.0$ & $93.54K$ & $3.31K$ & $\infty$\\ 
$\mathtt{b12\_1}$ & 1.0 & $3.12K$ & $9.05K$ & $0.0041$ & $190.47$ & $15.41K$ & $45.15K$ & $0.0041$ & $193.04$ & $30.84K$ & $90.3K$ & $0.0041$ & $192.83$\\ 
$\mathtt{b12\_2}$ & 1.0 & $1.29K$ & $3.65K$ & $0.0041$ & $183.08$ & $6.2K$ & $18.11K$ & $0.0041$ & $191.85$ & $12.37K$ & $36.24K$ & $0.0041$ & $193.01$\\ 
$\mathtt{b16}$ & 4.5 & $4.94K$ & $249.44K$ & $16.69$ & $4.95K$ & $24.51K$ & $1.25M$ & $16.69$ & $5.01K$ & $49.01K$ & $2.51M$ & $16.69$ & $5.02K$\\ 
$\mathtt{b17}$ & 1.0 & $31.54K$ & $265.49K$ & $902.59$ & $770.26$ & $256.24K$ & $1.61M$ & $902.59$ & $531.72$ & $531.93K$ & $3.53M$ & $902.59$ & $565.73$\\ 
$\mathtt{b19}$ & 7.0 & $155.25K$ & $207.81K$ & $27.0K$ & $207.74$ & $1.1M$ & $1.17M$ & $28.23K$ & $31.52$ & $1.73M$ & $2.3M$ & $29.8K$ & $50.84$\\ 
\hline
\end{tabular}
\end{center}
\caption{Precision benchmarks}\label{tab:benchprec}
\end{table*}

\section{Limitations and Future Work}

Our analyser supports queries only of certain form. In particular, it does not support DISTINCT queries. The problems is that, to use derivative sensitivity, we need to use continuous functions in the queries. We do not know how to approximate efficiently a function that removes repeating elements from a list of arguments. Support of GROUP BY queries is limited, allowing grouping only by a non-sensitive attribute, or an attribute whose range is known to be small, since we only protect real-valued outputs, and not the number of outputs.




From the evaluation results, we see that that using numerous filters over sensitive values makes the error high. Clearly, adding noise \emph{before} filtering is the path towards smaller errors. While we believe that the derivative sensitivity framework allows us to locate the points where the noise has to be added, and to determine its magnitude, this is not the topic of the current paper.

\section{Conclusion}


We have started the study of complete norms to define the quantitative privacy properties of information release mechanisms, and have discovered their high expressivity for different kinds of numeric inputs, as well as the principles of the parallel composition of norms in a manner that allows the sensitivity of the information release mechanism to be found. Our results show how the similarity of local sensitivity and the derivative can be exploited in constructing differentially private mechanisms. The result is also practically significant because of the need to precisely model the privacy requirements of data owner(s), for which the flexibility of specifying the metric over possible inputs is a must.

Our results open up the study of the combinations of metrics over more varied types of input data, including categorical data, structured data or data with consistency constraints. Such study would also look for possibilities to express the constraints through suitable combinations of metrics over components. We note that inputs with constraints~\cite{DBLP:journals/tods/KiferM14} or with particular structure (sequences indexed by time points)~\cite{DBLP:conf/stoc/DworkNPR10,DBLP:conf/ccs/ChenMHM17} have been considered in the literature. Our hope is that it is possible to find suitable complete norms that define the metrics that have been used in the privacy definitions for such data, and thereby express these constructions inside our framework.


\bibliographystyle{plain}
\bibliography{references}

\appendix

\section{Example}\label{app:theoryexample}

\subsection{Computing derivative sensitivity w.r.t. different components}\label{app:theoryexample1}
Consider the example of computing differentially privately the time that a ship takes to reach the port. This time can be
expressed as
\[ f(x,y,v) = \frac{\sqrt{x^2+y^2}}{v} \]
\[ f : \RR^3 \rightarrow \RR \]
where $(x,y)$ are the coordinates of the ship (with the port at $(0,0)$) and $v$ is the speed of the ship.
For differential privacy, we need to define distances on $\RR^3$ and $\RR$. Because we want to use Fr\'echet derivatives to
compute the sensitivities, we instead define norms, which then also induce distances.
For $\RR$ it is natural to use the absolute value norm but for $\RR^3$ there are more choices.

Consider the $\ell_1$-distance that we used previously. Then moving the ship by geographical distance $s$ in a direction
parallel or perpendicular to its velocity, changes the whole input by $\ell_1$-distance $s$. But moving the ship in any
other direction changes the whole input by $\ell_1$-distance more than $s$. This is unnatural. We would like the change
not to depend on the direction.

Consider the $\ell_2$-distance. Then moving the ship by geographical distance $s$ always changes the whole input by
$\ell_2$-distance $s$, regardless of direction. If we change the speed of the ship by $u$ (either up or down) then the
whole input changes by $\ell_2$-distance $u$. If we simultaneously move the ship by distance $s$ and change its speed by
$u$ however, and numerically $s = u$ (ignoring the units) then the whole input changes by $\ell_2$-distance
$\sqrt{2}\cdot s$ instead of the more natural $2s$.

Thus it is better to combine $\ell_1$- and $\ell_2$-distances. We first combine the change in the coordinates of the ship using $\ell_1$-norm,
then combine the result with the change in speed using $\ell_2$-norm. Thus
\[ \|(\Delta x, \Delta y, \Delta v)\| = \|(\|(\Delta x, \Delta y)\|_2, \Delta v)\|_1 = \sqrt{(\Delta x)^2 + (\Delta y)^2} + |\Delta v| \]
This still has a problem. Suppose that $x = y = 2000\ \mathrm{km}, v = 20\ \mathrm{km/h}$.
Then changing $v$ to $10\ \mathrm{km/h}$ changes the whole input by the same distance as changing $y$ to
$2010\ \mathrm{km}$. But the former change seems much more important in most cases than the latter.
Thus we scale the geographical change and the speed change by different constants before combining them:
\[ \|(\Delta x, \Delta y, \Delta v)\| = a \sqrt{(\Delta x)^2 + (\Delta y)^2} + b|\Delta v| = \|(\|(a\Delta x, a\Delta y)\|_2, b\Delta v)\|_1 \]
This norm should now be good enough.

Let us now compute the derivatives. We first compute ordinary partial derivatives, i.e. with respect to the absolute
value norm.
\[ \frac{\partial f}{\partial x} = \frac{2x}{2v \sqrt{x^2+y^2}} = \frac{x}{v \sqrt{x^2+y^2}} \]
\[ \frac{\partial f}{\partial y} = \frac{2y}{2v \sqrt{x^2+y^2}} = \frac{y}{v \sqrt{x^2+y^2}} \]
\[ \frac{\partial f}{\partial v} = -\frac{\sqrt{x^2+y^2}}{v^2} \]
Now we compute the derivatives w.r.t. the scaled one-dimensional norms, i.e. $x$ is considered not as an element of
the Banach space $(\RR, |\ |)$ but as an element of the Banach space $(\RR, \|\ \|_x)$ where $\|\Delta x\|_x = a|\Delta x|$.
The operator norm of the Fr\'echet derivative in $(\RR, \|\ \|_x)$ (which is also the derivative sensitivity) is then
\[ \dersens{f^{y,v}}(x) = \|df^{y,v}_x\| = \left|\frac{\partial f}{a\partial x}\right| = \frac{|x|}{a|v| \sqrt{x^2+y^2}} \]
where $f^{y,v}$ is the one-variable function defined by $f^{y,v}(x) = f(x,y,v)$, i.e. $y$ and $v$ are considered as
constants. Similarly, we compute
\[ \dersens{f^{x,v}}(y) = \|df^{x,v}_y\| = \left|\frac{\partial f}{a\partial y}\right| = \frac{|y|}{a|v| \sqrt{x^2+y^2}} \]
\[ \dersens{f^{x,y}}(v) = \|df^{x,y}_v\| = \left|\frac{\partial f}{b\partial v}\right| = \frac{\sqrt{x^2+y^2}}{bv^2} \]
where $f^{x,v}$ and $f^{x,y}$ are the one-variable functions defined by $f^{x,v}(y) = f(x,y,v)$ and $f^{x,y}(v) =
f(x,y,v)$.

Now we use Lemma~\ref{lemma:banach-combine} to combine the Banach spaces $(\RR, \|\ \|_x)$ and $(\RR, \|\ \|_y)$
into the Banach space $(\RR^2, \|\ \|_{xy})$ where
\[ \|(\Delta x,\Delta y)\|_{xy} = \|(\|\Delta x\|_x, \|\Delta y\|_y)\|_2 =
\|(a\Delta x, a\Delta y)\|_2 \]
We get
\[ \|df^v_{(x,y)}\| = \dersens{f^v}(x,y) = \|(\dersens{f^{y,v}}(x),\dersens{f^{x,v}}(y))\|_2 = \frac{\sqrt{2}}{a|v|} \]
Now we again use Lemma~\ref{lemma:banach-combine} to combine the Banach spaces $(\RR^2, \|\ \|_{xy})$ and $(\RR, \|\
\|_v)$ into the Banach space $(\RR^2, \|\ \|_{xyv})$ where
\[ \|(\Delta x, \Delta y, \Delta v)\|_{xyz} = a \sqrt{(\Delta x)^2 + (\Delta y)^2} + b|\Delta v| = \|(\|(a\Delta x, a\Delta y)\|_2, b\Delta v)\|_1 \]
We get
\begin{multline*}\|df_{(x,y,v)}\| = \dersens{f}(x,y,v) = \|(\dersens{f^v}(x,y),\dersens{f^{x,y}}(v))\|_\infty = \\
= \max\left(\frac{\sqrt{2}}{a|v|}, \frac{\sqrt{x^2+y^2}}{bv^2}\right)
\end{multline*}
Thus we have found not only the derivative sensitivity of $f$ ($\dersens{f}$), but also its derivative sensitivities w.r.t. components
(partial derivatives $\dersens{f^{y,v}},\dersens{f^{x,v}},\dersens{f^{x,y}},\dersens{f^v}$). The sensitivities w.r.t. components depend on
other components but this is not a problem because when computing $f(x,y,v)$ differentially privately, even w.r.t.
a component, we need to use all of $x,y,v$ anyway, so why not use all this information for computing the noise level too.

To achieve differential privacy, we need to find a smooth upper bound on the derivative sensitivity. When
differential privacy is required only w.r.t. a component then the sensitivity w.r.t. to the component only needs to be
smoothed for changes in that component (changes in other components are allowed to change the smooth upper bound without
smoothness restriction).

After a smooth upper bound has been found, we can use Theorem~\ref{thm:dersenscauchy-banach} to compute $f$ differentially
privately.

\subsubsection{Finding a smooth upper bound}\label{app:theoryexample2}

Consider the example of computing differentially privately the time it takes for the next ship to reach the port. This time can be
expressed as
\[ f(x_1,y_1,v_1,\ldots,x_n,y_n,v_n) = \min_{i=1}^n \frac{\sqrt{x_i^2+y_i^2}}{v_i} \]
\[ f : \RR^{3n} \rightarrow \RR \]
where $(x_i,y_i)$ are the coordinates of the $i^\mathrm{th}$ ship (with the port at $(0,0)$) and $v_i$ is its speed.

Note that $v_i$ is in the power $-1$ and we do not know how to find the smooth derivative sensitivity of the function
$f_{v,i}(v_i) = v_i^{-1}$ (we only know how to do it for power functions with exponent $\ge 1$).
Let us define $w_i = \zeta \ln v_i$.
The coefficient $\zeta$ is used to control the distance by which the whole input vector changes if $\ln v_i$ is changed by 1.
Similarly, we add a coefficient to the geographical coordinates: $s_i = \alpha x_i, t_i = \alpha y_i$. Then we consider
$(s_1,t_1,w_1,\ldots,s_n,t_n,w_n)$ as an element of the Banach space $(\RR^{3n},\|\ \|)$ where
\begin{multline*}
\|(s_1,t_1,w_1,\ldots,s_n,t_n,w_n)\| = \\ = \|(\|(\|(s_1,t_1)\|_2,w_1)\|_1,\ldots,\|(\|(s_n,t_n)\|_2,w_n)\|_1)\|_p
\end{multline*}
Then $v_i = e^{w_i/\zeta}, x_i = \frac{s_i}{\alpha}, y_i = \frac{t_i}{\alpha}$ and
\[ g(s_1,t_1,w_1,\ldots,s_n,t_n,w_n) = \frac{1}{\alpha} \min_{i=1}^n \frac{\sqrt{s_i^2+t_i^2}}{e^{w_i/\zeta}} \]
Now the derivative sensitivity of $g_{w,i}(w_i) = e^{-w_i/\zeta}$ is
\[ \dersens{g_{w,i}}(w_i) = \frac{1}{\zeta} e^{-w_i/\zeta} \]
which is $\frac{1}{\zeta}$-smooth. The function $g_{w,i}$ itself is also $\frac{1}{\zeta}$-smooth.

The derivative sensitivity of $g_{st,i}(s_i,t_i) = \sqrt{s_i^2+t_i^2}$ in $(\RR^2,\ell_2)$ is
\[ \dersens{g_{st,i}}(s_i,t_i) = 1 \]
which is $\beta$-smooth for all $\beta$.
The function $g_{st,i}$ is $\frac{1}{\zeta}$-smooth if $\frac{1}{\sqrt{s_i^2+t_i^2}} \le \frac{1}{\zeta}$, i.e.
if $\sqrt{s_i^2+t_i^2} \ge \zeta$. A $\frac{1}{\zeta}$-smooth upper bound on $g_{st,i}$ is
\[ \hat{g}_{st,i}(s_i,t_i) =
\begin{cases}
\sqrt{s_i^2+t_i^2} & \mbox{if $\sqrt{s_i^2+t_i^2} \ge \zeta$} \\
\zeta e^{\frac{\sqrt{s_i^2+t_i^2}}{\zeta} - 1} & \mbox{otherwise}
\end{cases}
\]

An upper bound on
the derivative sensitivity of $g_i(s_i,t_i,w_i) = \frac{\sqrt{s_i^2+t_i^2}}{e^{w_i/\zeta}}$ is
\begin{align*}
& c_{g_i}(s_i,t_i,w_i) =
\left\|(\dersens{g_{st,i}}(s_i,t_i) \cdot g_{w,i}(w_i), \dersens{g_{w,i}}(w_i) \cdot \hat{g}_{st,i}(s_i,t_i))\right\|_\infty = \\
& = \left\|\left(1 \cdot e^{-w_i/\zeta}, \frac{1}{\zeta} e^{-w_i/\zeta} \cdot \hat{g}_{st,i}(s_i,t_i)\right)\right\|_\infty =
\frac{\max\left(1,\frac{\hat{g}_{st,i}(s_i,t_i)}{\zeta}\right)}{e^{w_i/\zeta}}
\end{align*}
and it is $\frac{1}{\zeta}$-smooth because $\dersens{g_{w,i}}(w_i)$, $\dersens{g_{st,i}}$, $g_{w,i}(w_i)$,
and $\hat{g}_{st,i}(s_i,t_i))$ are $\frac{1}{\zeta}$-smooth.

A $\frac{1}{\zeta}$-smooth upper bound on $\dersens{g}$ is
\[ c(u) = \frac{1}{\alpha} \max_i c_{g_i}(s_i,t_i,w_i) =
\frac{1}{\alpha} \max_i \frac{\max\left(1,\frac{\hat{g}_{st,i}(s_i,t_i))}{\zeta}\right)}{e^{w_i/\zeta}}
\]
where $u = (s_1,t_1,w_1,\ldots,s_n,t_n,w_n)$.

Now we can use Theorem~\ref{thm:dersenscauchy-banach} to compute an $\epsilon$-differentially private version of $g$:
\[ h(u) = g(u) + \frac{c(u)}{b}\cdot \eta \]
\[ \epsilon = (\gamma + 1)(b + \frac{1}{\zeta}) \]
\[ \gamma>1,b>0,\eta \sim \gencauchy{\gamma} \]
To compute an $\epsilon$-differentially private version of $f$, we first transform $(x_1,y_1,v_1,\ldots,x_n,y_n,v_n)$
into $u$ and then compute $h(u)$.

\section{Sensitivities of some particular functions}\label{app:functions}
We show how to find smooth upper bounds on derivatives of composite functions, whose non-composite versions have been considered in Sec.~\ref{sec:banach-smoothing}.

\textbf{Composite power function.} Let $f(x) = (g(x))^r, r\ge 1$ where $g : X \rightarrow \RR_+$ and $X$ is a Banach space. We have
\[ \dersens{f}(x) = |r| (g(x))^{r-1} \cdot \dersens{g}(x) \enspace .\]
If $g$ is $\beta_1$-smooth and $\dersens{g}$ is $\beta_2$-smooth then, by Lemma~\ref{lm:sumprod}, $\dersens{f}$ is $(|r-1|\beta_1 +
\beta_2)$-smooth.
The function $f$ itself is $\beta$-smooth where
\[ \beta = \sup_x \frac{|r| (g(x))^{r-1} \cdot \dersens{g}(x)}{(g(x))^r} = |r| \cdot \sup_x \frac{\dersens{g}(x)}{g(x)} = |r|\beta_1 \enspace.\]
This can be extended to $r < 1$ if $g$ itself is $\beta_1$-smooth, not only its upper bound.
For $r \ge 1$, if $\bar{g}$ is a $\beta_1$-smooth upper bound on $g$ then
\[ r\cdot (\bar{g}(x))^{r-1} \cdot \dersens{g}(x) \]
is a $(|r-1|\beta_1 + \beta_2)$-smooth upper bound on $\dersens{f}$.

\textbf{Composite exponent.} Let $f(x) = e^{r\cdot g(x)}, r\in\RR, x\in X$ where $g : X \rightarrow \RR$ and $X$ is a Banach space. We have
\[ \dersens{f}(x) = |r|e^{r\cdot g(x)} \cdot \dersens{g}(x)\enspace; \]
\[ \frac{\dersens{f}(x)}{|f(x)|} = |r|\cdot \dersens{g}(x) \enspace.\]
Suppose that $\dersens{g}(x) \le B$ for all $x$. Then $f$ is $|r|B$-smooth.
\[ \dersens{\dersens{f}}(x) = |r|\dersens{f}(x) \dersens{g}(x) + |r|e^{r\cdot g(x)} \cdot \dersens{\dersens{g}}(x) \enspace; \]
\[ \frac{\dersens{\dersens{f}}(x)}{\dersens{f}(x)} = |r|\cdot \dersens{g}(x) +
\frac{\dersens{\dersens{g}}(x)}{\dersens{g}(x)} \enspace . \]
If $\dersens{g}(x) \le B$ for all $x$ (i.e. $B$ is a $0$-smooth upper bound on $\dersens{g}(x)$) and $\dersens{g}(x)$ is
$\beta_2$-smooth then $f$ is $(|r|B + \beta_2)$-smooth by Lemma~\ref{lm:sumprod}.

\textbf{Composite sigmoid and tauoid.} The composition with a real function of Sec.~\ref{sec:banach-smoothing} can be combined with the derivative sensitivities of a sigmoid and a tauoid to get a smooth derivative sensitivity for the functions
$f_1(x) = \sigma(g(x))$ and $f_2(x) = \tau(g(x))$ where $g : X \rightarrow \RR$, $X$ is a Banach space, and $\dersens{g}(x) \le B$ for all $x$.

\textbf{Composite product.} Consider a function $f(x) = \prod_{i=1}^n g_i(x)$ where $g_i : X \rightarrow \RR$ and $X$ is a Banach
space. We have
\begin{eqnarray*}
\dersens{f}(x) & = & \sum_{i=1}^n \dersens{g_i}(x) \cdot \prod_{j\ne i} |g_j(x)| = \sum_{i=1}^n \dersens{g_i}(x) \cdot
\left|\frac{f(x)}{g_i(x)}\right|\\
& = & |f(x)| \cdot \sum_{i=1}^n \frac{\dersens{g_i}(x)}{|g_i(x)|}\enspace.
\end{eqnarray*}
By Lemma~\ref{lm:sumprod}, if $g_i$ is $\beta_i$-smooth and $\dersens{g_i}$ is $\beta_i'$-smooth then $f$ is $(\sum_i \beta_i)$-smooth, and
$\dersens{f}$ is $\max_i (\beta_i' + \sum_{j \ne i} \beta_j)$-smooth.
As we see, the smoothness guarantees of $f$ and $\dersens{f}$ are much worse than in the case where the variables are
independent.

\textbf{Composite $\ell_p$-norm.} Let $f:\prod_{i=1}^n X_i \rightarrow \RR,f(x_1,\ldots,x_n) = \|(f_1(x_1),\ldots,f_n(x_n)\|_p$ where $X_i$ are
Banach spaces. Let $X = \prod_{i=1}^n X_i$ and $x = (x_1,\ldots,x_n)$.
Let $y_i = f_i(x_i)$ and $y = (y_1,\ldots,y_n)$.
The derivative sensitivity of $f$ w.r.t. $x_i$ is
\[ c_i(x) = \frac{\partial f}{\partial y_i}(y) \cdot \dersens{f_i}(x_i) = \left(\frac{y_i}{\|y\|_p}\right)^{p-1} \cdot \dersens{f_i}(x_i) \]
The derivative sensitivity of $f$ in $(X,\ell_{p'})$ is
\[ \dersens{f}(x) = \|(c_1(x),\ldots,c_n(x))\|_{q'} \enspace.\]
Note that \[ \left(\frac{y_i}{\|y\|_p}\right)^{p-1} \le 1\enspace. \]
Thus a $\beta$-smooth upper bound on $c_i(x)$ is $\dersens{f_i}(x_i)$.
A $\beta$-smooth upper bound on $\dersens{f}(x)$ is $\|\dersens{f_1}(x_1),\ldots,\dersens{f_n}(x_n)\|_{q'}$.
Also note that
\[ \left\|\left(\left(\frac{y_i}{\|y\|_p}\right)^{p-1}\right)_{i=1}^n\right\|_q = 1 \]
and if $p' = p$ then
\[ \dersens{f}(x) \le \left\|\left(\left(\frac{y_i}{\|y\|_p}\right)^{p-1}\right)_{i=1}^n\right\|_q \cdot \max \dersens{f_i}(x_i) = \max \dersens{f_i}(x_i) \enspace.\]
Thus for $p' = p$, we have $\max \dersens{f_i}(x_i)$ as another $\beta$-smooth upper bound on $\dersens{f}(x)$.
It is at least as good as the more general bound.

\section{Postponed Proofs}\label{app:proofs}

\subsection{Proof of Lemma~\ref{lm:surj}}
First of all, a norm cannot be a constant function $0$ due to the condition $\norm{\vec{x}} = 0 \iff \vec{x} = \vec{0}$. Let $\vec{x}$ be such that $\norm{\vec{x}}_N = y'$ for some $0 \neq y' \in \RR^{+}$. For all $y \in \RR^{+}$, we have $y = \frac{y}{y'}\cdot y' = \frac{y}{y'}\cdot \norm{\vec{x}}_N = \norm{\frac{y}{y'}\cdot\vec{x}}_N$, where $\frac{y}{y'}\cdot\vec{x} \in \RR^m$.

\subsection{Proof of Lemma~\ref{lm:scalecopy}}
Since an $\ell_p$-norm is defined for $p \geq 1$, we may raise both sides of equation to the power $p$. We use the definition of $\ell_p$-norm and rewrite the term.
\begin{eqnarray*}
\norm{\alpha_1 x,\ldots,\alpha_k x,y_1,\ldots,y_m}^p_p & = & \sum_{i=1}^{k} (\alpha_i x)^p + \sum_{i=1}^{m} y^p_i \\
& = & \left(\sum_{i=1}^{k} \alpha^p_i\right) \cdot x^p + \sum_{i=1}^{m} y^p_i \\
& = & \norm{\sqrt[p]{\sum_{i=1}^{k} \alpha^p_i} \cdot x,y_1,\ldots,y_m}^p_p\enspace.
\end{eqnarray*}

\subsection{Proof of Lemma~\ref{lm:ungroup}}
Since $p,q \geq 1$, we may raise both sides of equations to the powers $p$ or $q$. The main inequalities that we use in the proof are $a^n + b^n \leq (a + b)^n$ for $n \geq 1$, and $a^n + b^n \geq (a + b)^n$ for $n \leq 1$.

\begin{eqnarray*}
\norm{\norm{\vec{x}}_q,\norm{\vec{y}}_q, z_1,\ldots,z_m}_p^p & = & \left(\sum_{i=1}^{k} x_i^q\right)^{\frac{p}{q}} + \left(\sum_{i=1}^{n} y_i^q\right)^{\frac{p}{q}} + \sum_{i=1}^{m} z_i^p \\
& \leq & \left(\sum_{i=1}^{k} x_i^q + \sum_{i=1}^{n} y_i^q\right)^{\frac{p}{q}} + \sum_{i=1}^{m} z_i^p \\
& = & \norm{\vec{x} | \vec{y}}_q^{p} + \sum_{i=1}^{m} z_i^p \\
& = & \norm{\norm{\vec{x} | \vec{y}}_q, z_1,\ldots,z_m}_p^p\enspace.
\end{eqnarray*}
\begin{eqnarray*}
\norm{\norm{\vec{x}}_p,\norm{\vec{y}}_p, z_1,\ldots,z_m}_q^q & = & \left(\sum_{i=1}^{k} x_i^p\right)^{\frac{q}{p}} + \left(\sum_{i=1}^{n} y_i^p\right)^{\frac{q}{p}} + \sum_{i=1}^{m} z_i^q \\
& \geq & \left(\sum_{i=1}^{k} x_i^p + \sum_{i=1}^{n} y_i^p\right)^{\frac{q}{p}} + \sum_{i=1}^{m} z_i^q \\
& = & \norm{\vec{x} | \vec{y}}_p^{q} + \sum_{i=1}^{m} z_i^q \\
& = & \norm{\norm{\vec{x} | \vec{y}}_p, z_1,\ldots,z_m}_q^q\enspace.
\end{eqnarray*}
If $p = q$, then the inequalities in these equation arrays are equalities.

\subsection{Proof of Lemma~\ref{lm:injmap}}
By Fact~\ref{fact:lpeqv}, if $p \geq q$, then $\norm{\vec{x}}_p \leq \norm{\vec{x}}_q$, so it suffices to prove $\norm{x_1,\ldots,x_m}_q \leq \norm{y_1,\ldots,y_n}_q$. Since $q \geq 1$, we may instead prove $\norm{x_1,\ldots,x_m}_q^q \leq \norm{y_1,\ldots,y_n}_q^q$. Assuming that there exists an injective mapping $f$ such that $|x_i| \leq |y_{f(i)}|$, we have
\begin{eqnarray*}
\norm{x_1,\ldots,x_m}_q^q & = & \sum_{i \in [m]} x_i^q \leq \sum_{i \in [m]} y_{f(i)}^q \\
 & \leq & \sum_{i \in [m]} y_{f(i)}^q + \sum_{j \in [n]\setminus Im(f)} y_j^q \\
 & = & \norm{y_1,\ldots,y_n}_q^q\enspace.
\end{eqnarray*}\qedhere

\subsection{Proof of Lemma~\ref{lemma:banach-base}}

Let $\nabla f(\vec{x}) = (a_i)_{i=1}^n$. Assuming $a_i\ne 0$ for all $i$ (otherwise remove the indices $i$ for which $a_i = 0$ from the summations containing $a_i$):
\begin{align*}
& |df_{\vec{x}}(\vec{y})| = |\nabla f(\vec{x}) \cdot \vec{y}| \le \sum_{i=1}^n |a_i| |y_i| =
\sum |a_i|^{\frac{p}{p-1}} \cdot \frac{|y_i|}{|a_i|^{\frac{1}{p-1}}} \le \\
& \le \left(\sum |a_i|^{\frac{p}{p-1}}\right) \left(\frac{\sum |y_i|^p}{\sum |a_i|^{\frac{p}{p-1}}}\right)^{\frac{1}{p}} =
\left(\sum |a_i|^{\frac{p}{p-1}}\right)^{\frac{p-1}{p}} \left(\sum |y_i|^p\right)^{\frac{1}{p}} = \\
& = \|\nabla f(\vec{x})\|_q \cdot \|\vec{y}\|_p
\end{align*}
for all $\vec{y}\in X$. The second inequality used here is the weighted power means inequality with exponents $1$ and $p$.
Equality is achievable (and not only for $\vec{y} = 0$): for example, by taking $y_i = |a_i|^{\frac{1}{p-1}}$.
Thus $\|\nabla f(\vec{x})\|_q$ is the smallest value of $c$ such that
for all $\vec{y}$, $|df_{\vec{x}}(\vec{y})| \le c \cdot \|\vec{y}\|_p$,
i.e. it is the operator norm $\|df_{\vec{x}}\|$.

The cases $p=1$ and $p=\infty$ can be achieved as limits of the general case.

\subsection{Proof of Lemma~\ref{lemma:banach-combine}}

(a)
We first prove that $(V,\norm{\cdot}_V)$ is a normed vector space.
We prove only the triangle inequality. The rest of the properties of norm are easy to check.
\begin{align*}
& \|(v_1,v_2)+(v_1',v_2')\|_V = \|(v_1+v_1',v_2+v_2')\|_V \\
& = \|(\|v_1+v_1'\|_{V_1},\|v_2+v_2'\|_{V_2})\|_p \le \\
& \le \|(\|v_1\|_{V_1}+\|v_1'\|_{V_1},\|v_2\|_{V_2}+\|v_2'\|_{V_2})\|_p \le \\
& \le \|(\|v_1\|_{V_1},\|v_2\|_{V_2})\|_p + \|(\|v_1'\|_{V_1},\|v_2'\|_{V_2})\|_p = \\
& = \|(v_1,v_2)\|_V+\|(v_1',v_2')\|_V
\end{align*}
The first inequality uses the triangle inequalities of $\norm{\cdot}_{V_1}$ and $\norm{\cdot}_{V_2}$ and the monotonicity of
$\norm{\cdot}_p$ in the absolute values of the coordinates of its argument vector.
The second inequality uses the triangle inequality of $\norm{\cdot}_p$.

Thus $(V,\norm{\cdot}_V)$ is a normed vector space. It remains to prove that it is complete. Consider a Cauchy sequence $\{x_n\}$ in $V$.
Then
\[ \forall \epsilon>0.\ \exists N\in\NN.\ \forall m,n>N.\ \|x_m - x_n\|_V < \epsilon \]
Let $x_n = (y_n,z_n)$ where
$y_n \in V_1$ and $z_n \in V_2$.
Note that
\[ \|y_m - y_n\|_{V_1} = \|(y_m - y_n, 0)\|_V \le \|(y_m - y_n, z_m - z_n)\|_V = \|x_m - x_n\|_V \]
Thus
\[ \forall \epsilon>0.\ \exists N\in\NN.\ \forall m,n>N.\ \|y_m - y_n\|_{V_1} < \epsilon \]
i.e. $\{y_n\}$ is a Cauchy sequence in $V_1$. Because $V_1$ is a Banach space, there exists $y\in V_1$ such that
\[ \lim_{n\rightarrow \infty} \|y_n - y\|_{V_1} = 0 \]
Similarly, we get that there exists $z \in V_2$ such that
\[ \lim_{n\rightarrow \infty} \|z_n - z\|_{V_2} = 0 \]
Let $x = (y,z)$. Note that
\[ \|x_n - x\|_V = \|(y_n - y, z_n - z)\|_V = \|(\|y_n - y\|_{V_1}, \|z_n - z\|_{V_2})\|_p \]
Then, because $\|\ \|_p$ is continuous,
\[ \lim_{n\rightarrow\infty} \|x_n - x\|_V =
\|(\lim_{n\rightarrow\infty} \|y_n - y\|_{V_1}, \lim_{n\rightarrow\infty} \|z_n - z\|_{V_2})\|_p = \|(0,0)\|_p = 0 \]
Thus $V$ is a Banach space.

\vspace{0.2cm}
\noindent (b) Let $c_1 = \|dg_{v_1}\|$, $c_2 = \|dh_{v_2}\|$. Note that
\begin{align*}
& \lim_{x_1\rightarrow 0_{V_1}} \frac{|g(v_1+x_1) - g(v_1) - df_v(x_1,0)|}{\|x_1\|_{V_1}} = \\
& = \lim_{x_1\rightarrow 0_{V_1}} \frac{|f(v_1+x_1,v_2) - f(v_1,v_2) - df_v(x_1,0)|}{\|(x_1,0)\|_V} = \\
& = \lim_{(x_1,0)\rightarrow 0_V} \frac{|f(v + (x_1,0)) - f(v) - df_v(x_1,0)|}{\|(x_1,0)\|_V} = \\
& = \lim_{x\rightarrow 0_V} \frac{|f(v + x) - f(v) - df_v(x)|}{\|x\|_V} = 0
\end{align*}
The last equality holds by the definition of Fr\'echet derivative. The equality before that holds because the limit on
the right-hand side exists. Then, again by the definition of Fr\'echet derivative, we get that the linear map that maps
$x_1$ to $df_v(x_1,0)$, is $dg_{v_1}$. Thus $df_v(x_1,0) = dg_{v_1}(x_1)$. Similarly, we get $df_v(0,x_2) = dh_{v_2}(x_2)$.
Now
\begin{align*}
& |df_v(x_1,x_2)| = |df_v(x_1,0) + df_v(0,x_2)| = \\
& = |dg_{v_1}(x_1) + dh_{v_2}(x_2)| \le |dg_{v_1}(x_1)| + |dh_{v_2}(x_2)| \le \\
& \le c_1 \|x_1\|_{V_1} + c_2 \|x_2\|_{V_2} \le \|(c_1,c_2)\|_q \cdot \|(\|x_1\|_{V_1},\|x_2\|_{V_2})\|_p = \\
& = \|(c_1,c_2)\|_q \cdot \|(x_1,x_2)\|_V
\end{align*}
The last inequality follows from the weighted power means inequality, similarly to the proof of
Lemma~\ref{lemma:dersens-lpnorm}.
Equality is also achievable: because $c_1 = \|dg_{v_1}\|$ and $c_2 = \|dh_{v_2}\|$, there exist $x_1$ and $x_2$ that
achieve equality in the second inequality. Then scale $x_1$ and $x_2$ by constants such that $\|x_1\|_{V_1}$ and
$\|x_2\|_{V_2}$ (which scale by the same constants) achieve equality in the third inequality.
To achieve equality in the first inequality, we may further need to multiply $x_1$ and/or $x_2$ by $-1$.
Thus $\|df_v\| = \|(c_1,c_2)\|_q$.

\subsection{Proof of Lemma~\ref{lm:scaleineq}}
$N \preceq M$ implies $\norm{x_1,\ldots,x_n}_{N} \leq \norm{x_1,\ldots,x_n}_{M}$ for any valuation $x_1,\ldots,x_n \in \RR^n$. Instead of $N \preceq M$, we may write $N'(V_1,\ldots,V_m) \preceq M'(V_1,\ldots,V_m)$. By Lemma~\ref{lm:surj}, each subnorm $V_i$ is surjective, and since all these subnorms use distinct sets of variables, we have $\norm{v_1,\ldots,v_m}_{N'} \leq \norm{v_1,\ldots,v_m}_{M'}$ for all valuations $(v_1,\ldots,v_m) \in (\RR^+)^m$.

Since $\alpha_i \in \RR^{+}, v_i \in \RR^{+}$, also $\alpha_i v_i \in \RR^{+}$. Since $\norm{v_1,\ldots,v_m}_{N'} \leq \norm{v_1,\ldots,v_m}_{M'}$ holds for \emph{any} possible valuation of $v_i \in \RR^{+}$, it also holds for $\alpha_i v_i \in \RR^{+}$. This also holds for $v_i = \norm{x_1,\ldots,x_n}_{V_i}$, so we get $N'(\alpha_1 V_1,\ldots,\alpha_m V_m) \preceq M'(\alpha_1 V_1,\ldots,\alpha_m V_m)$.

\subsection{Proof of Lemma~\ref{lm:hammer}}

We prove the first inequality, and the proof would be analogous for the second one. Consider any subnorm $\norm{M_1,\ldots,M_k}_r$ of the norm $N$. Since $p$ is the largest used subnorm of $N$, we have $\norm{M_1,\ldots,M_k}_r \succeq \norm{M_1,\ldots,M_k}_p$. Applying this inequality to every possible subnorm of $N$ and substituting $r$ with $p$, we get a {\compbanach} in which all the subnorms are $\norm{\cdot}_p$ for the same $p \geq 1$. It allows us to apply Lemma~\ref{lm:ungroup} and ungroup all the subnorms. We assume that all scalings in $N$ are applied directly to the variables. Applying ungrouping procedure recursively, we finally reach the scaled variables, getting $N = \norm{\alpha_{11} x_1,\ldots, \alpha_{mk_m} x_m}_p$, where some variables may repeat if they were repeating in different subnorms of $N$ before. We may now use Lemma~\ref{lm:scalecopy} to merge repeating variables into one, rewriting 
\begin{multline*}
\norm{\alpha_{11} x_1,\ldots,\overbrace{\alpha_{i1} x_i,\ldots,\alpha_{ik_i} x_i}^{k_i},\ldots, \alpha_{mk_m} x_m}_p = \\ = \norm{\alpha_{11} x_1,\ldots,\sqrt[p]{\sum_{j=1}^{k_i} \alpha_{ij}} x_i,\ldots, \alpha_{mk_m} x_m}_p.
\end{multline*}
 After doing it for all $i \in [m]$, we get $\norm{\alpha_1 x_1,\ldots,\alpha_m x_m}_p$.

\subsection{Proof of Lemma~\ref{lm:smoothness}}
By Def.~\ref{def:smoothness}, the mapping $f$ is \emph{$\beta$-smooth}, if $f(x)\leq e^{\beta\cdot \|x'-x\|}\cdot f(x')$ for all $x,x'\in X$. We may rewrite it as $\frac{f(x)}{f(x')} \leq e^{\beta\cdot \|x'-x\|}\cdot f(x')$. Applying $\ln$ to both sides, it suffices to prove that $\ln (f(x)) - \ln(f(x')) \leq \beta \cdot \|x'-x\|$, which is $\frac{\ln (f(x)) - \ln(f(x'))}{\|x'-x\|} \leq \beta$.

Applying mean value theorem to the function $\ln \circ f: X \to \RR$, we get $\frac{\abs{\ln (f(x)) - \ln(f(x'))}}{\|x'-x\|} = \norm{d(\ln \circ f)_v}$ for some $v \in X$. Applying derivative chain rule, since $\frac{\partial \ln}{\partial x}(x) = \frac{1}{\abs{x}}$, we get $\norm{d(\ln \circ f)_v} = \frac{\norm{df_v}}{\abs{f(v)}} = \frac{\dersens{f}(v)}{|f(v)|} \leq \beta$, where the last inequality comes from the lemma statement.

\subsection{Proof of Lemma~\ref{lm:sumprod}}
We have:
\begin{enumerate}
\item $\abs{\frac{(f(x) + g(x))'}{f(x) + g(x)}} = \frac{\abs{f'(x) + g'(x)}}{\abs{f(x)} + \abs{g(x)}} \leq \frac{\abs{f'(x)} + \abs{g'(x)}}{\abs{f(x)} + \abs{g(x)}} \leq$\\ $\max\left(\abs{\frac{f'(x)}{f(x)}},\abs{\frac{g'(x)}{g(x)}}\right) \leq \max(\beta_f,\beta_g)$.
\item $\abs{\frac{(f(x) \cdot g(x))'}{f(x) \cdot g(x)}} = \abs{\frac{f'(x) \cdot g(x) + f(x) \cdot g'(x)}{f(x) \cdot g(x)}} \leq \abs{\frac{f'(x)}{f(x)}} + \abs{\frac{g'(x)}{g(x)}} \leq \beta_f + \beta_g$.
\item $\abs{\frac{(f(x) / g(x))'}{f(x)/ g(x)}} = \abs{\frac{f'(x) \cdot g(x) - f(x) \cdot g'(x)}{g(x)^2}\cdot\frac{g(x)}{f(x)}} = \abs{\frac{f'(x)}{f(x)} - \frac{g'(x)}{g(x)}} \leq \abs{\frac{f'(x)}{f(x)}} + \abs{\frac{g'(x)}{g(x)}} \leq \beta_f + \beta_g$.
\end{enumerate}

\subsection{Proof of Lemma~\ref{lm:lpnormderiv}}
Let $y_i = f_i(x_i)$ and $y = (y_1,\ldots,y_n)$. We have
\[\frac{\partial f}{\partial x_i}(x) = \frac{\partial f}{\partial y_i}(y) \cdot \frac{\partial f_i}{\partial x_i}(x_i) = \left(\frac{y_i}{\|y\|_p}\right)^{p-1} \cdot \frac{\partial f_i}{\partial x_i}(x_i) \enspace.\]
Since $\frac{y_i}{\|y\|_p} = \frac{f_i(x_i)}{f(x)} = \frac{f_i(x_i)}{\norm{(f_i(x_i))_{i=1}^{n}}_p}$, we have $\frac{f_i(x_i)}{f(x)} \leq 1$, and hence also $\left(\frac{f_i(x_i)}{f(x)}\right)^{p-1} \leq 1$, getting $\frac{\partial f}{\partial x_i}(x) \leq \frac{\partial f_i}{\partial x_i}(x_i)$.

\subsection{Proof of Lemma~\ref{lm:lpnormderiv2}}
Let $y_j = f_j(x)$, $z = \sum_{j=1}^n y_j^p$. We have
\begin{multline*}
\frac{\partial f}{\partial x_i}(x) = \frac{\partial f}{\partial z}(z) \cdot \sum_{j=1}^{n} \left(\frac{\partial f_j(x)^p}{\partial y_j}(y_j) \cdot \frac{\partial f_j}{\partial x_i}(x)\right)\\ = \sum_{j=1}^n \left(\frac{y_j}{\norm{y}_p}\right)^{p-1} \cdot \frac{\partial f_j}{\partial x_i}(x) = \sum_{j=1}^n \left(\frac{f_j(x)}{f(x)}\right)^{p-1} \cdot \frac{\partial f_j}{\partial x_i}(x) \enspace.
\end{multline*}
As in the proof of Lemma~\ref{lm:lpnormderiv}, $\left(\frac{y_j}{\|y\|_p}\right)^{p-1} = \left(\frac{f_j(x)}{f(x)}\right)^{p-1} \leq 1$. We get $\frac{\partial f}{\partial x_i}(x)  \leq \sum_{j=1}^n \frac{\partial f_j}{\partial x_i}(x)$. We can proceed with the inequality in another way.
\begin{eqnarray*}
\sum_{j=1}^n \left(\frac{f_j(x)}{f(x)}\right)^{p-1} \cdot \frac{\partial f_j}{\partial x_i}(x)
& = & \frac{\sum_{j=1}^n f_j(x)^{p-1} \cdot \frac{\partial f_j}{\partial x_i}(x)}{f(x)^{p-1}} \\
& = & \frac{\sum_{j=1}^n f_j(x)^{p} \cdot \frac{\partial f_j}{\partial x_i}(x) \cdot \frac{1}{f_j(x)}}{f(x)^{p-1}} \\
& \leq & \max_{j=1}^n \left( \frac{1}{f_j(x)} \cdot \frac{\partial f_j}{\partial x_i}(x) \right) \cdot \frac{\sum_{j=1}^n f_j(x)^p}{f(x)^{p-1}} \\
& = & \max_{j=1}^n \left(\frac{1}{f_j(x)} \cdot \frac{\partial f_j}{\partial x_i}(x) \right)\cdot \frac{f(x)^p}{f(x)^{p-1}}\\
& = & \max_{j=1}^n \left(\frac{f(x)}{f_j(x)} \cdot \frac{\partial f_j}{\partial x_i}(x)\right)\enspace.\qedhere
\end{eqnarray*}

\subsection{Proof of Lemma~\ref{lm:lpnorm}}
Let $X = \prod_{i=1}^n X_i$ and $x = (x_1,\ldots,x_n)$. Let $\beta=\max_i\beta_i$. By Lemma~\ref{lm:lpnormderiv}, an upper bound on $\frac{\partial f}{\partial x_i}(x)$ is $c_i(x) = f'_i(x_i)$.
We have
\begin{multline*}
\abs{c_i(x)} = \abs{f'_i(x_i)} \leq \dersens{f_i}(x_i) = \abs{f_i(x_i)} \cdot \frac{\dersens{f_i}(x_i)}{\abs{f_i(x_i)}} \leq \abs{f_i(x_i)} \cdot \beta_i \enspace.
\end{multline*}

By Lemma~\ref{lemma:banach-base} and Lemma~\ref{lemma:banach-combine}, the derivative sensitivity of $f$ in $(X,\ell_{\frac{p}{p-1}})$ is
\[ \dersens{f}(x) = \|(c_1(x),\ldots,c_n(x))\|_{p} \]

Using inequality $|f_i(x_i)| \leq \abs{f(x)}$, we get
\[\frac{\dersens{f}(x)}{\abs{f(x)}} \leq \frac{\left\|\left(\abs{f_i(x_i)}\cdot\beta_i \right)_{i=1}^n\right\|_{p}}{\abs{f(x)}} \leq \frac{\abs{f(x)} \cdot \left\|\left(\beta_i \right)_{i=1}^n\right\|_{p}}{\abs{f(x)}} \leq \norm{(\beta_i)_{i=1}^{n}}_{p}\enspace. \]
On the other hand, using inequality $\beta_i \leq \beta$, we get

\[\frac{\dersens{f}(x)}{\abs{f(x)}} \leq \frac{\left\|\left(\abs{f_i(x_i)} \cdot \beta\right)_{i=1}^n\right\|_{p}}{\abs{f(x)}} \leq \frac{\beta \cdot \left\|(\abs{f_i(x_i)})_{i=1}^n\right\|_{p}}{\abs{f(x)}} = \beta\enspace\qedhere\]

\subsection{Proof of Lemma~\ref{lm:lpnorm2}}

Let $X = \prod_{i=1}^n X_i$ and $x = (x_1,\ldots,x_n)$. By Lemma~\ref{lm:lpnormderiv2}, an upper bound on $\frac{\partial f}{\partial x_i}(x)$ is $c_i(x) = \max_{j}^n \frac{f(x)}{f_j(x)} \cdot \frac{\partial f_j}{\partial x_i}(x)$.
We have
\begin{multline*}
\abs{c_i(x)} = \abs{\max_{j} \frac{f(x)}{f_j(x)} \cdot \frac{\partial f_j}{\partial x_i}(x)} \\ \leq \abs{f(x)} \cdot \max_{j}^n  \abs{\frac{\partial f_j}{\partial x_i}(x) \cdot \frac{1}{f_j(x)}} \leq \abs{f(x)} \cdot \max_j \beta^j_i \enspace.
\end{multline*}

By Lemma~\ref{lemma:banach-base} and Lemma~\ref{lemma:banach-combine}, the derivative sensitivity of $f$ in $(X,\ell_{\frac{p}{p-1}})$ is
\[ \dersens{f}(x) = \|(c_1(x),\ldots,c_n(x))\|_{p} \]

We get
\[\frac{\dersens{f}(x)}{\abs{f(x)}} \leq \frac{\abs{f(x)} \cdot \left\|\left(\max_j \beta^j_i \right)_{i=1}^n\right\|_{p}}{\abs{f(x)}} \leq \norm{(\max_j \beta^j_i)_{i=1}^{n}}_{p}\enspace.\]


\subsection{Proof of Theorem~\ref{thm:diffnoisenorm}}

First of all, if $\norm{\cdot}_M \succeq \norm{\cdot}_N$, then $\norm{\cdot}_{N'} \succeq \norm{\cdot}_{M'}$ for the dual norms $\norm{\cdot}_{N'}$, $\norm{\cdot}_{M'}$. Indeed, by definition of a dual norm, $\norm{T}_{M'} = \sup\set{T(x)\ |\ \norm{x}_M \leq 1}$ for an operator $T$ from the dual space $X\to\RR$. Since $\norm{x}_N \leq \norm{x}_B$, we have $\forall x:\ \set{T(x)\ |\ \norm{x}_N \leq 1} \supseteq \set{T(x)\ |\ \norm{x}_M \leq 1}$. Hence, $\norm{T}_{N'} = \sup\set{T(x)\ |\ \norm{x}_N \leq 1} \geq$ \\ $\sup\set{T(x)\ |\ \norm{x}_M \leq 1} = \norm{T}_{N'}$.

By definition, we have $\dersens{f}(x) = \norm{df_x}_{N'}$, where $df_x$ is the Fr\'echet derivative of $f$ at $x$ and $\norm{\cdot}_{N'}$ is the dual norm of $\norm{\cdot}_{N}$. Since $\norm{\cdot}_{N'} \succeq \norm{\cdot}_{M'}$, we have $\norm{df_x}_{N'} \geq \norm{df_x}_{M'}$. Since $c(x)$ is a $\beta$-smooth upper bound on $\norm{df_x}_{N'}$, it is also a $\beta$-smooth upper bound on $\norm{df_x}_{M'}$. By Theorem~\ref{thm:dersenscauchy-banach}, $g(x) = f(x)+\frac{c(x)}{b}\cdot\eta$ is $\epsilon$-differentially private w.r.t. the norm $\norm{\cdot}_M$.

\section{Evaluation details}

\subsection{Database schema}\label{app:eval:schema}
The TPC-H testset~\cite{tpc-h} puts forth the following database schema, as given below. The tables are (randomly) filled with a number of rows, generated by a program that accompanies the schema. The number of rows depends on the \emph{scaling factor} $SF$. The tables, and the numbers of rows in them are the following:

\textbf{Part:} $SF \cdot 200,000$ rows.\nopagebreak

\begin{tabular}{| l | l |}
\hline
column & type \\
\hline
P\_PARTKEY & identifier \\
P\_NAME    & text \\ 
P\_MFGR    & text \\
P\_BRAND   & text \\
P\_TYPE    & text \\
P\_SIZE    & integer \\
P\_CONTAINER & text \\
P\_RETAILPRICE & decimal \\
P\_COMMENT & text \\
\hline
\end{tabular}
\vspace{0.5cm}

\textbf{Supplier:} $SF \cdot 10,000$ rows.\nopagebreak

\begin{tabular}{| l | l |}
\hline
column & type \\
\hline
S\_SUPPKEY & identifier \\
S\_NAME & text \\
S\_ADDRESS & text \\
S\_NATIONKEY & identifier \\
S\_PHONE & text \\
S\_ACCTBAL & decimal \\
S\_COMMENT & text\\
\hline
\end{tabular}
\vspace{0.5cm}

\textbf{Partsupp:} $SF \cdot 800,000$ rows.\nopagebreak

\begin{tabular}{| l | l |}
\hline
column & type \\
\hline
PS\_PARTKEY & Identifier \\
PS\_SUPPKEY & Identifier \\
PS\_AVAILQTY & integer \\
PS\_SUPPLYCOST & decimal\\
PS\_COMMENT  & text \\
\hline
\end{tabular}
\vspace{0.5cm}

\textbf{Customer:} $SF \cdot 150,000$ rows.\nopagebreak

\begin{tabular}{| l | l |}
\hline
column & type \\
\hline
C\_CUSTKEY & Identifier \\
C\_NAME & text \\
C\_ADDRESS &  text \\
C\_NATIONKEY & Identifier \\
C\_PHONE & text \\
C\_ACCTBAL &  decimal \\
C\_MKTSEGMENT & text \\
C\_COMMENT & text\\
\hline
\end{tabular}
\vspace{0.5cm}

\textbf{Orders:} $SF\cdot 1,500,000$ rows\nopagebreak

\begin{tabular}{| l | l |}
\hline
column & type \\
\hline
O\_ORDERKEY & Identifier \\
O\_CUSTKEY & Identifier \\
O\_ORDERSTATUS & text \\
O\_TOTALPRICE & Decimal \\
O\_ORDERDATE & Date \\
O\_ORDERPRIORITY & text \\
O\_CLERK & text \\
O\_SHIPPRIORITY & Integer \\
O\_COMMENT & text \\
\hline
\end{tabular}
\vspace{0.5cm}

\textbf{Lineitem:} $SF\cdot 6,000,000$ rows\nopagebreak

\begin{tabular}{| l | l |}
\hline
column & type \\
\hline
L\_ORDERKEY & identifier \\
L\_PARTKEY & identifier \\
L\_SUPPKEY & identifier \\
L\_LINENUMBER &  integer \\
L\_QUANTITY &  decimal \\
L\_EXTENDEDPRICE &  decimal \\
L\_DISCOUNT & decimal \\
L\_TAX &  decimal\\
L\_RETURNFLAG & text \\
L\_LINESTATUS & text \\
L\_SHIPDATE &  date \\
L\_COMMITDATE &  date \\
L\_RECEIPTDATE &  date \\
L\_SHIPINSTRUCT & text \\
L\_SHIPMODE & text \\
L\_COMMENT & text \\
\hline
\end{tabular}
\vspace{0.5cm}

\textbf{Nation:} $25$ rows\nopagebreak

\begin{tabular}{| l | l |}
\hline
column & type \\
\hline
N\_NATIONKEY & identifier \\
N\_NAME & text \\
N\_REGIONKEY & identifier\\
N\_COMMENT & text \\
\hline
\end{tabular}
\vspace{0.5cm}

\textbf{Region:} $5$ rows\nopagebreak

\begin{tabular}{| l | l |}
\hline
column & type \\
\hline
R\_REGIONKEY & identifier \\
R\_NAME & text \\
R\_COMMENT & text \\
\hline
\end{tabular}

\subsection{Sensitive components}\label{app:eval:norms}
In all tables except \textbf{Lineitem}, we consider the change that is the scaled sum of changes in all sensitive attributes. All attributes that are not a part of the norm are considered insensitive. We assumed that textual fields as well as the keys (ordinal data) are not sensitive.

A letter $G$ appended to \texttt{date} column names (e.g. $o\_shipdateG$) denotes that the initial \texttt{date} datatype has been converted to a floating-point number, which is the number of months passed from the date $1980$-$01$-$01$.

\begin{itemize}
\item \textbf{Part:} $\norm{p\_size, 0.01 \cdot p\_retailprice}_1$. The values of $p\_retailprice$ are measured in hundreds, so we consider larger changes (i.e. make such change causing a change of $1$ in the output correspond to unit sensitivity).
\item \textbf{Partsupp:} $\norm{ps\_availqty, 0.01 \cdot ps\_supplycost}_1$.
\item \textbf{Orders:} $\norm{30 \cdot o\_shipdateG, 0.01 \cdot ps\_supplycost}_1$.

\item \textbf{Customer:} $\norm{0.01 \cdot c\_acctbal}_1$.
\item \textbf{Supplier:} $\norm{0.01 \cdot s\_acctbal}_1$.
\item \textbf{Nation:} no sensitive columns.
\item \textbf{Region:} no sensitive columns.
\end{itemize}
In table \textbf{Lineitem}, several different norms would make sense and it is up to the data owner to choose the ``right'' one. We could again add up the sensitive attributes of a row, after suitably scaling them. But we could also think that the three different dates would probably move rather synchronously, and it is the maximum change among them that really matters. Hence we performed the tests with the norm
\begin{multline*}
\| l\_quantity, 0.0001 \cdot l\_extendedprice, 50 \cdot l\_discount,\\ 30 \cdot \| l\_shipdateG, l\_commitdateG, l\_receiptdateG\|_\infty \|_1\enspace.
\end{multline*}
Here the values of $l\_discount$ are very small (all around $0.1$), so we aim to protect the change in $0.02$ units. On the other hand, $l\_extendedprice$ can be tens of thousands, and we want to capture larger changes for it. The dates are measured in months, so we capture a change of one day.

Alternatively, we could take $\norm{\cdot}_1$ instead of $\norm{\cdot}_{\infty}$, giving us simply the scaled $\ell_1$ norm of all sensitive components. The benchmark results for the $\ell_1$-norm can be found in Table~\ref{tab:benchtimel1} and Table~\ref{tab:benchprecl1}. The error has not changed much compared to Table~\ref{tab:benchtime} and Table~\ref{tab:benchprec}, but for query $\mathtt{b6}$ we could decrease $\varepsilon$. The reason is that the most significant error is coming from date-related columns, and the queries mostly use only of those columns, so there is no difference whether they are related by $\ell_1$ or $\ell_{\infty}$. The queries that used several date comparisons have large errors in both cases.

\begin{table*}
\begin{footnotesize}
\begin{center}
\begin{tabular}{|l |r |r |r |r |r |r |r |r |r |}
\hline & \multicolumn{3}{c|}{SF = 0.1} & \multicolumn{3}{c|}{SF = 0.5} & \multicolumn{3}{c|}{SF = 1.0} \\ 
\hline
 & init.query & mod.query & sens. & init.query & mod.query & sens. & init.query & mod.query & sens. \\ 
\hline
\hline
$\mathtt{b1\_1}$ & $373.96$ & $5.77K$ & $15.98K$ & $866.2$ & $27.46K$ & $77.71K$ & $3.68K$ & $57.86K$ & $168.24K$\\ 
$\mathtt{b1\_2}$ & $172.31$ & $5.76K$ & $16.23K$ & $834.7$ & $28.84K$ & $83.91K$ & $1.78K$ & $55.28K$ & $159.15K$\\ 
$\mathtt{b1\_3}$ & $178.63$ & $5.9K$ & $16.21K$ & $882.59$ & $29.72K$ & $86.53K$ & $1.87K$ & $59.64K$ & $162.37K$\\ 
$\mathtt{b1\_4}$ & $186.06$ & $5.84K$ & $17.15K$ & $925.36$ & $29.47K$ & $86.29K$ & $1.87K$ & $57.28K$ & $164.86K$\\ 
$\mathtt{b1\_5}$ & $169.82$ & $5.82K$ & $5.99K$ & $837.92$ & $29.98K$ & $32.32K$ & $1.68K$ & $58.47K$ & $61.86K$\\ 
$\mathtt{b3}$ & $214.03$ & $136.24$ & $474.7$ & $782.8$ & $759.96$ & $2.62K$ & $783.24$ & $590.11$ & $1.34K$\\ 
$\mathtt{b4}$ & $199.64$ & $9.09K$ & $33.5K$ & $1.01K$ & $49.2K$ & $170.13K$ & $2.07K$ & $95.64K$ & $346.65K$\\ 
$\mathtt{b5}$ & $141.93$ & $308.07$ & $2.56K$ & $692.14$ & $1.21K$ & $5.1K$ & $1.82K$ & $2.42K$ & $14.6K$\\ 
$\mathtt{b6}$ & $146.74$ & $38.81K$ & $199.67K$ & $719.9$ & $195.99K$ & $959.3K$ & $1.53K$ & $392.06K$ & $2.02M$\\ 
$\mathtt{b7}$ & $186.22$ & $270.8$ & $1.13K$ & $941.71$ & $1.85K$ & $6.37K$ & $2.37K$ & $3.11K$ & $13.14K$\\ 
$\mathtt{b9}$ & $216.29$ & $153.68$ & $5.51K$ & $790.09$ & $830.06$ & $5.18K$ & $2.58K$ & $1.92K$ & $11.02K$\\ 
$\mathtt{b10}$ & $163.83$ & $173.5$ & $573.62$ & $829.69$ & $857.22$ & $2.79K$ & $345.26$ & $1.82K$ & $5.99K$\\ 
$\mathtt{b12\_1}$ & $249.9$ & $17.68K$ & $80.03K$ & $1.24K$ & $82.76K$ & $385.35K$ & $2.91K$ & $175.21K$ & $814.0K$\\ 
$\mathtt{b12\_2}$ & $232.16$ & $7.0K$ & $33.81K$ & $1.15K$ & $33.74K$ & $156.35K$ & $2.62K$ & $68.94K$ & $329.34K$\\ 
$\mathtt{b16}$ & $50.29$ & $216.56$ & $274.42$ & $155.31$ & $1.32K$ & $1.71K$ & $292.41$ & $2.47K$ & $3.22K$\\ 
$\mathtt{b17}$ & $117.5$ & $110.97$ & $317.11$ & $566.25$ & $595.3$ & $1.62K$ & $1.31K$ & $1.18K$ & $3.36K$\\ 
$\mathtt{b19}$ & $166.16$ & $349.24$ & $1.13K$ & $816.59$ & $1.68K$ & $5.85K$ & $1.76K$ & $3.34K$ & $12.1K$\\ 
\hline
\end{tabular}
\end{center}
\end{footnotesize}
\caption{Time benchmarks (ms)}\label{tab:benchtimel1}
\end{table*}

\begin{table*}
\begin{center}
\begin{tabular}{|l |l |r |r |r |r |r |r |r |r |r |r |r |r |}
\hline & & \multicolumn{4}{c|}{SF = 0.1} & \multicolumn{4}{c|}{SF = 0.5} & \multicolumn{4}{c|}{SF = 1.0} \\ 
\hline
 & $\varepsilon$ & init.res & mod.res & sens. & \% error & init.res & mod.res & sens. & \% error & init.res & mod.res & sens. & \% error \\ 
\hline
\hline
$\mathtt{b1\_1}$ & 1.0 & $3.79M$ & $3.55M$ & $1.0$ & $6.18$ & $18.87M$ & $17.7M$ & $1.0$ & $6.2$ & $37.72M$ & $35.38M$ & $1.0$ & $6.2$\\ 
$\mathtt{b1\_2}$ & 1.0 & $5.34G$ & $5.01G$ & $9.96K$ & $6.18$ & $27.35G$ & $25.65G$ & $9.96K$ & $6.2$ & $56.57G$ & $53.06G$ & $9.96K$ & $6.2$\\ 
$\mathtt{b1\_3}$ & 1.0 & $5.07G$ & $4.76G$ & $11.16K$ & $6.18$ & $25.98G$ & $24.37G$ & $11.16K$ & $6.2$ & $53.74G$ & $50.41G$ & $11.16K$ & $6.2$\\ 
$\mathtt{b1\_4}$ & 1.0 & $5.27G$ & $4.95G$ & $12.06K$ & $6.18$ & $27.02G$ & $25.34G$ & $12.06K$ & $6.2$ & $55.89G$ & $52.43G$ & $12.06K$ & $6.2$\\ 
$\mathtt{b1\_5}$ & 1.0 & $148.3K$ & $139.12K$ & $0.0006$ & $6.19$ & $739.56K$ & $693.7K$ & $0.0006$ & $6.2$ & $1.48M$ & $1.39M$ & $0.0006$ & $6.2$\\ 
$\mathtt{b3}$ & 1.0 & $3.62K$ & $1.09K$ & $7.98K$ & $2.13K$ & $3.21K$ & $963.25$ & $7.98K$ & $2.42K$ & $0.0$ & $0.0$ & $0.0$ & 0.0\\ 
$\mathtt{b4}$ & 1.0 & $2.92K$ & $8.58K$ & $0.0068$ & $194.14$ & $14.17K$ & $42.89K$ & $0.0069$ & $202.66$ & $28.07K$ & $85.67K$ & $0.0069$ & $205.18$\\ 
$\mathtt{b5}$ & 1.0 & $5.43M$ & $5.05M$ & $4.96K$ & $5.98$ & $25.28M$ & $23.82M$ & $4.96K$ & $5.56$ & $47.56M$ & $45.33M$ & $4.96K$ & $4.6$\\ 
$\mathtt{b6}$ & 2.5 & $17.45M$ & $17.09M$ & $48.12K$ & $0.75$ & $88.13M$ & $87.69M$ & $49.22K$ & $0.06$ & $181.93M$ & $181.41M$ & $50.45K$ & $0.0094$\\ 
$\mathtt{b7}$ & 1.0 & $22.07M$ & $22.12M$ & $22.25K$ & $1.24$ & $95.63M$ & $100.99M$ & $22.51K$ & $5.85$ & $212.11M$ & $219.31M$ & $22.58K$ & $3.5$\\ 
$\mathtt{b9}$ & 1.0 & $30.32M$ & $30.32M$ & $40.0K$ & $1.32$ & $137.73M$ & $137.73M$ & $49.2K$ & $0.36$ & $283.82M$ & $283.82M$ & $49.2K$ & $0.17$\\ 
$\mathtt{b10}$ & 1.0 & $100.31K$ & $111.39K$ & $3.42K$ & $45.15$ & $149.6K$ & $175.47K$ & $3.42K$ & $40.18$ & $0.0$ & $93.54K$ & $3.31K$ & $\infty$\\ 
$\mathtt{b12\_1}$ & 1.0 & $3.12K$ & $9.05K$ & $0.0014$ & $190.47$ & $15.41K$ & $45.15K$ & $0.0014$ & $193.04$ & $30.84K$ & $90.3K$ & $0.0014$ & $192.83$\\ 
$\mathtt{b12\_2}$ & 1.0 & $1.29K$ & $3.65K$ & $0.0014$ & $183.07$ & $6.2K$ & $18.11K$ & $0.0014$ & $191.85$ & $12.37K$ & $36.24K$ & $0.0014$ & $193.01$\\ 
$\mathtt{b16}$ & 4.5 & $9.95K$ & $249.44K$ & $16.69$ & $2.41K$ & $49.35K$ & $1.25M$ & $16.69$ & $2.44K$ & $98.97K$ & $2.51M$ & $16.69$ & $2.44K$\\ 
$\mathtt{b17}$ & 1.0 & $31.54K$ & $265.49K$ & $902.59$ & $770.26$ & $256.24K$ & $1.61M$ & $902.59$ & $531.72$ & $531.93K$ & $3.53M$ & $902.59$ & $565.73$\\ 
$\mathtt{b19}$ & 7.0 & $155.25K$ & $258.9K$ & $28.48K$ & $250.22$ & $1.1M$ & $1.52M$ & $29.79K$ & $65.16$ & $1.73M$ & $2.93M$ & $30.59K$ & $87.66$\\ 
\hline
\end{tabular}
\end{center}
\caption{Precision benchmarks}\label{tab:benchprecl1}
\end{table*}

\subsection{Queries}\label{app:eval:queries}
\begin{verbatim}

--b1_1
select
    sum(lineitem.l_quantity)
from
    lineitem
where
    lineitem.l_shipdateG <= 230.3 - 30
and
    lineitem.l_returnflag = 'R'
and
    lineitem.l_linestatus = 'F'
;

--b1_2
select
    sum(lineitem.l_extendedprice)
from
    lineitem
where
    lineitem.l_shipdateG <= 230.3 - 30
and
    lineitem.l_returnflag = 'R'
and
    lineitem.l_linestatus = 'F'
;

--b1_3
select
    sum(lineitem.l_extendedprice*(1-lineitem.l_discount))
from
    lineitem
where
    lineitem.l_shipdateG <= 230.3 - 30
and
    lineitem.l_returnflag = 'R'
and
    lineitem.l_linestatus = 'F'
;

--b1_4
select
    sum(lineitem.l_extendedprice*(1-lineitem.l_discount)
                                *(1+lineitem.l_tax))
from
    lineitem
where
    lineitem.l_shipdateG <= 230.3 - 30
and
    lineitem.l_returnflag = 'R'
and
    lineitem.l_linestatus = 'F'
;

--b1_5
select
    count(*)
from
    lineitem
where
    lineitem.l_shipdateG <= 230.3 - 30
and
    lineitem.l_returnflag = 'R'
and
    lineitem.l_linestatus = 'F'
;

--b3
select
    sum(lineitem.l_extendedprice*(1-lineitem.l_discount))
from
    customer,
    orders,
    lineitem
where
    customer.c_mktsegment = 'BUILDING'
and customer.c_custkey = orders.o_custkey
and lineitem.l_orderkey = orders.o_orderkey
and orders.o_orderdateG < 190
and lineitem.l_shipdateG > 190
and lineitem.l_orderkey = '162'
and orders.o_shippriority = '0'
;

--b4
select count(*)
from
    orders,
    lineitem
where
    orders.o_orderdateG >= 180
    and orders.o_orderdateG < 180 + 3
    and lineitem.l_orderkey = orders.o_orderkey
    and lineitem.l_commitdateG < lineitem.l_receiptdateG
    and orders.o_orderpriority = '1-URGENT'
;

--b5
select sum(lineitem.l_extendedprice*(1-lineitem.l_discount))
from
    customer,
    orders,
    lineitem,
    supplier,
    nation,
    region
where
    customer.c_custkey = orders.o_custkey
    and lineitem.l_orderkey = orders.o_orderkey
    and lineitem.l_suppkey = supplier.s_suppkey
    and customer.c_nationkey = supplier.s_nationkey
    and supplier.s_nationkey = nation.n_nationkey
    and nation.n_regionkey = region.r_regionkey
    and region.r_name = 'ASIA'
    and orders.o_orderdateG >= 213.3
    and orders.o_orderdateG < 213.3 + 12
    and nation.n_name = 'JAPAN'
;

--b6
select
        sum(lineitem.l_extendedprice * lineitem.l_discount)
from
        lineitem
where
        lineitem.l_shipdateG >= 170.5
        and lineitem.l_shipdateG < 170.5 + 12
        and lineitem.l_discount between 0.09 - 0.01
                                    and 0.09 + 0.01
        and lineitem.l_quantity < 24
;

--b7
select
    sum(lineitem.l_extendedprice * (1 - lineitem.l_discount))
from
    supplier,
    lineitem,
    orders,
    customer,
    nation as n1,
    nation as n2
where
    supplier.s_suppkey = lineitem.l_suppkey
    and orders.o_orderkey = lineitem.l_orderkey
    and customer.c_custkey = orders.o_custkey
    and supplier.s_nationkey = n1.n_nationkey
    and customer.c_nationkey = n2.n_nationkey
    and (
        (n1.n_name = 'JAPAN' and n2.n_name = 'INDONESIA')
        or (n1.n_name = 'INDONESIA' and n2.n_name = 'JAPAN')
    )
    and lineitem.l_shipdateG between 182.6 and 207
;

--b9
select
    sum(lineitem.l_extendedprice*(1-lineitem.l_discount)
           - partsupp.ps_supplycost*lineitem.l_quantity)
from
    part,
    supplier,
    lineitem,
    partsupp,
    orders,
    nation
where
    supplier.s_suppkey = lineitem.l_suppkey
    and partsupp.ps_suppkey = lineitem.l_suppkey
    and partsupp.ps_partkey = lineitem.l_partkey
    and part.p_partkey = lineitem.l_partkey
    and orders.o_orderkey = lineitem.l_orderkey
    and supplier.s_nationkey = nation.n_nationkey
    and part.p_name like '%violet%'
    and nation.n_name = 'UNITED KINGDOM'
;

--b10
select
    sum(lineitem.l_extendedprice * (1 - lineitem.l_discount))
from
    customer,
    orders,
    lineitem,
    nation
where
    customer.c_custkey = orders.o_custkey
    and lineitem.l_orderkey = orders.o_orderkey
    and orders.o_orderdateG >= 183.3
    and orders.o_orderdateG <  183.3 + 3
    and lineitem.l_returnflag = 'R'
    and customer.c_nationkey = nation.n_nationkey
    and customer.c_custkey = '64'
    and nation.n_name = 'CANADA'
;

--b12_1
select
    count(*)
from
    orders,
    lineitem
where
    orders.o_orderkey = lineitem.l_orderkey
    and (orders.o_orderpriority <> '1-URGENT'
        or orders.o_orderpriority <> '2-HIGH')
    and lineitem.l_shipmode in ('TRUCK', 'SHIP')
    and lineitem.l_commitdateG < lineitem.l_receiptdateG
    and lineitem.l_shipdateG < lineitem.l_commitdateG
    and lineitem.l_receiptdateG >= 183.3
    and lineitem.l_receiptdateG < 183.3 + 12
;

--b12_2
select
    count(*)
from
    orders,
    lineitem
where
    orders.o_orderkey = lineitem.l_orderkey
    and (orders.o_orderpriority = '1-URGENT'
        or orders.o_orderpriority = '2-HIGH')
    and lineitem.l_shipmode in ('TRUCK', 'SHIP')
    and lineitem.l_commitdateG < lineitem.l_receiptdateG
    and lineitem.l_shipdateG < lineitem.l_commitdateG
    and lineitem.l_receiptdateG >= 183.3
    and lineitem.l_receiptdateG < 183.3 + 12
;

--b16
select count(partsupp.ps_suppkey)
from
    partsupp,
    part,
    supplier
where
part.p_partkey = partsupp.ps_partkey
and partsupp.ps_suppkey = supplier.s_suppkey
and part.p_brand <> 'Brand#34'
and not (part.p_type like '%COPPER%')
and part.p_size in (5, 10, 15, 20, 25, 30, 35, 40)
and not (supplier.s_comment like '%Customer%Complaints%')
and part.p_brand = 'Brand#14'
and part.p_type  = 'LARGE ANODIZED TIN'
;

--b17
select
    sum(lineitem.l_extendedprice * 0.142857)
from
    lineitem,
    part
where
    part.p_partkey = lineitem.l_partkey
    and part.p_brand = 'Brand#34'
    and part.p_container = 'JUMBO PKG'
    and lineitem.l_quantity < 0.2 * 32
;

--b19
select
    sum(lineitem.l_extendedprice*(1-lineitem.l_discount))
from
    lineitem,
    part
where
    part.p_partkey = lineitem.l_partkey
    and lineitem.l_shipmode in ('AIR', 'AIR REG')
    and lineitem.l_shipinstruct = 'DELIVER IN PERSON'
    and part.p_size >= 1
    and
    ((
    part.p_brand = 'Brand#34'
    and part.p_container in ('SM CASE', 'SM BOX',
                             'SM PACK', 'SM PKG')
    and lineitem.l_quantity >= 35
    and lineitem.l_quantity <= 35 + 10
    and part.p_size <= 5
    )
    or
    (
    part.p_brand = 'Brand#22'
    and part.p_container in ('MED BAG', 'MED BOX',
                             'MED PKG', 'MED PACK')
    and lineitem.l_quantity >= 12
    and lineitem.l_quantity <= 12 + 10
    and part.p_size  <= 10
    )
    or
    (
    part.p_brand = 'Brand#14'
    and part.p_container in ('LG CASE', 'LG BOX',
                             'LG PACK', 'LG PKG')
    and lineitem.l_quantity >= 90
    and lineitem.l_quantity <= 90 + 10
    and part.p_size  <= 15
));
\end{verbatim}

\subsection{Examples of analyser output}
We give some examples of shorter queries that have been output by the analyser.

\subsubsection{Query $\mathtt{b1\_1}$}

\paragraph{Modified query.}
\begin{spverbatim}
SELECT sum((lineitem.l_quantity * (exp((0.1 * (200.3 + ((-1.0) * lineitem.l_shipdateG)))) / (exp((0.1 * (200.3 + ((-1.0) * lineitem.l_shipdateG)))) + 1.0)))) FROM lineitem WHERE (lineitem.l_linestatus = 'F') AND (lineitem.l_returnflag = 'R');
\end{spverbatim}

\paragraph{Sensitivity query.}
\begin{spverbatim}
SELECT max(abs(sdsg)) FROM
(SELECT sum(abs(greatest(abs((exp((0.1 * (200.3 + ((-1.0) * lineitem.l_shipdateG)))) / (exp((0.1 * (200.3 + ((-1.0) * lineitem.l_shipdateG)))) + 1.0))), abs(case when ((((0.1 * exp((0.1 * (200.3 + ((-1.0) * lineitem.l_shipdateG))))) / ((exp((0.1 * (200.3 + ((-1.0) * lineitem.l_shipdateG)))) + 1.0) ^ 2.0)) * 0.03) = 0.0) then 0.0 else ((((0.1 * exp((0.1 * (200.3 + ((-1.0) * lineitem.l_shipdateG))))) / ((exp((0.1 * (200.3 + ((-1.0) * lineitem.l_shipdateG)))) + 1.0) ^ 2.0)) * 0.03) * case when (abs(lineitem.l_quantity) >= 10.0) then abs(lineitem.l_quantity) else (exp(((0.1 * abs(lineitem.l_quantity)) - 1.0)) / 0.1) end) end)))) AS sdsg FROM lineitem, lineitem_sensRows WHERE
((lineitem.l_linestatus = 'F') AND(lineitem.l_returnflag = 'R') AND lineitem_sensRows.ID = lineitem.ID) AND lineitem_sensRows.sensitive GROUP BY lineitem_sensRows.ID) AS sub;
\end{spverbatim}

\subsubsection{Query $\mathtt{b1\_5}$}

\paragraph{Modified query.}
\begin{spverbatim}
SELECT sum(abs((exp((0.1 * (200.3 + ((-1.0) * 
lineitem.l_shipdateG)))) / (exp((0.1 * (200.3 + ((-1.0) * lineitem.l_shipdateG)))) + 1.0)))) FROM lineitem WHERE (lineitem.l_linestatus = 'F') AND (lineitem.l_returnflag = 'R');
\end{spverbatim}

\paragraph{Sensitivity query.}
\begin{spverbatim} SELECT max(sdsg) FROM (SELECT sum(abs((((0.1 * exp((0.1 * (200.3 + ((-1.0) * lineitem.l_shipdateG))))) / ((exp((0.1 * (200.3 + ((-1.0) * lineitem.l_shipdateG)))) + 1.0) ^ 2.0)) * 0.03))) AS sdsg FROM lineitem, lineitem_sensRows WHERE ((lineitem.l_linestatus = 'F')
AND (lineitem.l_returnflag = 'R') AND lineitem_sensRows.ID = lineitem.ID) AND lineitem_sensRows.sensitive GROUP BY lineitem_sensRows.ID) AS sub;
\end{spverbatim}

\subsubsection{Query $\mathtt{b16}$}

\paragraph{Modified query.}
\begin{spverbatim} SELECT sum(abs(((((((((2.0 / (exp(((-0.1) * (part.p_size - 10.0))) + exp((0.1 * (part.p_size - 10.0))))) + (2.0 / (exp(((-0.1) * (part.p_size - 15.0))) + exp((0.1 * (part.p_size - 15.0)))))) + (2.0 / (exp(((-0.1) * (part.p_size - 20.0))) + exp((0.1 * (part.p_size - 20.0)))))) + (2.0 / (exp(((-0.1) * (part.p_size - 25.0))) + exp((0.1 * (part.p_size - 25.0)))))) + (2.0 / (exp(((-0.1) * (part.p_size - 30.0))) + exp((0.1 * (part.p_size - 30.0)))))) + (2.0 / (exp(((-0.1) * (part.p_size - 35.0))) + exp((0.1 * (part.p_size - 35.0)))))) + (2.0 / (exp(((-0.1) * (part.p_size - 40.0))) + exp((0.1 * (part.p_size - 40.0)))))) + (2.0 / (exp(((-0.1) * (part.p_size - 5.0))) + exp((0.1 * (part.p_size - 5.0)))))))) FROM part, partsupp, supplier WHERE not((supplier.s_comment LIKE '%Customer%Complaints%')) AND not((part.p_type LIKE '%COPPER%')) AND
not((part.p_brand = 'Brand#34')) AND (part.p_partkey = partsupp.ps_partkey) AND
(partsupp.ps_suppkey = supplier.s_suppkey);
\end{spverbatim}

\paragraph{Sensitivity query.}
\begin{spverbatim} SELECT max(sdsg) FROM (SELECT sum(abs((((((((((0.1 * (2.0 / (exp(((-0.1) * (part.p_size - 10.0))) + exp((0.1 * (part.p_size - 10.0)))))) * 8.0) + ((0.1 * (2.0 / (exp(((-0.1) * (part.p_size - 15.0))) + exp((0.1 * (part.p_size - 15.0)))))) * 8.0)) + ((0.1 * (2.0 / (exp(((-0.1) * (part.p_size - 20.0))) + exp((0.1 * (part.p_size - 20.0)))))) * 8.0)) + ((0.1 * (2.0 / (exp(((-0.1) * (part.p_size - 25.0))) + exp((0.1 * (part.p_size - 25.0)))))) * 8.0)) + ((0.1 * (2.0 / (exp(((-0.1) * (part.p_size - 30.0))) + exp((0.1 * (part.p_size - 30.0)))))) * 8.0)) + ((0.1 * (2.0 / (exp(((-0.1) * (part.p_size - 35.0))) + exp((0.1 * (part.p_size - 35.0)))))) * 8.0)) + ((0.1 * (2.0 / (exp(((-0.1) * (part.p_size - 40.0))) + exp((0.1 * (part.p_size - 40.0)))))) * 8.0)) + ((0.1 * (2.0 / (exp(((-0.1) * (part.p_size - 5.0))) + exp((0.1 * (part.p_size - 5.0)))))) * 8.0)))) AS sdsg FROM part, partsupp, supplier, part_sensRows WHERE (not((supplier.s_comment LIKE '%Customer%Complaints%')) AND not((part.p_type LIKE '%COPPER%')) AND not((part.p_brand = 'Brand#34')) AND (part.p_partkey = partsupp.ps_partkey) AND (partsupp.ps_suppkey = supplier.s_suppkey) AND part_sensRows.ID = part.ID) AND part_sensRows.sensitive GROUP BY part_sensRows.ID) AS sub;
\end{spverbatim}







\end{document}